\documentclass{article}

\usepackage{hyperref}
\usepackage{xcolor}
\usepackage{etoolbox}
\hypersetup{hidelinks}

\usepackage{amsthm}
\usepackage{mathtools}
\usepackage{bm}
\usepackage{newtxtext}
\usepackage[subscriptcorrection]{newtxmath}
\usepackage{bbm}

\usepackage{booktabs}

\usepackage[numbers,sort&compress]{natbib}

\usepackage{enumitem}

\usepackage{aliascnt}
\usepackage{cleveref}

\usepackage{needspace}

\newtheorem{theorem}{Theorem}[section]
\newaliascnt{definition}{theorem}
\newtheorem{definition}[definition]{Definition}
\aliascntresetthe{definition}
\newaliascnt{proposition}{theorem}
\newtheorem{proposition}[proposition]{Proposition}
\aliascntresetthe{proposition}
\newaliascnt{lemma}{theorem}
\newtheorem{lemma}[lemma]{Lemma}
\aliascntresetthe{lemma}
\newaliascnt{corollary}{theorem}
\newtheorem{corollary}[corollary]{Corollary}
\aliascntresetthe{corollary}
\newaliascnt{remark}{theorem}
\newtheorem{remark}[remark]{Remark}
\aliascntresetthe{remark}

\AddToHook{env/theorem/begin}{\crefalias{section}{theorem}}
\AddToHook{env/definition/begin}{\crefalias{section}{definition}}
\AddToHook{env/proposition/begin}{\crefalias{section}{proposition}}
\AddToHook{env/lemma/begin}{\crefalias{section}{lemma}}
\AddToHook{env/corollary/begin}{\crefalias{section}{corollary}}
\AddToHook{env/remark/begin}{\crefalias{section}{remark}}

\crefname{theorem}{Theorem}{Theorems}
\crefname{definition}{Definition}{Definitions}
\crefname{proposition}{Proposition}{Propositions}
\crefname{lemma}{Lemma}{Lemmas}
\crefname{corollary}{Corollary}{Corollaries}
\crefname{remark}{Remark}{Remarks}
\Crefname{theorem}{Theorem}{Theorems}
\Crefname{definition}{Definition}{Definitions}
\Crefname{proposition}{Proposition}{Propositions}
\Crefname{lemma}{Lemma}{Lemmas}
\Crefname{corollary}{Corollary}{Corollaries}
\Crefname{remark}{Remark}{Remarks}

\newcommand{\R}{\mathbb{R}}
\newcommand{\F}{\mathcal{F}}
\newcommand{\E}{\mathbb{E}}
\newcommand{\Prob}{\mathbb{P}}
\newcommand{\diff}{\mathrm{d}}
\newcommand{\norm}[1]{\left\lVert#1\right\rVert}

\title{Mean-field equilibrium of heterogeneous agents under market impact}
\author{Joseph Leclère\thanks{Ceremade, Université Paris Dauphine-PSL,
\texttt{leclere@ceremade.dauphine.fr}}
\and Mathieu Rosenbaum\thanks{Ceremade, Université Paris Dauphine-PSL,
\texttt{mathieu.rosenbaum@dauphine.psl.eu}}}

\begin{document}
\raggedbottom

\maketitle

\begin{abstract}
Although market participants generally have access to a common information
set, they make decisions based on forecasts formed over heterogeneous horizons.
Because market impact depends on aggregate positions rather than trader
identities, these decisions feed back into prices through their collective
effect. We introduce a linear mean-field model of this interaction. The
observed price is decomposed into a martingale component, a common predictable
signal represented by a Volterra process, and the market impact generated by
aggregate positions. Agents take positions according to conditional forecasts
of future signal increments and a fraction of anticipated aggregate impact over
their respective horizons. Within a Gaussian--Volterra framework, we
characterize equilibrium through a linear fixed-point equation for aggregate
positions and establish existence and uniqueness under explicit conditions.
At equilibrium, we identify a balance condition that cancels the direct
transmission of the common signal to the observed price. We then study the limit
in which agents fully account for market impact. Along a suitably scaled family of
equilibria satisfying explicit conditions, the contributions of the predictable signal and the market impact
cancel in the limit, and the observed price converges to its martingale
component. For fractional-type signals and Gamma-distributed horizons, we
further derive local Hölder bounds and identify the horizon distributions for
which %the full price has local regularity compatible with that of Brownian motion.
the observed price has local regularity compatible with that of Brownian motion.
\end{abstract}

\section{Introduction}\label{sec:introduction}

Forecasts used in financial markets are naturally heterogeneous in their
horizons. Market makers, execution desks, and directional traders do not act on
the same forecast window, even when their decisions draw on common information.
Their decisions aggregate into persistent signed order flow, which in turn moves
prices. This creates a consistency problem: forecasts made at different
horizons must aggregate through market impact in a way that reconciles
persistent order flow with weakly predictable price changes and a local
regularity close to that of a Brownian motion.\\

In our setting, agents respond to a common forecast signal, while a residual
martingale represents the part of observed price changes that is not
predictable. We study this consistency in two related senses: aggregate
trading should attenuate the predictable signal transmitted to prices, and the
resulting paths should recover a regularity compatible with that of a Brownian
motion.
Market impact links agents' positions to the observed price. We follow the
propagator approach to market impact
\cite{BouchaudGefenPottersWyart2004,Gatheral2010,Jaisson2014MarketImpact,
JusselinRosenbaum2020}, in which a kernel $G$ governs how the price effect induced by a position change evolves over time.
For an aggregate position $\Pi$,
initialized at $\Pi_0=0$, the resulting impact is written
\begin{equation*}
  \int_0^t G(t-s)\,\diff\Pi_s.
\end{equation*}
When the Stieltjes integral and integration by parts are justified,
\begin{equation*}
  \int_0^t G(t-s)\,\diff\Pi_s
  =
  G(0)\Pi_t
  +\int_0^t G'(t-s)\Pi_s\,\diff s.
\end{equation*}
This decomposition naturally separates the \emph{instantaneous impact}
$G(0)\Pi_t$ from the \emph{transient impact} $(G'\ast\Pi)_t$ generated by past
positions. The position-based expression on the right is the impact
specification used below.\\

Two distinct notions of market endogeneity are relevant. On the order-flow side,
Hawkes processes are now commonly used to model order arrivals, with statistical
self-excitation capturing their clustering and strong serial dependence
\cite{BouchaudFarmerLillo2009,BacryMastromatteoMuzy2015}. For a nonnegative
excitation kernel $h$, the
branching ratio $\lVert h\rVert_1$ is the expected number of direct offspring
generated by one event. Stability requires $\lVert h\rVert_1<1$, and values
close to one describe highly persistent, near-critical activity
\cite{JaissonRosenbaum2016,GatheralJaissonRosenbaum2018}. On the forecast side,
agents trade on forecasts of both the common signal and the collective impact
generated by all participants, with the latter creating the endogenous
feedback. The coefficient $\alpha \in [0,1]$ represents the fraction of
anticipated collective impact that enters their forecasts.\\

The model has three price components. A Brownian motion $B$ drives the common
forecastable signal
\begin{equation*}
  P_t=\int_0^t K(t-s)\,\diff B_s.
\end{equation*}
A prototypical fractional-type specification, studied specifically in
\Cref{sec:tail_regularity}, is $K(t)=t^{H-1/2}$ with $H<1/2$. It provides a
reduced-form description of rough predictability at the mid-frequency scale,
with the decay of $K$ representing the fading influence of past shocks.
Agents are indexed by their forecast horizon $\lambda>0$, and the probability
measure $\nu$ describes the distribution of these horizons across the agent
population. For a fixed $\alpha$, their individual positions
$\pi_t^{\lambda,\alpha}$ aggregate across horizons as
\begin{equation*}
  \Pi_t^\alpha
  =
  \int_{\R_+^*}\pi_t^{\lambda,\alpha}\,\nu(\diff\lambda).
\end{equation*}
This aggregate position generates the impact
\begin{equation*}
  I_t^\alpha
  =G(0)\Pi_t^\alpha
  +\int_0^tG'(t-s)\Pi_s^\alpha\,\diff s,
\end{equation*}
and the observed price is
\begin{equation*}
  S_t^\alpha=M_t+P_t+I_t^\alpha.
\end{equation*}
Here $M$ is a residual martingale carrying price changes that are observed when
they occur, but whose increments have zero conditional expectation.
Agents' positions with horizon $\lambda$ are determined by the forecast rule
\begin{equation*}
  \pi_t^{\lambda,\alpha}
  =
  \mathbb E\!\left[
    P_{t+\lambda}-P_t
    +\alpha\bigl(I_{t+\lambda}^\alpha-I_t^\alpha\bigr)
    \mid\F_t
  \right].
\end{equation*}
This specification of positions provides a simplified representation in which each agent's position is identified with the strength of their forecast.
The martingale term therefore disappears from the forecast equation without
disappearing from the observed price. At $\alpha=0$, positions use only the
common signal forecast, while at $\alpha=1$, they account for the whole anticipated
aggregate-impact increment. Values $\alpha<1$ allow for
incomplete attribution of collective impact, limited information, model
uncertainty, or different mandates. For tractability, we assume that all agents
share the same constant coefficient $\alpha$, and a natural extension would allow
it to depend on the horizon through $\alpha(\lambda)$.\\

The analysis proceeds in three steps.
\begin{itemize}
  \item We first characterize equilibria within the class of
  Gaussian--Volterra processes. The equilibrium condition is a fixed-point problem for the aggregate position, and we seek solutions represented by a Volterra kernel $\phi_\alpha$ as
  \begin{equation*}
    \Pi_t^\alpha
    =
    \int_0^t\phi_\alpha(t-s)\,\diff B_s.
  \end{equation*}
  Under this representation, imposing population aggregation and the forecast rule for $\nu$-almost every $\lambda$ yields a single linear equation for $\phi_\alpha$. In general, this equation
  can have multiple solutions. Under explicit conditions, however, it has a
  unique solution in a natural Banach space of decaying functions.

  \item We then study the signal-driven component of the equilibrium price,
  $X^\alpha:=P+I^\alpha$, by decomposing its Volterra kernel as
  \begin{equation*}
    \mathcal K_\alpha
    =
    \frac{1-(1-\alpha)G(0)}{1+\alpha G(0)}K
    +\mathcal R_\alpha,
  \end{equation*}
  where $\mathcal R_\alpha$ collects terms obtained from $K$ and $\phi_\alpha$
  through horizon averaging and impact convolution. Under the Gamma-regular
  hypotheses of \Cref{sec:tail_regularity}, these transformed contributions are
  less singular near the origin than $K$.
  The local regularity of the Volterra signal $P$ is governed by the behaviour of $K$ near
  the origin, as illustrated by the rough kernel $K(t)=t^{H-1/2}$. For this rough kernel, the terms in $\mathcal{R}_\alpha$ obtained by averaging shifted values
  of $K$ and $\phi_\alpha$ across the horizon distribution are less singular at the origin, while the impact-convolution contributions
  are smoothed by integration over time.
  The first term above, by contrast, transmits it directly. Setting its coefficient to zero
  gives the rough-signal cancellation condition
  \begin{equation*}
    G(0)(1-\alpha)=1,
  \end{equation*}
  which removes the direct rough-signal term. In the rough-signal setting, cancelling this direct $K$-term gives the signal-driven component $P + I^\alpha$ greater local Hölder regularity than $P$. 
  As $\alpha$ increases, agents incorporate a larger fraction of anticipated
  collective impact into their forecasts. In
  \Cref{subsec:hawkes_fixed_parameter}, we show that this degree of feedback is
  closely related to a standard measure of endogeneity in Hawkes processes.
  Since Hawkes processes are often studied close to their critical boundary, this link provides another motivation for studying the boundary
  $\alpha\uparrow1$.

  \item In \Cref{sec:scaling_limit}, we study the boundary regime
  $\alpha\uparrow1$ along the rough-signal cancellation condition, using a fixed
  normalized impact kernel. Under suitable conditions, the rescaled aggregate-position kernels converge and the signal-driven
  price component vanishes, so the observed price tends to the residual
  martingale $M$. For a fractional-type signal and
  a Gamma distribution of agents' horizons, we also show in \Cref{sec:tail_regularity} how the concentration of the horizon distribution $\nu$ near zero
  determines the regularity gained after the direct rough term has disappeared.
\end{itemize}

The rest of the paper is organized as follows. \Cref{sec:model_setup} defines
price formation, heterogeneous forecast horizons, and the equilibrium notion.
\Cref{sec:meanfield} derives the Gaussian--Volterra characterization, gives
conditions ensuring existence and uniqueness of the equilibrium kernel within
a natural class of decaying functions, and isolates the direct signal
coefficient.
\Cref{sec:scaling_limit} studies the limit of a scaled family of equilibria as
$\alpha\uparrow1$ and agents approach full impact accounting.
\Cref{sec:tail_regularity} specializes the model to rough signals and
Gamma horizons. The appendices contain the existence and uniqueness, scaling,
and local regularity arguments.

\section{Price formation and heterogeneous forecast horizons}
\label{sec:model_setup}

\subsection{Information available and price components}

Let $(\Omega,\F,(\F_t)_{t\geq0},\Prob)$ be a filtered probability space
carrying a one-dimensional $(\F_t)$-Brownian motion $B$ and a real-valued
continuous square-integrable $(\F_t)$-martingale $M$ with locally
$\rho$-H\"older paths for every $\rho<1/2$. The filtration
represents the information available to agents: it contains the forecasting
features summarized by the one-dimensional process $B$, aggregate positions and prices, and the realized
path of $M$.

The observed price has three components. The common forecastable component is
the Volterra process
\begin{equation*}
    P_t:=\int_0^t K(t-s)\,\diff B_s,
    \qquad t\geq0.
\end{equation*}
For each $\alpha\in[0,1]$, the signal-driven component and the observed price are, respectively,
\begin{equation*}
    X_t^\alpha:= P_{t} + I_t^\alpha,
    \qquad
    S_t^\alpha:=M_t+X_t^\alpha
    =M_t+P_t+I_t^\alpha.
\end{equation*}
Here $I^\alpha$ denotes the price impact generated by agents' collective
trading activity.
The component $P$ translates the forecasting features into price units and is
the common signal on which agents act, rather than a separately traded
fundamental price. The process $M$ represents the remaining price changes.
They are observed ex post, since $M=S^\alpha-P-I^\alpha$, but
\begin{equation*}
    \E[M_{t+\lambda}-M_t\mid\F_t]=0,
    \qquad t\geq0,\ \lambda>0,
\end{equation*}
so they do not contribute to directional forecasts.

\subsection{Heterogeneous horizons and impact accounting}

Agents form a continuum indexed by their forecast and decision horizon
$\lambda\in\R_+^*$. The probability measure $\nu$ on $\R_+^*$ is the
population distribution of these horizons. For a given $\alpha$, agent
$\lambda$ holds a progressively measurable position
$(\pi_t^{\lambda,\alpha})_{t\geq0}$. Whenever the family is jointly measurable
and integrable with respect to $\nu$, the aggregate position is defined by
\begin{equation*}
    \Pi_t^\alpha
    :=\int_{\R_+^*}\pi_t^{\lambda,\alpha}\,\nu(\diff\lambda).
\end{equation*}
Positions are initialized at zero, hence $\Pi_0^\alpha=0$.
Each agent is atomistic, and impact is generated collectively by
$\Pi^\alpha$. For an impact kernel $G$ satisfying the assumptions below, we
take the position-based impact equation
\begin{equation*}
    I_t^\alpha
    :=G(0)\Pi_t^\alpha
      +\int_0^t G'(t-s)\Pi_s^\alpha\,\diff s.
\end{equation*}
This form remains meaningful for progressively measurable positions that are
locally integrable, and we use it throughout the paper.\\

We identify positions with forecasts, thereby normalizing the common
forecast-to-position scale to one; any other common scale can be absorbed into
the impact kernel. Heterogeneity then enters only through the horizons.
Accordingly, for a given $\alpha$, an equilibrium is a family of positions satisfying, for $\nu$-almost
every $\lambda>0$,
\begin{equation}
\label{eq:mean_field_eq}
\begin{aligned}
    \pi_t^{\lambda,\alpha}
    &=\E\!\left[
        P_{t+\lambda}-P_t
        +\alpha\bigl(I_{t+\lambda}^\alpha-I_t^\alpha\bigr)
        \,\middle|\,\F_t
    \right],
    \qquad t\geq0.
\end{aligned}
\end{equation}
At $\alpha=1$, the martingale property of $M$ gives
\begin{equation*}
    \pi_t^{\lambda,1}
    =\E[S_{t+\lambda}^1-S_t^1\mid\F_t].
\end{equation*}
Thus full impact accounting means forecasting the future price increment but
does not by itself eliminate the common component $P$.

\subsection{Assumptions and notations}

Throughout the paper, $G$ and $K$ are assumed to satisfy
\begin{equation*}
    G\in\mathcal C^2([0,\infty)),
    \qquad G'\in L^1(\R_+),
    \qquad G'(t)\longrightarrow0
    \quad\text{as }t\to\infty,
    \qquad G(0)\geq0,
\end{equation*}
and
\begin{equation*}
    K\in L^2_{\mathrm{loc}}(\R_+),
    \qquad K(t)\longrightarrow0
    \quad\text{as }t\to\infty.
\end{equation*}
The condition on $K$ means that the loading of an old signal shock eventually fades.
For a measurable function $f$ for which the expressions below are defined, we
use the following notation for $\tau>0$:
\begin{align*}
    \Delta_\lambda f(\tau)
    &:=f(\tau+\lambda)-f(\tau),
    \\
    \mathcal S_\nu f(\tau)
    &:=\int_{(0,\infty)}f(\tau+\lambda)\,\nu(\diff\lambda),
    \\
    \mathcal D_\nu f(\tau)
    &:=\int_{(0,\infty)}\Delta_\lambda f(\tau)\,\nu(\diff\lambda)
      =\mathcal S_\nu f(\tau)-f(\tau),
    \\
    (g\ast f)(\tau)
    &:=\int_0^\tau g(\tau-s)f(s)\,\diff s,
    \\
    \mathcal Hf
    &:=G'\ast f,
    \\
    \mathcal If
    &:=G(0)f+\mathcal Hf.
\end{align*}
The operator $\mathcal S_\nu$ averages a function across forecast horizons,
$\mathcal D_\nu$ averages its horizon increments, and $\mathcal I$ maps an
aggregate position to its impact. Moreover, for any operator $\mathcal T$, we write $\mathcal T^0 := \mathrm{Id}$ and, for $n \geq 1$, write $\mathcal T^n$ for its $n$-fold iterate whenever the composition is well-defined.
For a bounded linear operator $\mathcal T:\mathcal X\to\mathcal Y$, we
denote its operator norm by
$\norm{\mathcal T}_{\mathcal X\to\mathcal Y}$.\\

Regarding path regularity, we say that a stochastic process has critical H\"older
exponent $\gamma>0$ if, almost surely, its paths are locally $\rho$-H\"older
for every $\rho \in (0,\gamma)$ and are not locally $\rho$-H\"older for any
$\rho>\gamma$. \\

Finally, a function $f$ is called
$\nu$-admissible if, for every $T>0$,
\begin{equation*}
    \int_{(0,\infty)}
        \norm{\Delta_\lambda f}_{L^2(0,T)}
    \,\nu(\diff\lambda)<\infty.
\end{equation*}

\section{Gaussian--Volterra equilibria and price consistency}
\label{sec:meanfield}

\subsection{Characterization of Gaussian--Volterra equilibria}
\label{subsec:derivation}

Throughout this subsection, $\alpha\in[0,1]$ is fixed.
Equation~\eqref{eq:mean_field_eq} is a fixed-point
condition for the aggregate position. Starting Picard iteration from the zero
position preserves the class of Gaussian--Volterra processes, which motivates
the representation
\begin{equation*}
\Pi_t^\alpha=\int_0^t\phi_\alpha(t-s)\,\diff B_s,
\qquad
\phi_\alpha\in L^2_{\mathrm{loc}}(\mathbb R_+).
\end{equation*}
The next result gives a necessary and sufficient condition within this class.
We require $\nu$-admissibility as a technical condition allowing stochastic
Fubini to be applied when aggregating positions across horizons.

\begin{proposition}[Equilibrium kernel equation]
\label{prop:meanfield_eq}
Let $\phi_\alpha\in L^2_{\mathrm{loc}}(\mathbb R_+)$, and suppose that $K$,
$\phi_\alpha$, and $\mathcal I\phi_\alpha$ are $\nu$-admissible. Then
\begin{equation*}
\Pi_t^\alpha=\int_0^t\phi_\alpha(t-s)\,\diff B_s
\end{equation*}
is the aggregate position of an equilibrium if and only if
\begin{equation}
\label{eq:meanfield_alpha}
(\mathrm{Id}-\alpha\mathcal D_\nu\mathcal I)\phi_\alpha
=\mathcal D_\nu K.
\end{equation}
In that case, for $\nu$-almost every $\lambda>0$, the individual
positions are
\begin{equation*}
\pi_t^{\lambda,\alpha}
=
\bigl(\Delta_\lambda(K+\alpha\mathcal I\phi_\alpha)\ast \diff B\bigr)_t,
\end{equation*}
and their $\nu$-aggregate is $\Pi^\alpha$.
\end{proposition}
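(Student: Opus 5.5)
We compute the conditional expectation in \eqref{eq:mean_field_eq} explicitly for a Gaussian--Volterra candidate, then aggregate over horizons and identify kernels.

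\textbf{Step 1: impact and price as Volterra integrals.} Assume $\Pi^\alpha=\phi_\alpha\ast\diff B$. By the stochastic Fubini theorem (justified since $G'\in L^1(\R_+)$ is bounded on compacts, being continuous, and $\phi_\alpha\in L^2_{\mathrm{loc}}$),
\begin{equation*}
\int_0^tG'(t-s)\int_0^s\phi_\alpha(s-u)\,\diff B_u\,\diff s=\int_0^t(\mathcal H\phi_\alpha)(t-u)\,\diff B_u .
\end{equation*}
Hence $I^\alpha_t=((\mathcal I\phi_\alpha)\ast\diff B)_t$, and $P+\alpha I^\alpha=(K+\alpha\mathcal I\phi_\alpha)\ast\diff B$. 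Write $F:=K+\alpha\mathcal I\phi_\alpha\in L^2_{\mathrm{loc}}$.

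\textbf{Step 2: conditional expectation.} For a Volterra integral $\int_0^{t+\lambda}F(t+\lambda-s)\,\diff B_s$, the part over $(t,t+\lambda]$ is a martingale increment of an $L^2$ Itô integral with deterministic integrand, so its $\F_t$-conditional expectation vanishes, since $B$ is an $(\F_t)$-Brownian motion. Therefore
\begin{equation*}
\E[P_{t+\lambda}-P_t+\alpha(I^\alpha_{t+\lambda}-I^\alpha_t)\mid\F_t]=\int_0^t\Delta_\lambda F(t-s)\,\diff B_s .
\end{equation*}
So \eqref{eq:mean_field_eq} holds for $\nu$-a.e.\ $\lambda$ if and only if $\pi^{\lambda,\alpha}=(\Delta_\lambda F)\ast\diff B$, which gives the asserted form of the individual positions. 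Note that $\Delta_\lambda F\in L^2_{\mathrm{loc}}$ because $F$ is.

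\textbf{Step 3: aggregation, the main obstacle.} We must show that
\begin{equation*}
\int\bigl((\Delta_\lambda F)\ast\diff B\bigr)_t\,\nu(\diff\lambda)=\bigl((\mathcal D_\nu F)\ast\diff B\bigr)_t .
\end{equation*}
This uses the stochastic Fubini theorem over the parameter $\lambda$. The sufficient condition is
\begin{equation*}
\int\norm{\Delta_\lambda F}_{L^2(0,T)}\,\nu(\diff\lambda)<\infty ,
\end{equation*}
which follows from $\nu$-admissibility of $K$ and of $\mathcal I\phi_\alpha$ by the triangle inequality, since $\Delta_\lambda$ is linear. Joint measurability of $(\lambda,s)\mapsto\Delta_\lambda F(t-s)$ holds by measurability of translations. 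This condition also ensures $\mathcal D_\nu F\in L^2(0,T)$ (Minkowski's integral inequality) and that the family $\pi^{\lambda,\alpha}$ is $\nu$-integrable in $L^1(\Prob)$, so $\Pi^\alpha$ is well defined. I expect this step to require the most care, in particular handling versions of the processes so that the identity holds for all $t$ a.s.; continuity in $L^2(\Prob)$ in $t$, plus choosing a measurable version, should suffice.

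\textbf{Step 4: identification.} By Step 3, equilibrium is equivalent to
\begin{equation*}
(\phi_\alpha\ast\diff B)_t=((\mathcal D_\nu F)\ast\diff B)_t\quad\text{a.s., for all }t .
\end{equation*}
By the Itô isometry this holds if and only if
\begin{equation*}
\int_0^t\bigl|\phi_\alpha(t-s)-\mathcal D_\nu F(t-s)\bigr|^2\,\diff s=0\quad\text{for all }t ,
\end{equation*}
that is, if and only if $\phi_\alpha=\mathcal D_\nu K+\alpha\mathcal D_\nu\mathcal I\phi_\alpha$ a.e. This is exactly \eqref{eq:meanfield_alpha}.

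For the converse direction, given a solution of \eqref{eq:meanfield_alpha}, define $\pi^{\lambda,\alpha}$ by the formula in Step 2. Steps 2--3 then show that the $\nu$-aggregate of these positions equals $\Pi^\alpha$, and that each $\pi^{\lambda,\alpha}$ satisfies \eqref{eq:mean_field_eq} with the impact generated by $\Pi^\alpha$. Progressive measurability follows from the Volterra representation, after choosing a progressive version.
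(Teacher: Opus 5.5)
Your proposal is correct and follows the paper's proof step for step: write $I^\alpha$ as the Volterra integral of $\mathcal I\phi_\alpha$, compute the conditional expectation as $\bigl(\Delta_\lambda(K+\alpha\mathcal I\phi_\alpha)\ast\diff B\bigr)_t$, aggregate over $\lambda$ by stochastic Fubini using $\nu$-admissibility, identify kernels through the It\^o isometry, and get the converse by reversing the computation. Your extra remarks fill in details the paper leaves implicit, namely the justification of both Fubini steps, the $\nu$-integrability of the family, and the choice of jointly measurable versions; you also correctly note that only the admissibility of $K$ and $\mathcal I\phi_\alpha$ is actually used.
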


\begin{proof}
For a Gaussian--Volterra aggregate position,
\begin{equation*}
I_t^\alpha
=G(0)\Pi_t^\alpha+(G'\ast\Pi^\alpha)_t
=\bigl(\mathcal I\phi_\alpha\ast \diff B\bigr)_t.
\end{equation*}
The horizon-$\lambda$ position is
\begin{align*}
\pi_t^{\lambda,\alpha}
&=
\E\!\left[
P_{t+\lambda}-P_t
+\alpha(I_{t+\lambda}^\alpha-I_t^\alpha)
\mid\F_t
\right]
\\
&=
\bigl(\Delta_\lambda(K+\alpha\mathcal I\phi_\alpha)\ast \diff B\bigr)_t.
\end{align*}
By $\nu$-admissibility, stochastic Fubini yields
\begin{equation*}
\int_{(0,\infty)}\pi_t^{\lambda,\alpha}\,\nu(\diff\lambda)
=
\bigl(\mathcal D_\nu(K+\alpha\mathcal I\phi_\alpha)\ast \diff B\bigr)_t.
\end{equation*}
Identifying Volterra kernels in $L^2_{\mathrm{loc}}$ by It\^o isometry gives
\begin{equation*}
\phi_\alpha
=\mathcal D_\nu K+\alpha\mathcal D_\nu\mathcal I\phi_\alpha,
\end{equation*}
which is equivalent to \eqref{eq:meanfield_alpha}. Conversely, if
\eqref{eq:meanfield_alpha} holds, the same calculation in reverse shows that
the displayed individual positions satisfy \eqref{eq:mean_field_eq} for almost
every horizon with respect to $\nu$ and aggregate to $\Pi^\alpha$.
\end{proof}

Using
\begin{equation*}
\mathcal D_\nu=\mathcal S_\nu-\mathrm{Id},
\qquad
\mathcal I=G(0)\mathrm{Id}+\mathcal H,
\end{equation*}
the kernel equation becomes
\begin{equation*}
\bigl((1+\alpha G(0))\mathrm{Id}
-\alpha G(0)\mathcal S_\nu
-\alpha\mathcal D_\nu\mathcal H\bigr)\phi_\alpha
=\mathcal D_\nu K.
\end{equation*}
With
\begin{equation*}
\psi_\alpha:=(1+\alpha G(0))\phi_\alpha,
\qquad
c_\alpha:=\frac{\alpha G(0)}{1+\alpha G(0)},
\qquad
d_\alpha:=\frac{\alpha}{1+\alpha G(0)},
\end{equation*}
this is equivalently
\begin{equation}
\label{eq:coboundary_fixed_point}
(\mathrm{Id}-c_\alpha\mathcal S_\nu
-d_\alpha\mathcal D_\nu\mathcal H)\psi_\alpha
=\mathcal D_\nu K.
\end{equation}
This rewriting isolates the two transformations that later determine
regularity. The operator $\mathcal S_\nu$ shifts a potential singularity away from the
origin and averages it across horizons, whereas $\mathcal H$ integrates against
the transient-impact kernel. Their precise regularizing effects depend on the
local assumptions introduced in \Cref{sec:tail_regularity}.

\subsection{Existence and uniqueness}

The preceding characterization does not by itself guarantee existence.
Within the admissible class of \cref{prop:meanfield_eq}, suppose that
\eqref{eq:meanfield_alpha} admits a solution $\phi_\alpha^K$.
Then every solution in the same class is of the form
\begin{equation*}
\phi=\phi_\alpha^K+u,
\qquad
u\in\ker(\mathrm{Id}-\alpha\mathcal D_\nu\mathcal I).
\end{equation*}
The null space depends only on $(\alpha,G,\nu)$, whereas
$\phi_\alpha^K$ is driven by the common signal.
\Cref{rem:noninjective_feedback_mode} gives an explicit example in which
this null space is nontrivial and uniqueness therefore fails.

\begin{remark}[An exponential forecast--impact resonance]
\label{rem:noninjective_feedback_mode}
For $\alpha>0$, the deterministic horizon distribution
$\nu=\delta_\ell$, with $\ell>0$, and the constant impact $G(t)=c>0$ yield
$\mathcal H=0$, $\mathcal I=c\,\mathrm{Id}$, and
\begin{equation*}
\mathcal D_\nu f(\tau)=f(\tau+\ell)-f(\tau).
\end{equation*}
The homogeneous kernel equation then becomes
\begin{equation*}
    f=\alpha c\,\mathcal D_\nu f,
    \qquad
f(\tau+\ell)=\left(1+\frac{1}{\alpha c}\right)f(\tau).
\end{equation*}
Hence
\begin{equation*}
f(\tau)=e^{\kappa\tau},
\qquad
\kappa=\frac{1}{\ell}\log\left(1+\frac{1}{\alpha c}\right),
\end{equation*}
is a nonzero solution. 

The resulting aggregate-position process
$U_t=\int_0^t e^{\kappa(t-s)}\,\diff B_s$ exhibits a forecast--impact
resonance: when the exogenous signal kernel is zero, agents form positions
solely from anticipated impact. A nonzero aggregate position can therefore sustain
itself without any external signal. The function $f$ belongs to
$L^2_{\mathrm{loc}}$, and both $f$ and $\mathcal If=cf$ are $\nu$-admissible.

\end{remark}

The feedback mode responsible for nonuniqueness grows exponentially, and this
is precisely the behaviour we wish to exclude. In the next subsection, we
construct a unique solution in a Banach space of functions that decay at infinity. The
argument has two steps. When impact is only instantaneous, as reflected by
$G'=0$, the reduced equation is solved directly by an Abel pseudoinverse in
\Cref{subsubsec:no_transient_impact}. Reintroducing the transient-impact kernel
$G'$, we write the solution as the sum of the instantaneous-impact solution and
a residual term, and solve the resulting equation for that residual in
\Cref{subsubsec:transient_impact_perturbation}. Appendix~\ref{app:coboundary_existence}
proves that the resulting kernel is the unique solution of the kernel equation within this Banach space.

\subsubsection{Pseudoinverse without transient impact}
\label{subsubsec:no_transient_impact}

Assume first that $G'=0$, so that $\mathcal H=0$ and impact is purely
instantaneous. Equation~\eqref{eq:coboundary_fixed_point} reduces to
\begin{equation}
\label{eq:no_transient_coboundary}
(\mathrm{Id}-c_\alpha\mathcal S_\nu)\psi=\mathcal D_\nu K.
\end{equation}

Let $\mathcal B_0$ be the space of bounded Borel functions vanishing at
infinity, equipped with the supremum norm
\begin{equation*}
\mathcal B_0
:=
\{h:\mathbb R_+^*\to\mathbb R:\ h\text{ is bounded and Borel},\
h(t)\to0\text{ as }t\to\infty\},
\qquad
\norm{h}_{\mathcal B_0}:=\sup_{t>0}|h(t)|.
\end{equation*}
Suppose that $\mathcal S_\nu^nK\in\mathcal B_0$ for some $n\geq0$, and let
$n_0$ be the smallest such integer. We work on the Banach extension of
$\mathcal B_0$ defined by
\begin{equation*}
\mathcal Y_K^0
:=
\mathcal B_0
\oplus\mathrm{span}\{\mathcal S_\nu^jK:0\le j<n_0\}.
\end{equation*}
On $\mathcal Y_K^0$, the Abel operator
\begin{equation*}
Q_{c_\alpha}h
:=
\sum_{n\ge0}c_\alpha^n\mathcal S_\nu^n\mathcal D_\nu h
\end{equation*}
is a well-defined bounded operator on $\mathcal Y_K^0$ and satisfies
\begin{equation*}
(\mathrm{Id}-c_\alpha\mathcal S_\nu)Q_{c_\alpha}=\mathcal D_\nu.
\end{equation*}
These assertions are proved in \cref{lem:app_Qc_extension}. Hence
$Q_{c_\alpha}K$ solves \eqref{eq:no_transient_coboundary}.

\subsubsection{Transient-impact perturbation}
\label{subsubsec:transient_impact_perturbation}

We now restore $G'$ and seek the rescaled kernel as
\begin{equation*}
\psi_\alpha=Q_{c_\alpha}K+\eta_\alpha,
\qquad
\eta_\alpha\in\mathcal B_0.
\end{equation*}
The first term solves the instantaneous-impact equation. Substitution into
\eqref{eq:coboundary_fixed_point} shows that the transient-impact correction
must solve
\begin{equation*}
\eta_\alpha
=d_\alpha Q_{c_\alpha}\mathcal H
\bigl(Q_{c_\alpha}K+\eta_\alpha\bigr).
\end{equation*}
Thus transient impact enters as a fixed-point perturbation of the preceding
solution.
The fixed-point map is affine on $\mathcal B_0$. When it is a strict
contraction, its fixed point admits a convergent Neumann-series representation,
yielding the following equilibrium construction.

\needspace{13\baselineskip}
\begin{proposition}[Unique equilibrium kernel in $\mathcal Y_K^0$]
\label{prop:selected_coboundary_equilibrium}
Let $n_0$ be the smallest nonnegative integer such that
$\mathcal S_\nu^{n_0}K\in\mathcal B_0$. Assume that
\begin{equation*}
|d_\alpha|\,
\norm{Q_{c_\alpha}\mathcal H}_{\mathcal B_0\to\mathcal B_0}<1.
\end{equation*}
Then the kernel
\begin{equation}
\label{eq:selected_equilibrium_expansion}
\phi_\alpha
=
\frac{1}{1+\alpha G(0)}
\left[
Q_{c_\alpha}K
+\sum_{m\ge0}d_\alpha^{m+1}
(Q_{c_\alpha}\mathcal H)^m
Q_{c_\alpha}\mathcal H Q_{c_\alpha}K
\right]
\end{equation}
is the unique solution in $\mathcal Y_K^0$ of the kernel equation
\eqref{eq:meanfield_alpha}. Moreover, the process
\begin{equation*}
    \Pi_t^\alpha=\int_0^t\phi_\alpha(t-s)\,\diff B_s
\end{equation*}
is the aggregate position of a Gaussian--Volterra equilibrium.
\end{proposition}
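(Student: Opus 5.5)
The proof has three parts: reduce the kernel equation to a fixed-point problem for $\eta_\alpha$ in $\mathcal B_0$, solve it by a Neumann series, and show that no other solution exists in $\mathcal Y_K^0$.

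\emph{Existence.} I take as given, from \cref{lem:app_Qc_extension}, that $Q_{c_\alpha}$ is bounded on $\mathcal Y_K^0$ and satisfies $(\mathrm{Id}-c_\alpha\mathcal S_\nu)Q_{c_\alpha}=\mathcal D_\nu$.

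\begin{itemize}
\item The first step is to check that $\mathcal H$ maps $\mathcal Y_K^0$ into $\mathcal B_0$, and hence that $Q_{c_\alpha}\mathcal H$ maps $\mathcal B_0$ into itself. For $h\in\mathcal B_0$, the convolution $G'\ast h$ is bounded by $\norm{G'}_{L^1}\norm{h}_\infty$.
\item It also vanishes at infinity. Split the integral at $\tau/2$. One piece is bounded by $\norm{G'}_{L^1(\tau/2,\infty)}\norm{h}_\infty$, which tends to $0$. The other uses $\sup_{s>\tau/2}|h(s)|\to0$.
\item For the finitely many directions $\mathcal S_\nu^jK$ with $j<n_0$, I need $\mathcal H\mathcal S_\nu^jK\in\mathcal B_0$. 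This uses $K\in L^2_{\mathrm{loc}}$, $K\to0$, $G'$ continuous and integrable, and $G'\to0$, giving a local-integrability-times-decay argument. This is the point where $\mathcal H Q_{c_\alpha}K\in\mathcal B_0$ is needed.
\item With this, $F(\eta):=d_\alpha Q_{c_\alpha}\mathcal H(Q_{c_\alpha}K+\eta)$ is an affine map of $\mathcal B_0$ into itself. Its linear part has norm $|d_\alpha|\,\norm{Q_{c_\alpha}\mathcal H}<1$.
\item By Banach's fixed-point theorem, $F$ has a unique fixed point $\eta_\alpha=\sum_{m\ge0}d_\alpha^{m+1}(Q_{c_\alpha}\mathcal H)^{m}Q_{c_\alpha}\mathcal H Q_{c_\alpha}K$. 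Dividing $\psi_\alpha=Q_{c_\alpha}K+\eta_\alpha$ by $1+\alpha G(0)$ gives \eqref{eq:selected_equilibrium_expansion}.
\item To verify the equation, apply $\mathrm{Id}-c_\alpha\mathcal S_\nu$ to $\psi_\alpha$. This gives $\mathcal D_\nu K+d_\alpha\mathcal D_\nu\mathcal H\psi_\alpha$, which is exactly \eqref{eq:coboundary_fixed_point}, equivalently \eqref{eq:meanfield_alpha}.
\end{itemize}

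\emph{Uniqueness.} Suppose $\phi\in\mathcal Y_K^0$ solves \eqref{eq:meanfield_alpha}. Set $\psi=(1+\alpha G(0))\phi$ and $w=\psi-\psi_\alpha$. Then $(\mathrm{Id}-c_\alpha\mathcal S_\nu)w=d_\alpha\mathcal D_\nu\mathcal H w$, and I need to show $w=0$.

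The main step is to invert $\mathrm{Id}-c_\alpha\mathcal S_\nu$ on $\mathcal Y_K^0$ via a left identity. I would show $Q_{c_\alpha}$ and $\mathcal S_\nu$ commute, and that $\sum_n c_\alpha^n\mathcal S_\nu^n$ converges on $\mathcal Y_K^0$. The latter holds because $0\le c_\alpha<1$, $\mathcal S_\nu$ is a contraction on $\mathcal B_0$, and $\mathcal S_\nu^jK$ enters $\mathcal B_0$ after at most $n_0$ steps. Applying $\sum_n c_\alpha^n\mathcal S_\nu^n$ to both sides then gives $w=d_\alpha Q_{c_\alpha}\mathcal H w$.

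The right-hand side lies in $\mathcal B_0$. If $w$ had a component along some $\mathcal S_\nu^jK$ with $j<n_0$, these components would have to vanish by the direct-sum decomposition. Hence $w\in\mathcal B_0$. The contraction bound then forces $w=0$.

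\emph{Equilibrium property.} To obtain the equilibrium via \cref{prop:meanfield_eq}, I need $\phi_\alpha\in L^2_{\mathrm{loc}}$ and $\nu$-admissibility of $K$, $\phi_\alpha$ and $\mathcal I\phi_\alpha$.

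\begin{itemize}
\item $\phi_\alpha$ is a bounded function plus a finite combination of the $\mathcal S_\nu^jK$, each of which is in $L^2_{\mathrm{loc}}$.
\item For admissibility, I would bound $\norm{\Delta_\lambda f}_{L^2(0,T)}$ uniformly in $\lambda$. This reduces to bounds on $L^2(0,T+\lambda)$-type quantities of shifted functions, controlled using boundedness away from the singular parts.
\end{itemize}

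I expect this admissibility check, together with confirming that $\mathcal H$ and $Q_{c_\alpha}$ behave well on the non-$\mathcal B_0$ directions, to be the main technical obstacle. I would try to import it from the appendix lemma or impose it as the standing admissibility implicit in working on $\mathcal Y_K^0$.
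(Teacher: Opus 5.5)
Your existence and uniqueness arguments are sound and follow the paper's route: you check that $\mathcal H$ maps $\mathcal Y_K^0$ into $\mathcal B_0$, and you run a contraction for $\eta_\alpha$ in $\mathcal B_0$. For uniqueness, you build the left inverse $\sum_n c_\alpha^n\mathcal S_\nu^n$ on $\mathcal Y_K^0$. This is an equivalent alternative to the paper's injectivity step ($y=c_\alpha\mathcal S_\nu y$ forces $y=c_\alpha^{n_0}\mathcal S_\nu^{n_0}y\in\mathcal B_0$, hence $y=0$). No commutation of $Q_{c_\alpha}$ with $\mathcal S_\nu$ is needed, since $\sum_n c_\alpha^n\mathcal S_\nu^n\mathcal D_\nu$ is $Q_{c_\alpha}$ by definition.

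\textbf{The gap is the final assertion, the equilibrium property.} You leave it open and even propose to ``impose'' admissibility. That is not allowed, because the proposition asserts the equilibrium property with no extra hypothesis. The reduction you sketch would also fail if taken literally:
\begin{itemize}
\item $\norm{f(\cdot+\lambda)}_{L^2(0,T)}=\norm{f}_{L^2(\lambda,\lambda+T)}$. Bounding this by an $L^2(0,T+\lambda)$ quantity gives something that grows in $\lambda$. For $K(t)=t^a$ with $a\in(-1/2,0)$ it grows like $\lambda^{a+1/2}$, and its $\nu$-integral is infinite for heavy-tailed $\nu$. No moment condition on $\nu$ is assumed.
\item ``Boundedness away from the singular parts'' is not available either. $K$ is only in $L^2_{\mathrm{loc}}$, so it may be unbounded at interior points. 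What you do have is boundedness on some $[R,\infty)$, from $K\to0$.
\end{itemize}

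\textbf{The missing idea is a uniform window norm.} Set $N_T(f):=\sup_{r\ge0}\norm{f}_{L^2(r,r+T)}$. Then:
\begin{enumerate}[label=(\roman*)]
\item $N_T(K)\le\norm{K}_{L^2(0,R+T)}+\sqrt T\sup_{u\ge R}|K(u)|<\infty$.
\item $N_T(b)\le\sqrt T\norm{b}_{\mathcal B_0}$ for every $b\in\mathcal B_0$.
\item Minkowski's integral inequality gives $N_T(\mathcal S_\nu f)\le N_T(f)$. Hence $N_T(f)<\infty$ for every $f\in\mathcal Y_K^0$. This is also what actually proves $\mathcal S_\nu^jK\in L^2_{\mathrm{loc}}$, which you asserted without justification.
\item $\norm{\Delta_\lambda f}_{L^2(0,T)}\le 2N_T(f)$ uniformly in $\lambda$. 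So $\int_{(0,\infty)}\norm{\Delta_\lambda f}_{L^2(0,T)}\,\nu(\diff\lambda)\le 2N_T(f)$ for any probability measure $\nu$.
\end{enumerate}
These give $\phi_\alpha\in L^2_{\mathrm{loc}}$ and $\nu$-admissibility of $K$ and $\phi_\alpha$. For $\mathcal I\phi_\alpha=G(0)\phi_\alpha+\mathcal H\phi_\alpha$, your existence step already shows $\mathcal H\phi_\alpha\in\mathcal B_0$, so that term is bounded and trivially admissible. The paper instead uses $N_T(\mathcal Hf)\le\norm{G'}_1N_T(f)$. With these facts, \cref{prop:meanfield_eq} applies and the equilibrium conclusion follows.
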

The proof is given in \cref{subsec:app_transient_impact_perturbation}.

\begin{remark}[Contraction and permanent impact]
\label{rem:contraction_permanent_impact}
The operator norm appearing in \cref{prop:selected_coboundary_equilibrium} is
difficult to evaluate in general. Suppose that $G$ is nonnegative, decreasing,
and convex, with $G(0)>0$. Set
\begin{equation*}
\Theta
:=\frac{G(0)-G(\infty)}{G(0)}.
\end{equation*}
Thus $G(0)$ is the instantaneous impact level, $G(\infty)$ is the permanent
impact level, and $\Theta$ is the transient fraction. Since
$\norm{G'}_1=G(0)-G(\infty)$ and $d_\alpha=c_\alpha/G(0)$,
\begin{equation*}
|d_\alpha|\,
\norm{Q_{c_\alpha}\mathcal H}_{\mathcal B_0\to\mathcal B_0}
\leq2c_\alpha\Theta.
\end{equation*}
Consequently, the contraction condition holds whenever
$c_\alpha\Theta<1/2$. In particular, it holds whenever at least half of the
instantaneous impact is permanent.
\end{remark}

\subsection{Price decomposition and rough-signal cancellation}
\label{subsec:amplitude_factor}

Every equilibrium process determines an aggregate position and, through
collective market impact, a signal-driven component. Together with the
residual martingale $M$, this yields a corresponding observed price. The
price-level consistency question is then twofold: whether equilibrium trading
attenuates the transmission of the forecastable common signal, and whether the
resulting price has local regularity compatible with that of a Brownian motion.\\

Let $\phi_\alpha$ be the kernel of an equilibrium process. Since
$X^\alpha=P+I^\alpha$, stochastic Fubini gives
\begin{equation}
\label{eq:Ktilde_volterra}
S_t^\alpha
=M_t+X_t^\alpha
=M_t+\int_0^t\mathcal K_\alpha(t-s)\,\diff B_s,
\end{equation}
where
\begin{equation}
\label{eq:Ktilde_def}
\mathcal K_\alpha
:=K+\mathcal I\phi_\alpha
=K+G(0)\phi_\alpha+\mathcal H\phi_\alpha
\end{equation}
is the Volterra kernel of $X^\alpha$.
This decomposition separates the two sources of price variation. The
martingale residual $M$ has zero conditional expected increments and the
appropriate regularity, whereas $\mathcal K_\alpha$ governs how the
common signal is transmitted through equilibrium impact and the regularity of
the resulting signal-driven component.
To identify the direct contribution of the signal kernel $K$, which can
potentially induce roughness in the price, we rewrite $\mathcal K_\alpha$ as
follows.

\begin{proposition}[Direct-signal coefficient]
\label{prop:amplitude_factor}
The kernel of $X^\alpha$ satisfies
\begin{equation}
\label{eq:amplitude_identity}
\mathcal K_\alpha
=A(\alpha,G(0))K+\mathcal R_\alpha,
\end{equation}
where
\begin{equation*}
A(\alpha,x)
:=\frac{1-(1-\alpha)x}{1+\alpha x}
=1-\frac{x}{1+\alpha x},
\end{equation*}
and
\begin{equation}
\label{eq:amplitude_remainder}
\mathcal R_\alpha
=\mathcal H\phi_\alpha
+\frac{G(0)}{1+\alpha G(0)}
\left[
\mathcal S_\nu K
+\alpha G(0)\mathcal S_\nu\phi_\alpha
+\alpha(\mathcal S_\nu-\mathrm{Id})\mathcal H\phi_\alpha
\right].
\end{equation}
\end{proposition}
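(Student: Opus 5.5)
The plan is a purely algebraic manipulation of the kernel equation \eqref{eq:meanfield_alpha} from \cref{prop:meanfield_eq}; no analytic estimate is needed beyond the well-definedness of each term. Start from the definition \eqref{eq:Ktilde_def}, $\mathcal K_\alpha=K+G(0)\phi_\alpha+\mathcal H\phi_\alpha$. The idea is to remove the term $G(0)\phi_\alpha$ by solving the kernel equation for $\phi_\alpha$. Isolating its identity part will produce a $-K$ contribution together with shifted, and hence less singular, terms.

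Concretely, write the kernel equation in the expanded form derived after \cref{prop:meanfield_eq}:
\begin{equation*}
(1+\alpha G(0))\phi_\alpha
=\alpha G(0)\mathcal S_\nu\phi_\alpha+\alpha\mathcal D_\nu\mathcal H\phi_\alpha+\mathcal D_\nu K .
\end{equation*}
Then substitute $\mathcal D_\nu K=\mathcal S_\nu K-K$ and $\mathcal D_\nu=\mathcal S_\nu-\mathrm{Id}$. Dividing by $1+\alpha G(0)$, which is positive since $G(0)\ge0$ and $\alpha\in[0,1]$, gives
\begin{equation*}
\phi_\alpha=\frac{1}{1+\alpha G(0)}\Bigl[-K+\mathcal S_\nu K+\alpha G(0)\mathcal S_\nu\phi_\alpha+\alpha(\mathcal S_\nu-\mathrm{Id})\mathcal H\phi_\alpha\Bigr].
\end{equation*}

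Next, multiply this identity by $G(0)$ and insert it into $\mathcal K_\alpha$. The coefficient of $K$ becomes $1-\frac{G(0)}{1+\alpha G(0)}$. This equals $\frac{1+\alpha G(0)-G(0)}{1+\alpha G(0)}=A(\alpha,G(0))$, which checks both expressions for $A(\alpha,x)$ given in the statement. The remaining terms are $\mathcal H\phi_\alpha$ together with $\frac{G(0)}{1+\alpha G(0)}$ times the bracket $\mathcal S_\nu K+\alpha G(0)\mathcal S_\nu\phi_\alpha+\alpha(\mathcal S_\nu-\mathrm{Id})\mathcal H\phi_\alpha$. This is exactly \eqref{eq:amplitude_remainder}.

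The only point needing care, and the main obstacle, is justifying that the linear operations are legitimate. We must split $\mathcal D_\nu K$ into $\mathcal S_\nu K-K$, and likewise $\mathcal D_\nu\mathcal H\phi_\alpha$ into $\mathcal S_\nu\mathcal H\phi_\alpha-\mathcal H\phi_\alpha$. This requires $\mathcal S_\nu K$, $\mathcal S_\nu\phi_\alpha$ and $\mathcal S_\nu\mathcal H\phi_\alpha$ to be defined pointwise almost everywhere, not merely their increment averages $\mathcal D_\nu$. I would handle this by noting that $\nu$ is a probability measure: $\mathcal S_\nu f=\mathcal D_\nu f+f$ is then defined wherever $\mathcal D_\nu f$ is, with both sides finite. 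The $\nu$-admissibility of $K$, $\phi_\alpha$ and $\mathcal I\phi_\alpha$ assumed in \cref{prop:meanfield_eq} makes $\mathcal D_\nu$ of each of them well defined in $L^2_{\mathrm{loc}}$. The needed averages are then finite, $\mathcal H\phi_\alpha$ being $\nu$-admissible as $\mathcal I\phi_\alpha-G(0)\phi_\alpha$. So the identity holds as an equality in $L^2_{\mathrm{loc}}(\R_+)$, i.e.\ almost everywhere, which is all that is needed for the Volterra representation \eqref{eq:Ktilde_volterra}.
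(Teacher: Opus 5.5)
Your proposal is correct and matches the paper's proof: expand the kernel equation \eqref{eq:meanfield_alpha} using $\mathcal D_\nu=\mathcal S_\nu-\mathrm{Id}$ and $\mathcal I=G(0)\mathrm{Id}+\mathcal H$, solve for $(1+\alpha G(0))\phi_\alpha$, substitute into $\mathcal K_\alpha=K+G(0)\phi_\alpha+\mathcal H\phi_\alpha$, and collect the coefficient of $K$. Your extra remark is a valid justification that the paper leaves implicit: $\nu$-admissibility, together with $\nu$ being a probability measure, makes $\mathcal S_\nu K$, $\mathcal S_\nu\phi_\alpha$ and $\mathcal S_\nu\mathcal H\phi_\alpha$ well defined in $L^2_{\mathrm{loc}}$.
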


\begin{proof}
Substituting $\mathcal D_\nu=\mathcal S_\nu-\mathrm{Id}$ and
$\mathcal I=G(0)\mathrm{Id}+\mathcal H$ into
\eqref{eq:meanfield_alpha} gives
\begin{equation*}
(1+\alpha G(0))\phi_\alpha
=-K+\mathcal S_\nu K
+\alpha G(0)\mathcal S_\nu\phi_\alpha
+\alpha(\mathcal S_\nu-\mathrm{Id})\mathcal H\phi_\alpha.
\end{equation*}
Substitution of this identity into
$\mathcal K_\alpha=K+G(0)\phi_\alpha+\mathcal H\phi_\alpha$ and collection of
the coefficient of the untransformed kernel $K$ yield
\eqref{eq:amplitude_identity} and \eqref{eq:amplitude_remainder}.
\end{proof}

The direct contribution of $K$ vanishes precisely when
\begin{equation}
\label{eq:rough_signal_cancellation}
G(0)(1-\alpha)=1.
\end{equation}
We call \eqref{eq:rough_signal_cancellation} the \emph{rough-signal cancellation
condition}. For $\alpha<1$, it is equivalently
$G(0)=1/(1-\alpha)$, while no finite $G(0)$ satisfies it at $\alpha=1$.\\

Under the Gamma-regular hypotheses of \Cref{sec:tail_regularity}, we show that,
for a signal with regularity $H<1/2$, absence of cancellation forces $X^\alpha$ to retain the local
regularity of the signal, whereas after cancellation the remaining kernel has
higher local order and can reach the regularity of a Brownian motion. As such,
cancellation is necessary for the observed price to have local regularity
different from that of the signal.
\Cref{sec:scaling_limit} studies the regime $\alpha\uparrow1$ along the
cancellation condition and, under suitable conditions, proves that
$X^\alpha\to0$ and $S^\alpha\to M$. In this
limit, the signal-driven component disappears and only the residual martingale
remains. This limiting behaviour indicates that the cancellation condition also
reduces the predictability transmitted to prices by the common signal.

\subsection{Flow and forecast endogeneity}
\label{subsec:hawkes_fixed_parameter}

To model order flow and relate it to price impact, a standard approach combines
Hawkes processes with propagator models
\cite{BouchaudGefenPottersWyart2004,Jaisson2014MarketImpact}. Let $N^a$ and
$N^b$ be independent Hawkes processes counting, respectively, unit-sized buy
and sell market orders, with intensities
\begin{equation*}
\lambda_t^i
=\mu+\int_{[0,t)}h(t-s)\,\diff N_s^i,
\qquad i\in\{a,b\},\quad \mu>0,
\end{equation*}
where $h$ is nonnegative and
$\norm{h}_1:=\int_0^\infty h(s)\,\diff s<1$. The resulting order flow is
self-exciting and clustered. In its branching representation, each event has
$\norm{h}_1$ direct offspring on average
\cite{BouchaudFarmerLillo2009,BacryMastromatteoMuzy2015}. Absence of price
manipulation requires permanent impact to be linear in signed traded volume.
Combined with price efficiency, this yields
\begin{equation*}
\begin{aligned}
\widehat S_t
&=\widehat S_0
+\kappa\lim_{T\to\infty}
\mathbb E\!\left[N_T^a-N_T^b\mid\F_t\right] \\
&=\widehat S_0
+\kappa\int_0^t\Xi(t-s)\bigl(\diff N_s^a-\diff N_s^b\bigr),
\end{aligned}
\end{equation*}
as in \cite{Gatheral2010,Jaisson2014MarketImpact,JusselinRosenbaum2020}.
Here $\kappa>0$ fixes the price-impact scale of one order, and the associated
dimensionless propagator is
\begin{equation*}
\Xi(x)
:=1+\frac{1}{1-\norm{h}_1}\int_x^\infty h(s)\,\diff s,
\qquad
\Xi(\infty)=1,
\qquad
\Xi(0)=\frac{1}{1-\norm{h}_1}.
\end{equation*}
Our model measures positions in forecast units, whereas the Hawkes model
measures flow in unit trades, so their overall impact scales are not directly
comparable. We then compare the two dimensionless propagator kernels by
identifying $G=\Xi$. For fixed $\alpha<1$, the rough-signal cancellation
condition then becomes
\begin{equation*}
\norm{h}_1=\alpha.
\end{equation*}

Thus the identity $\norm{h}_1=\alpha$ links two a priori distinct notions of
endogeneity. On the forecast side, $\alpha$ quantifies the anticipation of
aggregate impact that agents incorporate into their positions. On the
order-flow side, $\norm{h}_1$ quantifies the self-excitation, and hence the
strong autocorrelation, underlying the Hawkes description. Equality matches
the strengths of these two mechanisms in the cancellation condition.

\section{Impact accounting at the boundary}
\label{sec:scaling_limit}

\subsection{Scaling regime and functional setting}

\subsubsection{Scaling regime}

Fix a kernel $G$ with $G(0)=1$ and define the scaled family
\footnote{This differs from the standard nearly unstable Hawkes scaling
$h_\alpha=\alpha h$, with $\norm{h}_1=1$
\cite{Jaisson2014MarketImpact,JaissonRosenbaum2016}. Under the identification
of \Cref{subsec:hawkes_fixed_parameter}, our parametrization
keeps $G_\alpha(\infty)/G_\alpha(0)=G(\infty)$ fixed and positive under the
conditions below. The propagator associated with $h_\alpha$ instead has
normalized permanent fraction $1-\alpha\to0$.}
\begin{equation*}
G_\alpha(t):=\frac{G(t)}{1-\alpha},
\qquad 0\leq\alpha<1.
\end{equation*}
Since $G_\alpha(0)(1-\alpha)=1$, this family describes the regime
$\alpha\uparrow1$ under the rough-signal cancellation condition.
We use the notation
\begin{equation*}
\mathcal Hf:=G'\ast f,
\qquad
\mathcal I_\alpha f:=G_\alpha(0)f+G_\alpha'\ast f,
\qquad
\Phi_\alpha:=G_\alpha(0)\phi_\alpha
=\frac{\phi_\alpha}{1-\alpha}.
\end{equation*}
For the equilibrium kernels supplied by
\cref{prop:selected_coboundary_equilibrium}, the kernel of $X^\alpha$ from
\eqref{eq:Ktilde_def} becomes
\begin{equation}
\label{eq:v2_observed_kernel_cancellation_curve}
\mathcal K_\alpha
=K+\mathcal I_\alpha\phi_\alpha
=K+(\mathrm{Id}+\mathcal H)\Phi_\alpha.
\end{equation}

\subsubsection{Functional setting}

To control the tails uniformly as $\alpha\uparrow1$, fix $p>0$ and introduce
the weighted space
\begin{equation*}
\mathcal B_p
:=
\left\{
h:\mathbb R_+^*\to\mathbb R:\
h\text{ is Borel},\quad
\norm{h}_{p}:=\sup_{t>0}(1+t)^p|h(t)|<\infty
\right\}.
\end{equation*}
As in \Cref{subsubsec:no_transient_impact}, let $n_0$ be the smallest integer
such that $\mathcal S_\nu^{n_0}K\in\mathcal B_0$, and assume further that
$\mathcal S_\nu^{n_0}K\in\mathcal B_p$. We work on the Banach extension of
$\mathcal B_p$ defined by
\begin{equation*}
\mathcal Y_K^p
:=
\mathcal B_p
\oplus\operatorname{span}\{\mathcal S_\nu^jK:0\le j<n_0\}.
\end{equation*}
We equip $\mathcal Y_K^p$ with the direct-sum norm analogous to
\eqref{eq:app_YK_direct_norm}, with $\mathcal B_p$ in place of $\mathcal B_0$.
As in \Cref{subsubsec:no_transient_impact}, the Abel operator $Q_\alpha$ is
well defined on $\mathcal Y_K^p$ and preserves it for every $\alpha<1$.
Under the rough-signal cancellation condition, $c_\alpha=\alpha$, and the
equation for the rescaled kernel is
\begin{equation}
\label{eq:v2_scaling_Q_resolvent}
(\mathrm{Id}-\alpha Q_\alpha\mathcal H)\Phi_\alpha
=Q_\alpha K.
\end{equation}

\subsection{Scaling limit of aggregate-position kernels and prices}

Formally, $Q_\alpha$ approaches $-\mathrm{Id}$ as $\alpha\uparrow1$. As such,
\eqref{eq:v2_scaling_Q_resolvent} suggests the limiting equation
$(\mathrm{Id}+\mathcal H)\Phi=-K$. The following proposition makes this limit
rigorous under suitable assumptions.

\begin{proposition}[Scaling limit of rescaled aggregate-position kernels]
\label{prop:scaling_leading}
Let $p>0$ and suppose
$\mathcal S_\nu^{n_0}K\in\mathcal B_p$. Assume that
\begin{equation*}
\norm{G'}_{L^1(\mathbb R_+)}<1,
\qquad
\int_0^\infty(1+t)^p|G'(t)|\,\diff t<\infty,
\end{equation*}
and
\begin{equation*}
\mathcal H(\mathcal S_\nu^jK)\in\mathcal B_p,
\qquad 0\le j<n_0.
\end{equation*}
For $\alpha\in[\alpha_0,1)$, let $\Phi_\alpha\in\mathcal Y_K^p$ satisfy
\eqref{eq:v2_scaling_Q_resolvent}. Assume the uniform polynomial tail estimate
\begin{equation}
\label{eq:v2_weighted_phi_bound}
\sup_{\alpha\in[\alpha_0,1)}
\norm{\Phi_\alpha}_{\mathcal Y_K^p}<\infty
\end{equation}
for some $\alpha_0\in[0,1)$. Then
\begin{equation}
\label{eq:Phi_alpha_limit}
\Phi_\alpha
\longrightarrow
\Phi:=-(\mathrm{Id}+\mathcal H)^{-1}K
\qquad\text{in }\mathcal Y_K^0,
\end{equation}
where $\Phi$ is the unique solution in $\mathcal Y_K^0$ of
\begin{equation}
\label{eq:v2_leading_volterra}
(\mathrm{Id}+\mathcal H)\Phi=-K.
\end{equation}
Moreover,
\begin{equation}
\label{eq:K_alpha_unscaled_limit}
\mathcal K_\alpha\longrightarrow0
\qquad\text{in }\mathcal Y_K^0.
\end{equation}
\end{proposition}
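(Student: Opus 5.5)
The plan is to pass to the limit directly in \eqref{eq:v2_scaling_Q_resolvent}. The first step is to make the formal statement $Q_\alpha\to-\mathrm{Id}$ quantitative: a weighted tail bound buys convergence in the unweighted norm.

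\textbf{Step 1: the key estimate.} Write
\begin{equation*}
Q_\alpha+\mathrm{Id}=\sum_{n\ge0}\alpha^n\mathcal S_\nu^n(\mathcal S_\nu-\mathrm{Id})+\mathrm{Id}.
\end{equation*}
Summation by parts gives
\begin{equation*}
\sum_{n\ge0}\alpha^n\mathcal S_\nu^{n+1}-\sum_{n\ge0}\alpha^n\mathcal S_\nu^n+\mathrm{Id}
=(1-\alpha)\sum_{n\ge0}\alpha^{n}\mathcal S_\nu^{n+1},
\end{equation*}
so that
\begin{equation*}
Q_\alpha+\mathrm{Id}=(1-\alpha)\sum_{n\ge1}\alpha^{n-1}\mathcal S_\nu^n.
\end{equation*}
On $\mathcal B_p$, for $h$ with $|h(t)|\le\norm{h}_p(1+t)^{-p}$, we have
\begin{equation*}
|\mathcal S_\nu^n h(t)|\le \norm{h}_p\,\E[(1+t+\Lambda_1+\dots+\Lambda_n)^{-p}],
\end{equation*}
where the $\Lambda_i$ are i.i.d.\ with law $\nu$. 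Since $\nu$ lives on $(0,\infty)$, the law of large numbers (or a Chernoff-type bound on $\E[(1+\Lambda_1+\dots+\Lambda_n)^{-p}]$) gives $\varepsilon_n:=\sup_t|\mathcal S_\nu^n h(t)|/\norm{h}_p\to0$. Hence
\begin{equation*}
\norm{(Q_\alpha+\mathrm{Id})h}_{\mathcal B_0}\le (1-\alpha)\sum_{n\ge1}\alpha^{n-1}\varepsilon_n\,\norm{h}_p\longrightarrow0,
\end{equation*}
as an Abel mean of a null sequence. On the finite-dimensional part $\operatorname{span}\{\mathcal S_\nu^jK:0\le j<n_0\}$, the same telescoping reduces the question to finitely many explicit terms. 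These are handled by the assumption $\mathcal S_\nu^{n_0}K\in\mathcal B_p$: terms $\mathcal S_\nu^{n}K$ with $n\ge n_0$ are in $\mathcal B_p$ and are controlled as above, while the finitely many lower shifts carry a factor $(1-\alpha)$. The conclusion is $\norm{(Q_\alpha+\mathrm{Id})f}_{\mathcal Y_K^0}\le\omega(\alpha)\norm{f}_{\mathcal Y_K^p}$ with $\omega(\alpha)\to0$.

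\textbf{Step 2: the limit problem.} Because $\norm{G'}_1<1$, the operator $\mathcal H$ has norm at most $\norm{G'}_1<1$ on $\mathcal B_0$. Young-type bounds using $\int(1+t)^p|G'|<\infty$ give boundedness on $\mathcal B_p$, and the hypothesis $\mathcal H(\mathcal S_\nu^jK)\in\mathcal B_p$ gives boundedness on the span part. So $\mathcal H$ maps $\mathcal Y_K^p$ into itself and into $\mathcal Y_K^0$. On $\mathcal B_0$, $\mathrm{Id}+\mathcal H$ is invertible by a Neumann series. On the extension $\mathcal Y_K^0$, one writes $\Phi=-K+\eta$ with $\mathcal HK\in\mathcal B_0$ (when $n_0\ge1$, use the span component) and solves for $\eta\in\mathcal B_0$. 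This gives existence and uniqueness of $\Phi$ in \eqref{eq:v2_leading_volterra}, with a bounded inverse on $\mathcal Y_K^0$.

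\textbf{Step 3: convergence.} Rewrite \eqref{eq:v2_scaling_Q_resolvent} as
\begin{equation*}
(\mathrm{Id}+\mathcal H)\Phi_\alpha+K=(Q_\alpha+\mathrm{Id})K+\alpha(Q_\alpha+\mathrm{Id})\mathcal H\Phi_\alpha+(1-\alpha)\mathcal H\Phi_\alpha .
\end{equation*}
Each term on the right tends to $0$ in $\mathcal Y_K^0$:
\begin{itemize}[label={}, leftmargin=0pt]
\item the first by Step 1 applied to $K\in\mathcal Y_K^p$;
\item the second by Step 1 together with the uniform bound \eqref{eq:v2_weighted_phi_bound} and the boundedness of $\mathcal H$ on $\mathcal Y_K^p$;
\item the third trivially.
\end{itemize}
Applying the bounded inverse $(\mathrm{Id}+\mathcal H)^{-1}$ yields \eqref{eq:Phi_alpha_limit}. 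Finally, \eqref{eq:v2_observed_kernel_cancellation_curve} gives
\begin{equation*}
\mathcal K_\alpha=K+(\mathrm{Id}+\mathcal H)\Phi_\alpha\to K+(\mathrm{Id}+\mathcal H)\Phi=0,
\end{equation*}
by continuity of $\mathcal H$ on $\mathcal Y_K^0$.

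\textbf{Main obstacle.} The hardest part is Step 1: establishing $\sup_t|\mathcal S_\nu^nh|\to0$ uniformly for $h\in\mathcal B_p$, and doing the bookkeeping on the finite span component. There $Q_\alpha$ acts through the extension of \cref{lem:app_Qc_extension}, and the $(1-\alpha)$ gain must be extracted carefully. I would first handle the case $n_0=0$ cleanly and then treat the span terms explicitly.
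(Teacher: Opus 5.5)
Your proposal is correct and follows essentially the paper's route: the Abel identity $Q_\alpha=-\mathrm{Id}+\mathcal A_\alpha$, the weighted-to-uniform decay $\norm{\mathcal A_\alpha}_{\mathcal B_p\to\mathcal B_0}\to0$ applied to the bounded family $\mathcal H\Phi_\alpha$, the convergence $Q_\alpha K\to-K$ in $\mathcal Y_K^0$, and inversion of $\mathrm{Id}+\mathcal H$ on $\mathcal Y_K^0$ through the contraction on $\mathcal B_0$. The only difference is bookkeeping. The paper inverts the family $\mathrm{Id}+\alpha\mathcal H$ and uses $R_\alpha\to R_1$, whereas you invert $\mathrm{Id}+\mathcal H$ once and treat $(1-\alpha)\mathcal H\Phi_\alpha$ as a vanishing remainder, which is slightly leaner; also, the span bookkeeping you flag as the main obstacle is already supplied by \cref{lem:app_Qc_extension}, since only $K$ itself, and not $\mathcal H\Phi_\alpha\in\mathcal B_p$, needs it.
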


\begin{remark}[Tail conditions]
Consider the kernel $K(t)=t^a$, where
$a=H-1/2\in(-1/2,0)$, and the Gamma-distributed horizons studied in
\Cref{sec:tail_regularity}. Then
\begin{equation*}
\mathcal S_\nu^{n_0}K(t)\sim t^a
\qquad\text{as }t\to\infty.
\end{equation*}
Consequently, as soon as
$\mathcal S_\nu^{n_0}K\in\mathcal B_0$, it belongs to $\mathcal B_p$ for
every $0<p\leq-a$.
For the operator $\mathcal H$, the weighted-convolution estimate
\begin{equation*}
\norm{\mathcal Hf}_p
\leq
\left(\int_0^\infty(1+t)^p|G'(t)|\,\diff t\right)\norm{f}_p,
\qquad f\in\mathcal B_p,
\end{equation*}
shows that $\mathcal H$ preserves $\mathcal B_p$. For
$0\leq j<n_0$, the function $\mathcal S_\nu^jK$ is locally integrable and
satisfies $\mathcal S_\nu^jK(t)\sim t^a$ as $t\to\infty$. The preceding
estimate shows that
$\mathcal H(\mathcal S_\nu^jK)\in\mathcal B_p$ whenever
$|G'(t)|=O(t^{-q})$ as $t\to\infty$ for some $q>1$ with $p<q-1$.
\end{remark}

\needspace{6\baselineskip}
The following remark discusses the scope of the uniform tail condition
\eqref{eq:v2_weighted_phi_bound}, which requires the family $(\Phi_\alpha)$ to
retain polynomial tail control near the boundary.

\begin{remark}[Uniform tail condition]
The uniform estimate \eqref{eq:v2_weighted_phi_bound} is not automatic.
\Cref{prop:scaling_ou_uniform_tails} verifies it in an Ornstein--Uhlenbeck
benchmark with exponentially decaying transient impact. The argument extends
to positive Laplace mixtures whose decay rates are bounded away from zero.
Although these examples do not cover the singular kernels
studied in \Cref{sec:tail_regularity}, they show that this condition is
compatible with a nonzero transient component and a selected equilibrium
family.
\end{remark}

\begin{proposition}[Uniform tails in the Ornstein--Uhlenbeck case]
\label{prop:scaling_ou_uniform_tails}
Let
\begin{equation*}
K(t)=e^{-\gamma t},
\qquad
G(t)=1-\Theta+\Theta e^{-\kappa t},
\end{equation*}
where $\gamma,\kappa>0$ and $0\leq\Theta\leq1/2$. Then the corresponding
rescaled equilibrium kernels $\Phi_\alpha$ satisfy
\eqref{eq:v2_weighted_phi_bound} for every $p>0$.
\end{proposition}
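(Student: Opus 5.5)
The plan is to reduce everything to a finite-dimensional linear system, exploiting that exponentials are eigenfunctions of the horizon operators. Write $e_\mu(t):=e^{-\mu t}$ and $L(\mu):=\int_{(0,\infty)}e^{-\mu\lambda}\,\nu(\diff\lambda)$. Since $\nu$ is a probability measure on $\R_+^*$ and $\mu>0$, we have $0<L(\mu)<1$. Then $\mathcal S_\nu e_\mu=L(\mu)e_\mu$, and $\mathcal D_\nu e_\mu=(L(\mu)-1)e_\mu$.

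Because $K=e_\gamma\in\mathcal B_0$, we have $n_0=0$, so $\mathcal Y_K^0=\mathcal B_0$ and $\mathcal Y_K^p=\mathcal B_p$. Summing the Abel series termwise gives
\begin{equation*}
Q_\alpha e_\mu=q_\alpha(\mu)\,e_\mu,
\qquad
q_\alpha(\mu):=\frac{L(\mu)-1}{1-\alpha L(\mu)}.
\end{equation*}
This multiplier satisfies $|q_\alpha(\mu)|\le1$ for all $\alpha\in[0,1]$ and is continuous on $[0,1]$, with $q_1(\mu)=-1$. Next, $G'(t)=-\Theta\kappa e^{-\kappa t}$. For $\gamma\ne\kappa$, a direct computation gives
\begin{equation*}
\mathcal H e_\gamma=-\frac{\Theta\kappa}{\kappa-\gamma}(e_\gamma-e_\kappa),
\qquad
\mathcal H e_\kappa=-\Theta\kappa\, t e_\kappa.
\end{equation*}
The second identity shows that the span of $\{e_\gamma,e_\kappa\}$ alone is not invariant under $\mathcal H$. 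I will therefore work on the finite-dimensional space $V$ spanned by $e_\gamma$ and the functions $t^je_\kappa$ generated by iterating. To avoid an infinite set of polynomial-times-exponential terms, I will instead use the ansatz $\Phi_\alpha=a_\alpha e_\gamma+b_\alpha e_\kappa$ directly. Plugging it into \eqref{eq:v2_scaling_Q_resolvent} and using $\mathcal H e_\kappa=-\Theta\kappa\,te_\kappa$, I will then check whether the $te_\kappa$ terms cancel. If they do not, I will enlarge to $V=\mathrm{span}\{e_\gamma,e_\kappa,te_\kappa,\dots\}$.

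I expect the $te_\kappa$ terms to be the main obstacle. The cleaner route is to note that $\mathcal H$ acts on Laplace transforms as multiplication by $-\Theta\kappa/(s+\kappa)$, while $Q_\alpha$ is diagonal on exponentials but not on $te_\kappa$. So I would first try to show that $\Phi_\alpha$ is a finite combination of exponentials $e_{\mu_i(\alpha)}$. Its rates $\mu_i(\alpha)$ would be the roots of a characteristic equation, obtained from \eqref{eq:v2_scaling_Q_resolvent} applied to an exponential:
\begin{equation*}
1+\alpha q_\alpha(\mu)\frac{\Theta\kappa}{\kappa-\mu}=0 .
\end{equation*}
I would then show that these roots stay in a compact subset of $(0,\infty)$ uniformly in $\alpha\in[\alpha_0,1]$. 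The bound $\alpha|q_\alpha|\Theta\le1/2$ should keep the roots away from $0$ and bounded.

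If this bookkeeping becomes unwieldy, the fallback is the abstract argument on $V$ (or on the $\mathcal B_p$ closure). Write the equation as $(\mathrm{Id}-\alpha Q_\alpha\mathcal H)\Phi_\alpha=q_\alpha(\gamma)e_\gamma$. By \cref{rem:contraction_permanent_impact}, with $c_\alpha=\alpha$ and $\Theta\le1/2$, the operator $\mathrm{Id}-\alpha Q_\alpha\mathcal H$ is injective on $\mathcal B_0$. On any finite-dimensional invariant subspace it is therefore invertible, and its matrix depends continuously on $\alpha\in[\alpha_0,1]$. At $\alpha=1$ the matrix is $\mathrm{Id}+\mathcal H$, which is invertible because $\norm{\mathcal H}_{\mathcal B_0\to\mathcal B_0}\le\Theta<1$. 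Hence the determinant is continuous and nonvanishing on the compact interval $[\alpha_0,1]$, so it is bounded away from zero. By Cramer's rule, the coefficients of $\Phi_\alpha$ are then uniformly bounded.

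Finally, $\sup_{t>0}(1+t)^pt^je^{-\mu t}<\infty$ for each basis function, so $\sup_\alpha\norm{\Phi_\alpha}_p<\infty$ for every $p>0$. The degenerate case $\gamma=\kappa$ is handled the same way with $te_\gamma$ added to the basis. Since the constructed $\Phi_\alpha$ lies in $\mathcal B_0$, the uniqueness part of \cref{prop:selected_coboundary_equilibrium} identifies it with the selected equilibrium kernel.
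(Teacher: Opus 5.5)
\textbf{Gap: neither route gives a bound on $\Phi_\alpha$ that is uniform as $\alpha\uparrow1$.}

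\emph{The fallback fails at its first step.} For $\Theta>0$ and $\gamma\neq\kappa$, no finite-dimensional subspace containing $e_\gamma$ is invariant under $Q_\alpha\mathcal H$.
\begin{itemize}
\item $Q_\alpha\mathcal H e_\gamma$ has the nonzero $e_\kappa$-coefficient $\Theta\kappa\,q_\alpha(\kappa)/(\kappa-\gamma)$.
\item $\mathcal H(t^je_\kappa)=-\Theta\kappa\,t^{j+1}e_\kappa/(j+1)$.
\item $Q_\alpha$ acts triangularly on $\mathrm{span}\{t^ie_\kappa:i\le j\}$ with diagonal entry $q_\alpha(\kappa)\neq0$.
\end{itemize}
So the Krylov space of $e_\gamma$ contains every $t^je_\kappa$, and Cramer's rule has nothing to act on.

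Passing to a closure in $\mathcal B_0$ or $\mathcal B_p$ does not rescue it. Injectivity no longer gives a bounded inverse, and compactness of $[\alpha_0,1]$ is useless because $\alpha\mapsto Q_\alpha$ is not norm-continuous at $\alpha=1$. Testing $h(t)=(1+t)^{-p}$ as $t\to\infty$ shows $\norm{Q_\alpha+\mathrm{Id}}_{\mathcal B_p\to\mathcal B_p}=\norm{\mathcal A_\alpha}_{\mathcal B_p\to\mathcal B_p}=1$ for every $\alpha<1$, and likewise on $\mathcal B_0$, whereas $Q_1=-\mathrm{Id}$. This lack of norm convergence is exactly why \eqref{eq:v2_weighted_phi_bound} is a hypothesis of \cref{prop:scaling_leading}.

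\emph{The missing idea is a change of norm.} Work on $\mathcal E_\eta=\{f:\sup_{t>0}e^{\eta t}|f(t)|<\infty\}$ with $0<\eta<\min\{\gamma,\kappa(1-\Theta)\}$.
\begin{itemize}
\item There $\norm{\mathcal S_\nu}\le L(\eta)<1$, so $\norm{\mathcal A_\alpha}_{\mathcal E_\eta\to\mathcal E_\eta}\le(1-\alpha)L(\eta)/(1-\alpha L(\eta))\to0$. Hence $Q_\alpha\to-\mathrm{Id}$ in operator norm.
\item Also $\norm{\mathcal H}_{\mathcal E_\eta\to\mathcal E_\eta}\le\Theta\kappa/(\kappa-\eta)<1$.
\item Consequently $\norm{\alpha Q_\alpha\mathcal H}_{\mathcal E_\eta\to\mathcal E_\eta}\le r<1$ for $\alpha$ near $1$. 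The Neumann series, with $\norm{Q_\alpha K}_{\mathcal E_\eta}\le1$, gives $\sup_\alpha\norm{\Phi_\alpha}_{\mathcal E_\eta}\le1/(1-r)$, which dominates every $\norm{\cdot}_p$.
\end{itemize}
This is the paper's proof, and it involves no casework.

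\emph{Your characteristic-root route is viable but stops before the hard part.} For $\gamma\neq\kappa$ the solution is $\Phi_\alpha=a_\gamma e_\gamma+a_1e_{\mu_1(\alpha)}$.
\begin{itemize}
\item $\mu_1(\alpha)$ is the unique root of your characteristic equation. It is unique because $\mu\mapsto\alpha|q_\alpha(\mu)|\Theta\kappa-(\kappa-\mu)$ is increasing, and it lies in $(\kappa(1-\Theta),\kappa)$.
\item You need the extra constraint $a_\gamma/(\kappa-\gamma)+a_1/(\kappa-\mu_1)=0$ to cancel the $e_\kappa$ terms.
\end{itemize}
Together these give
\begin{equation*}
a_\gamma=\frac{q_\alpha(\gamma)}{1+\alpha q_\alpha(\gamma)\Theta\kappa/(\kappa-\gamma)}.
\end{equation*}
Roots staying in a compact set do not bound these coefficients.
\begin{itemize}
\item If $\gamma=\kappa(1-\Theta)$, then $\mu_1(\alpha)\to\gamma$ and the denominator tends to $0$ as $\alpha\uparrow1$. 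So $a_\gamma$ and $a_1$ blow up, and the $\mathcal B_p$ bound survives only through the cancellation $0\le e^{-\gamma t}-e^{-\mu_1 t}\le(\mu_1-\gamma)\,t\,e^{-\gamma t}$.
\item If $\gamma\in(\kappa(1-\Theta),\kappa)$, the denominator vanishes at one $\alpha^*<1$, which you must exclude by taking $\alpha_0>\alpha^*$.
\end{itemize}
With these estimates added you would have an explicit alternative proof. It would be case-dependent, though, and would not extend as readily as the weighted-space argument to Laplace mixtures of exponentials.
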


The proofs of
\cref{prop:scaling_leading,prop:scaling_ou_uniform_tails} are given in
Appendix~\ref{app:scaling_proofs}. In particular,
\cref{prop:scaling_impact_bounds} and
\eqref{eq:v2_observed_kernel_cancellation_curve} give
\begin{equation*}
\mathcal K_\alpha
=K+(\mathrm{Id}+\mathcal H)\Phi_\alpha
\longrightarrow
K+(\mathrm{Id}+\mathcal H)\Phi
=0.
\end{equation*}
Through the Volterra representation \eqref{eq:Ktilde_volterra}, this kernel
convergence yields the following process-level limit.

\begin{corollary}[Observed price limit]
\label{cor:scaling_observed_price_limit}
In addition to the assumptions of \cref{prop:scaling_leading}, suppose
\begin{equation*}
\mathcal S_\nu^jK\in L^2_{\mathrm{loc}}(\mathbb R_+),
\qquad 0\le j<n_0,
\end{equation*}
and suppose the corresponding kernel family satisfies the process-level
conditions of \cref{prop:meanfield_eq}. Then, for every $T>0$,
\begin{equation*}
\sup_{0\le t\le T}\E\!\left[|X_t^\alpha|^2\right]
\longrightarrow0,
\qquad
\sup_{0\le t\le T}\E\!\left[|S_t^\alpha-M_t|^2\right]
\longrightarrow0.
\end{equation*}
In particular, $X_t^\alpha\to0$ and $S_t^\alpha\to M_t$ in $L^2(\Omega)$
for every fixed $t\ge0$.
\end{corollary}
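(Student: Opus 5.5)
We use the Volterra representation \eqref{eq:Ktilde_volterra} and the It\^o isometry, which turn the claim into an $L^2(0,T)$ estimate on the kernels. The process-level conditions of \cref{prop:meanfield_eq} make $\Pi^\alpha$ a Gaussian--Volterra equilibrium aggregate position with kernel $\phi_\alpha$. Stochastic Fubini then gives $X^\alpha_t=\int_0^t\mathcal K_\alpha(t-s)\,\diff B_s$ and $S^\alpha_t-M_t=X^\alpha_t$, so the two statements in the corollary coincide. The It\^o isometry gives
\begin{equation*}
\E\!\left[|X_t^\alpha|^2\right]=\int_0^t\mathcal K_\alpha(s)^2\,\diff s\le\norm{\mathcal K_\alpha}_{L^2(0,T)}^2,
\qquad 0\le t\le T.
\end{equation*}
It therefore suffices to show $\norm{\mathcal K_\alpha}_{L^2(0,T)}\to0$ for each fixed $T$. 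The pointwise-in-$t$ statement then follows at once, since $\E[|X^\alpha_t|^2]\to0$ is $L^2(\Omega)$ convergence.

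The next step is to pass from convergence in $\mathcal Y_K^0$, given by \eqref{eq:K_alpha_unscaled_limit} in \cref{prop:scaling_leading}, to convergence in $L^2(0,T)$. The plan is to show that the inclusion $\mathcal Y_K^0\hookrightarrow L^2(0,T)$ is continuous. Write $h\in\mathcal Y_K^0$ as
\begin{equation*}
h=b+\sum_{j<n_0}a_j\mathcal S_\nu^jK,\qquad b\in\mathcal B_0 .
\end{equation*}
The direct-sum norm dominates $\norm{b}_{\mathcal B_0}$ and each $|a_j|$, using that the finite-dimensional component has equivalent norms. We then have
\begin{equation*}
\norm{h}_{L^2(0,T)}\le\sqrt T\,\norm{b}_{\mathcal B_0}+\sum_{j<n_0}|a_j|\,\norm{\mathcal S_\nu^jK}_{L^2(0,T)}.
\end{equation*}
The right-hand side is finite precisely because of the added hypothesis $\mathcal S_\nu^jK\in L^2_{\mathrm{loc}}$. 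This yields $\norm{h}_{L^2(0,T)}\le C_T\norm{h}_{\mathcal Y_K^0}$. Applying the bound to $h=\mathcal K_\alpha$ and using $\mathcal K_\alpha\to0$ in $\mathcal Y_K^0$ gives the conclusion.

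The main point to check is that $\mathcal K_\alpha$ genuinely lies in $\mathcal Y_K^0$ with a well-defined decomposition. This is needed so that the norm convergence in \cref{prop:scaling_leading} controls both the $\mathcal B_0$ part and the coefficients. We also need the direct sum to be topologically direct, so that the coefficients $a_j$ are continuous functionals. This is built into the direct-sum norm \eqref{eq:app_YK_direct_norm}, so we rely on that definition. We also need the identity $\mathcal K_\alpha=K+(\mathrm{Id}+\mathcal H)\Phi_\alpha$, which holds because $K=\mathcal S_\nu^0K\in\mathcal Y_K^0$ and $\mathcal H$ maps $\mathcal Y_K^0$ into itself under the assumptions of \cref{prop:scaling_leading}. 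Finally, if \eqref{eq:K_alpha_unscaled_limit} were available only in a weaker sense, the fallback is to bound each piece $K+\Phi_\alpha+\mathcal H\Phi_\alpha$ in $L^2(0,T)$ directly. For the convolution term one would use Young's inequality, $\norm{G'\ast f}_{L^2(0,T)}\le\norm{G'}_1\norm{f}_{L^2(0,T)}$.
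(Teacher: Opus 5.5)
Your proposal is correct and follows the paper's proof essentially step for step. Both use the continuity of the embedding $\mathcal Y_K^0\hookrightarrow L^2(0,T)$, which you make explicit through the direct-sum norm and the $L^2_{\mathrm{loc}}$ hypothesis on $\mathcal S_\nu^jK$, applied to $\mathcal K_\alpha\to0$ from \cref{prop:scaling_leading}. Both then conclude with the It\^o isometry bound $\sup_{0\le t\le T}\E[|X_t^\alpha|^2]\le\norm{\mathcal K_\alpha}_{L^2(0,T)}^2$ and the identity $S^\alpha-M=X^\alpha$.
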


\begin{proof}
Under the assumption on $(\mathcal S_\nu^j K)_{0 \leq j <n_0}$, the embedding
$\mathcal Y_K^0\hookrightarrow L^2(0,T)$ is continuous. Hence
\eqref{eq:K_alpha_unscaled_limit} implies
$\norm{\mathcal K_\alpha}_{L^2(0,T)}\to0$. By It\^o isometry,
\begin{equation*}
\sup_{0\le t\le T}\E\!\left[|X_t^\alpha|^2\right]
=
\sup_{0\le t\le T}\int_0^t|\mathcal K_\alpha(u)|^2\,\diff u
\le
\norm{\mathcal K_\alpha}_{L^2(0,T)}^2
\longrightarrow0.
\end{equation*}
Since $S^\alpha-M=X^\alpha$, the second convergence follows from the first.
\end{proof}

Thus, along the equilibrium family and under the fixed-kernel scaling, the
predictable signal-driven component vanishes and the observed price converges
to the martingale residual $M$. The limiting price therefore has no predictable
increments and retains the appropriate Hölder regularity.

\section{Rough forecast signals and heterogeneous horizons}
\label{sec:tail_regularity}

\subsection{Economic specification and scope}

We now specialize the common forecastable signal to a fractional-type kernel
at the origin. We write
\begin{equation*}
    a=H-\frac12,
    \qquad H\in\left(0,\frac12\right),
\end{equation*}
and assume that a fixed Borel representative of $K$ satisfies
\begin{equation}
\label{eq:v2_rough_kernel_specification}
    K(\tau)=\tau^a+K_s(\tau),
    \qquad K_s \in \mathcal C^1(\R_+^*),
    \qquad
    K_s(\tau)= o(\tau^a),
    \qquad
    K_s'(\tau)= o(\tau^{a-1}),
    \qquad \tau\downarrow0.
\end{equation}
The exponent $H<1/2$ describes the local roughness of the predictable signal
at the mid-frequency scale. Together with the standing decay assumption on
$K$, this is a reduced-form description of medium-term predictability with
fading shock loadings, rather than of fundamental factors driving long-run
trends.
We focus on $H<1/2$ under this mid-frequency interpretation. The complementary
case $H>1/2$ could be treated by the same methods, and the same mechanisms would
instead lower effective regularity towards $1/2$.\\

According to \cite{nguyen2026approximation}, Gamma mixtures are dense in the
probability densities on $\R_+$ belonging to $L^p$, for every
$1\leq p<\infty$. Since horizon averaging is linear in $\nu$,
we focus on a single Gamma component, with shape $\xi>0$ and rate $\beta>0$:
\begin{equation*}
    \nu(\diff\lambda)
    =
    \frac{\beta^\xi}{\Gamma(\xi)}
    \lambda^{\xi-1}e^{-\beta\lambda}\,\diff\lambda.
\end{equation*}
When $0<\xi<1$, the density is singular at zero, while
$\xi=1$ gives the exponential law and $\xi>1$ gives a density that vanishes at
zero and has an interior mode
\begin{equation*}
    \frac{\xi-1}{\beta}.
\end{equation*}
The last regime can represent some kind of latency rather than
an immediate response.
We further assume that
$K_s'\in L^\infty([r,\infty))$ for every $r>0$ and that
$G''\in L^1(\R_+)\cap L^\infty(\R_+)$.

\subsection{Local shape of the equilibrium kernel}

The relevant index is
\begin{equation*}
    n_0:=\min\{n\geq1:a+n\xi>0\}
    =\left\lfloor\frac{-a}{\xi}\right\rfloor+1,
\end{equation*}
which is precisely the smallest index such that
$\mathcal S_\nu^{n_0}K\in\mathcal B_0$.
When $a+m\xi$ is an integer, the following expansions may
contain logarithmic terms and require separate treatment. To avoid these
boundary cases, we impose the nonresonance condition
\begin{equation}
\label{eq:v2_gamma_nonresonance}
    a+m\xi\notin\mathbb Z,
    \qquad m=1,\ldots,n_0.
\end{equation}
These are precisely the exponents arising in the finite sequence of smoothing
iterates and its first regular element. In particular, the exponential case $\xi=1$ is
included because $a+1\in(1/2,1)$.

\begin{definition}[Gamma-regular kernel solution]
\label{def:admissible_equilibrium_kernel}
A solution $\phi_\alpha$ of the deterministic kernel equation
\eqref{eq:meanfield_alpha} is called Gamma-regular if it admits the
local expansion
\begin{equation}
\label{eq:v2_phi_expansion}
    \phi_\alpha
    =
    \frac{1}{1+\alpha G(0)}
    \left[
        -K
        +\sum_{m=1}^{n_0-1}
          (1-c_\alpha)c_\alpha^{m-1}\mathcal S_\nu^mK
        +r_\alpha
    \right],
\end{equation}
where
\begin{equation*}
    c_\alpha:=\frac{\alpha G(0)}{1+\alpha G(0)},
\end{equation*}
and $r_\alpha\in\mathcal C^1(\R_+^*)$ is bounded together with its
derivative on $[r,\infty)$ for every $r>0$ and satisfies
\begin{equation*}
    r_\alpha(\tau)=o(\tau^a),
    \qquad
    r_\alpha'(\tau)=o(\tau^{a-1}),
    \qquad \tau\downarrow0.
\end{equation*}
\end{definition}

Under the hypotheses of \cref{prop:selected_coboundary_equilibrium}, we assume
in addition that the $\mathcal Y_K^0$-solution is Gamma-regular.\footnote{This is
a mild additional regularity assumption on the selected equilibrium branch. It
requires $\mathcal C^1$ control only of the remainder in
\eqref{eq:v2_phi_expansion}, once the finitely many explicit Gamma-smoothed
terms have been separated.}

\begin{proposition}[Singularity of Gamma-regular solutions]
\label{prop:v2_phi_asymp}
For every Gamma-regular kernel solution,
\begin{equation*}
    \phi_\alpha(\tau)
    \sim
    -\frac{\tau^a}{1+\alpha G(0)}
    \qquad\text{as }\tau\downarrow0.
\end{equation*}
The same conclusion holds at the derivative level:
\begin{equation*}
    \phi_\alpha'(\tau)
    \sim
    -\frac{a\tau^{a-1}}{1+\alpha G(0)}.
\end{equation*}
Consequently, the critical H\"older exponent of the associated aggregate
position
\begin{equation*}
    \Pi_t^\alpha=\int_0^t\phi_\alpha(t-s)\,\diff B_s
\end{equation*}
is $H=a+1/2<1/2$.
\end{proposition}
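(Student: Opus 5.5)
Starting from the expansion \eqref{eq:v2_phi_expansion}, it suffices to show that every term inside the bracket other than $-K$ is $o(\tau^a)$, with derivative $o(\tau^{a-1})$, as $\tau\downarrow0$. Since $K=\tau^a+K_s$ with $K_s=o(\tau^a)$ and $K_s'=o(\tau^{a-1})$ by \eqref{eq:v2_rough_kernel_specification}, and $r_\alpha$ has the same control by \cref{def:admissible_equilibrium_kernel}, the key step concerns the Gamma-smoothed iterates $\mathcal S_\nu^mK$ for $1\le m\le n_0-1$. For these I would prove by induction on $m$ that $\mathcal S_\nu^m K(\tau)=C_m\tau^{a+m\xi}+(\text{bounded }\mathcal C^1\text{ part})+o(\tau^{a+m\xi})$ near $0$, with $a+m\xi<0$ for $m<n_0$. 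The computation uses the Gamma density: $\mathcal S_\nu f(\tau)=\int_\tau^\infty f(u)\,\frac{\beta^\xi}{\Gamma(\xi)}(u-\tau)^{\xi-1}e^{-\beta(u-\tau)}\,\diff u$, which is a Riemann--Liouville-type (Weyl) fractional integral of order $\xi$ twisted by an exponential. Applied to $u^{b}$ with $b<0$, $b\notin\mathbb Z$, it produces $\Gamma(-b-\xi)/\Gamma(-b)$-type multiples of $\tau^{b+\xi}$ when $b+\xi<0$, plus regular terms; the nonresonance condition \eqref{eq:v2_gamma_nonresonance} excludes logarithms. The upshot is that $\mathcal S_\nu^mK$ is $O(\tau^{a+m\xi})+O(1)=o(\tau^a)$ because $\xi>0$.

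For derivatives, I would differentiate under the integral after the substitution $u=\tau+\lambda$, which gives $\frac{\diff}{\diff\tau}\mathcal S_\nu f=\mathcal S_\nu f'$. This is justified by the assumptions $K_s'\in L^\infty([r,\infty))$ and that the $\tau^a$ part has an integrable derivative away from the origin, with exponential damping at infinity. Repeating the same estimate with $f'$ of order $u^{a-1}$ gives $(\mathcal S_\nu^mK)'=O(\tau^{a-1+m\xi})+O(1)=o(\tau^{a-1})$. Together these give $\phi_\alpha\sim-\tau^a/(1+\alpha G(0))$ and $\phi_\alpha'\sim -a\tau^{a-1}/(1+\alpha G(0))$. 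The main technical point is the uniform handling of the inductive remainder terms. I would first isolate the near-field contribution $\lambda\in(0,1)$, where the power law dominates, from the far field, which is smooth and bounded by the decay of $K$ and the boundedness on $[r,\infty)$.

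For the Hölder exponent, I would use the standard Volterra criterion. The variance of increments is
\[
\E|\Pi_{t+h}-\Pi_t|^2=\int_0^h\phi_\alpha(u)^2\,\diff u+\int_0^t(\phi_\alpha(u+h)-\phi_\alpha(u))^2\,\diff u .
\]
The kernel asymptotics give the bound $\lesssim h^{2H}$: the first term directly, and the second by splitting at $u=h$, using $|\phi_\alpha'(u)|\lesssim u^{a-1}$ for $u\ge h$ together with boundedness of $\phi_\alpha'$ away from $0$. Gaussianity and Kolmogorov--Chentsov then give local $\rho$-Hölder paths for all $\rho<H$. Conversely, the first term alone gives the lower bound $\E|\Pi_{t+h}-\Pi_t|^2\ge ch^{2H}$ for small $h$. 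Failure of $\rho$-Hölder continuity for $\rho>H$ then follows by a standard Gaussian argument: the scaled increments $(\Pi_{t+h}-\Pi_t)/h^\rho$ have variance blowing up, so by Gaussian tail and zero-one considerations the paths cannot be $\rho$-Hölder on any interval almost surely. Alternatively, this follows from the known Hölder characterization of Volterra processes whose kernel behaves like $\tau^{H-1/2}$ at the origin.
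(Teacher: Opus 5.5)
Your proposal is correct and follows essentially the same route as the paper. You isolate $-K$ in the Gamma-regular expansion. You then use Gamma asymptotics, the stability of $o(\tau^a)$ remainders under $\mathcal S_\nu$, and $(\mathcal S_\nu f)'=\mathcal S_\nu f'$ to show that each $\mathcal S_\nu^mK$ with $1\le m<n_0$ is $o(\tau^a)$, with derivative $o(\tau^{a-1})$. Finally you apply the Gaussian--Volterra H\"older criterion with $b=a$, using $\int_0^h\phi_\alpha^2\ge ch^{2H}$ for sharpness. Your induction on $m$ is unnecessary, because $\mathcal S_\nu^m$ is itself averaging against the Gamma$(m\xi,\beta)$ law, so the paper applies the one-step estimate directly with $\xi$ replaced by $m\xi$.
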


The Gamma-regular equilibrium kernel inherits the leading singularity of $K$,
so equilibrium trading preserves the signal's local roughness at the level of
aggregate positions. The proof is given in Appendix~\ref{app:regularity_proofs}.

\subsection{Away from the rough-signal cancellation condition}

By \eqref{eq:Ktilde_volterra},
\begin{equation*}
    X_t^\alpha
    =\int_0^t\mathcal K_\alpha(t-s)\,\diff B_s,
    \qquad
    S_t^\alpha=M_t+X_t^\alpha.
\end{equation*}
We first consider parameters away from the rough-signal cancellation
condition $G(0)(1-\alpha)=1$.

\begin{proposition}[Signal-driven regularity without cancellation]
\label{prop:v2_regularity_alpha}
Let $\phi_\alpha$ be Gamma-regular and suppose
$1-(1-\alpha)G(0)\neq0$. Then
\begin{align*}
    \mathcal K_\alpha(\tau)
    &\sim
    \frac{1-(1-\alpha)G(0)}{1+\alpha G(0)}\,\tau^a,
    \\
    \mathcal K_\alpha'(\tau)
    &\sim
    \frac{1-(1-\alpha)G(0)}{1+\alpha G(0)}\,a\tau^{a-1},
    \qquad \tau\downarrow0.
\end{align*}
Consequently, the critical H\"older exponent of $X^\alpha$ is
$H=a+1/2$.
\end{proposition}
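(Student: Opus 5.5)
Our approach is to start from the decomposition of \cref{prop:amplitude_factor}, $\mathcal K_\alpha=A(\alpha,G(0))K+\mathcal R_\alpha$, and show that every term of $\mathcal R_\alpha$ in \eqref{eq:amplitude_remainder} is $o(\tau^a)$, with derivative $o(\tau^{a-1})$, as $\tau\downarrow0$. Since $A(\alpha,G(0))\neq0$ by hypothesis and $K(\tau)\sim\tau^a$, $K'(\tau)\sim a\tau^{a-1}$ by \eqref{eq:v2_rough_kernel_specification}, the two displayed asymptotics then follow. The final step converts kernel asymptotics into the critical H\"older exponent, reusing the argument behind \cref{prop:v2_phi_asymp}, which is stated for a kernel with exactly this local shape.

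\textbf{Terms of $\mathcal R_\alpha$.} We treat them in three groups.
\begin{enumerate}[label=(\roman*)]
\item The horizon-averaged terms $\mathcal S_\nu K$ and $\mathcal S_\nu\phi_\alpha$. For the Gamma law, $\mathcal S_\nu(\tau^a)(\tau)=\int(\tau+\lambda)^a\nu(\diff\lambda)$ behaves near $0$ like $C+C'\tau^{a+\xi}$, or like $C\tau^{a+\xi}$ when $a+\xi<0$, where $C,C'$ are constants; nonresonance \eqref{eq:v2_gamma_nonresonance} excludes logarithms. In every case this is $o(\tau^a)$ because $\xi>0$. The derivative $\int a(\tau+\lambda)^{a-1}\nu(\diff\lambda)$ is similarly $O(\tau^{a-1+\xi})+O(1)$, hence $o(\tau^{a-1})$. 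The contributions of $K_s$ are handled by splitting $\lambda\le\tau$ and $\lambda>\tau$, using $K_s=o(\tau^a)$, $K_s'=o(\tau^{a-1})$ near $0$ and $K_s'\in L^\infty([r,\infty))$. For $\mathcal S_\nu\phi_\alpha$ we insert \eqref{eq:v2_phi_expansion}. Each piece $\mathcal S_\nu^{m+1}K$ gains a further factor $\tau^\xi$, so it is less singular than $\tau^{a+\xi}$. The remainder $\mathcal S_\nu r_\alpha$ is treated like $\mathcal S_\nu K_s$ via the $\mathcal C^1$ bounds of \cref{def:admissible_equilibrium_kernel}.
\item The impact convolutions $\mathcal H\phi_\alpha$ and $\mathcal S_\nu\mathcal H\phi_\alpha$. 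Since $G'$ is bounded near $0$ and $\phi_\alpha=O(\tau^a)$ by \cref{prop:v2_phi_asymp}, with $a>-1/2$, we get $|\mathcal H\phi_\alpha(\tau)|\le\|G'\|_\infty\int_0^\tau|\phi_\alpha|=O(\tau^{a+1})$. For the derivative, $(G'\ast\phi_\alpha)'=G'(0)\phi_\alpha+G''\ast\phi_\alpha$, which is $O(\tau^a)=o(\tau^{a-1})$ because $G''\in L^\infty$. The term $\mathcal S_\nu\mathcal H\phi_\alpha$ is then $\mathcal S_\nu$ applied to a function that is $O(\tau^{a+1})$ near $0$ and bounded away from $0$, so it is bounded with derivative $O(\tau^{a})$.
\item The boundedness away from the origin needed in (i) and (ii) comes from the $\mathcal B_0$/$\mathcal Y_K^0$ framework together with the stated assumptions on $r_\alpha$, $G''$ and $K_s'$.
\end{enumerate}

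\textbf{H\"older exponent.} With $\mathcal K_\alpha(\tau)\sim A\tau^a$ and $\mathcal K_\alpha'\sim Aa\tau^{a-1}$, the standard Volterra estimate gives
\[
\E|X_{t+h}-X_t|^2=\int_0^h\mathcal K_\alpha^2+\int_0^t\bigl(\mathcal K_\alpha(u+h)-\mathcal K_\alpha(u)\bigr)^2\diff u\asymp h^{2H}.
\]
The upper bound uses the derivative control together with the mean value theorem on $u>h$. The lower bound comes from the first integral, which is $\sim A^2h^{2a+1}/(2a+1)$. Gaussianity and Kolmogorov–Chentsov give local $\rho$-H\"older paths for $\rho<H$. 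For $\rho>H$, failure follows from the exact variance lower bound via the standard Gaussian argument, for example a law of the iterated logarithm or the quadratic-variation-type argument already used for \cref{prop:v2_phi_asymp}.

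The main obstacle is step (i): obtaining uniform derivative control of $\mathcal S_\nu$ applied to the remainder terms $K_s$ and $r_\alpha$, which are only known to be $o(\cdot)$ near $0$. We plan to split the $\lambda$-integral at $\lambda=\tau$ and use the Gamma density bound $\nu((0,\tau])=O(\tau^\xi)$ on the small-$\lambda$ part.
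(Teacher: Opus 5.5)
Your proposal is correct and follows the paper's proof. You use the same decomposition $\mathcal K_\alpha=A_\alpha K+\mathcal R_\alpha$, show each term of $\mathcal R_\alpha$ is $o(\tau^a)$ with derivative $o(\tau^{a-1})$ via Gamma smoothing together with $\mathcal H\phi_\alpha=O(\tau^{a+1})$ and $(\mathcal H\phi_\alpha)'=O(\tau^a)$, and then convert the lower bound $|\mathcal K_\alpha|\geq C\tau^a$ into the critical exponent. The differences are minor: the paper handles $\mathcal S_\nu\phi_\alpha$ through $q_\alpha=\phi_\alpha+\tau^a/(1+\alpha G(0))$ rather than term by term through the Gamma-regular expansion, and for $\rho>H$ it cites \cref{prop:volterra_holder}, which uses the Gaussian necessity theorem of Azmoodeh et al.\ rather than a quadratic-variation argument.
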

Without cancellation, the leading singularity of $K$ survives in
$\mathcal K_\alpha$, and the observed price retains the local roughness
$H<1/2$ of the original signal. This is incompatible with a Brownian-type regularity and motivates the analysis
of the cancellation regime in the next subsection.
% This is incompatible with the regularity of a
% martingale with the assumed Brownian-type regularity and motivates the analysis
% of the cancellation regime in the next subsection.

\subsection{Under the rough-signal cancellation condition}

This subsection concerns the cancellation regime
\begin{equation*}
    G(0)(1-\alpha)=1.
\end{equation*}
Under this condition, the leading singularity inherited from $K$ cancels, and the local behavior of $\mathcal K_\alpha$ is
governed by the competition between horizon smoothing and the Volterra-impact terms.

\begin{proposition}[Signal-driven regularity under cancellation]
\label{prop:v2_bottleneck}
Let $\phi_\alpha$ be Gamma-regular and suppose that $G(0)(1-\alpha)=1$.
We write $\ell_\alpha:=\lim_{\tau\downarrow0}\mathcal K_\alpha(\tau)$,
which exists whenever $H+\xi>1/2$. The following assertions hold.
\begin{itemize}
    \item If $0<\xi<1$, then $H+\xi\neq1/2$ by
    \eqref{eq:v2_gamma_nonresonance}, and there is a nonzero constant
    $C_\alpha$ such that
    \begin{equation*}
        \mathcal K_\alpha(\tau)
        =
        \begin{cases}
            C_\alpha\tau^{H+\xi-1/2}+o(\tau^{H+\xi-1/2}),
                & H+\xi<1/2,\\
            \ell_\alpha+C_\alpha\tau^{H+\xi-1/2}+o(\tau^{H+\xi-1/2}),
                & H+\xi>1/2,
        \end{cases}
    \end{equation*}
    The corresponding differentiated expansion also holds for
    $\mathcal K_\alpha'$. Hence the critical H\"older exponent of
    $X^\alpha$ is $\rho_X^*$, where
    \begin{equation*}
        \rho_X^*:=
        \begin{cases}
            H+\xi, & H+\xi<1/2,\\
            1/2, & H+\xi>1/2\text{ and }\ell_\alpha\neq0,\\
            \min(1,H+\xi), & H+\xi>1/2\text{ and }\ell_\alpha=0.
        \end{cases}
    \end{equation*}

    \item If $\xi\geq1$,
    \begin{equation*}
        \mathcal K_\alpha(\tau)=\ell_\alpha+O(\tau^{a+1}),
        \qquad
        \mathcal K_\alpha'(\tau)=O(\tau^a).
    \end{equation*}
    If $\ell_\alpha\neq0$, then the critical H\"older exponent of
    $X^\alpha$ is $1/2$. If $\ell_\alpha=0$, then $X^\alpha$ has
    locally $\rho$-H\"older paths almost surely for every $\rho<1$.
\end{itemize}
\end{proposition}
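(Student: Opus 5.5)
We start from the decomposition of \cref{prop:amplitude_factor}. Under $G(0)(1-\alpha)=1$ the coefficient $A(\alpha,G(0))$ vanishes, so $\mathcal K_\alpha=\mathcal R_\alpha$. We then insert the Gamma-regular expansion \eqref{eq:v2_phi_expansion} of $\phi_\alpha$ into \eqref{eq:amplitude_remainder}. Writing $\mathcal R_\alpha$ in terms of $K$, the explicit iterates $\mathcal S_\nu^mK$ and the remainder $r_\alpha$, the contributions split into three groups:
\begin{itemize}[leftmargin=*]
\item the horizon-averaged terms $\mathcal S_\nu K$ and $\mathcal S_\nu\phi_\alpha$;
\item the impact-convolution terms $\mathcal H\phi_\alpha$ and $\mathcal S_\nu\mathcal H\phi_\alpha$;
\item any untransformed multiples of $K$, $\mathcal S_\nu^mK$ and $r_\alpha$.
\end{itemize}
We should track the last group carefully, because cancellation means the $K$ coefficient is exactly zero and the next surviving term is $\mathcal S_\nu K$.

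\textbf{Step 1: Gamma smoothing lemma.} For $f=\tau^b+o(\tau^b)$ with $b>-1$, $b\notin\mathbb Z$, and matching derivative control, we show
\begin{equation*}
\mathcal S_\nu f(\tau)=\mathcal S_\nu f(0^+)\mathbbm 1_{\{b+\xi>0\}}+c_{b,\xi}\,\tau^{b+\xi}+o(\tau^{b+\xi}),
\end{equation*}
with an explicit constant $c_{b,\xi}=\beta^\xi\Gamma(b+1)\Gamma(-b-\xi)/\bigl(\Gamma(\xi)\Gamma(-b)\bigr)$. This is up to the $e^{-\beta\lambda}$ correction, which contributes only integer-order terms. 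The constant is nonzero by nonresonance \eqref{eq:v2_gamma_nonresonance}. The computation is the Beta integral $\int_0^\infty(\tau+\lambda)^b\lambda^{\xi-1}\,\diff\lambda$, after splitting $\lambda<\tau$ from $\lambda>\tau$ and subtracting the value at $0$ when $b+\xi>0$. The derivative version follows by differentiating under the integral, using $K_s'\in L^\infty([r,\infty))$.

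\textbf{Step 2: convolution terms.} These are $O(1)$. More precisely, $\mathcal H f(\tau)=G'(0)\int_0^\tau f+\dots=O(\tau^{a+1})$, with derivative $(\mathcal Hf)'=G'(0)f+G''\ast f=O(\tau^a)$. This uses $G''\in L^1\cap L^\infty$.

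\textbf{Step 3: leading term.} The leading non-constant behaviour of $\mathcal K_\alpha$ comes from $\mathcal S_\nu K$ and $\mathcal S_\nu\phi_\alpha$, whose common leading singular part is $\tau^{a+\xi}$. Their combined coefficient is $\frac{G(0)}{1+\alpha G(0)}\bigl(1-\frac{\alpha G(0)}{1+\alpha G(0)}\bigr)c_{a,\xi}=(1-c_\alpha)\frac{G(0)}{1+\alpha G(0)}c_{a,\xi}\neq0$. This gives $C_\alpha$.

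We must check that the iterates $\mathcal S_\nu^mK$ for $m\ge1$, and $r_\alpha$, yield only higher orders $\tau^{a+m\xi}$ with $m\ge2$ or $o(\tau^{a+\xi})$. For $r_\alpha$ we need a little-$o$ version of Step 1. The $O(\tau^{a+1})$ convolution pieces are subleading exactly when $\xi<1$.

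For $\xi<1$ this produces the two displayed expansions, depending on the sign of $a+\xi=H+\xi-1/2$. For $\xi\ge1$ the Gamma terms are $\ell+O(\tau^{a+\xi})\subset\ell+O(\tau^{a+1})$ (for $\tau\le1$), and the convolution terms give the stated $O(\tau^{a+1})$ and $O(\tau^a)$ bounds.

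\textbf{Step 4: Hölder exponents.} We pass from kernel asymptotics to Hölder exponents by the standard Gaussian criterion. We bound
\begin{equation*}
\E|X_{t+h}-X_t|^2=\int_0^h\mathcal K_\alpha^2+\int_0^t\bigl(\mathcal K_\alpha(u+h)-\mathcal K_\alpha(u)\bigr)^2\,\diff u.
\end{equation*}
The first integral is $\asymp\ell_\alpha^2h$ if $\ell_\alpha\neq0$. Otherwise it is $\asymp h^{2(H+\xi)}$ for $\xi<1$, or at most $h^{2a+3}$ for $\xi\ge1$. The second integral is controlled via the derivative expansion, which gives the matching order $h^{2\min(1,H+\xi)}$.

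The upper bounds together with Kolmogorov and Gaussian hypercontractivity give Hölder continuity. For the lower bounds (criticality), we use the explicit leading term together with a Gaussian law-of-iterated-logarithm or quadratic-variation argument at a fixed point.

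\textbf{Main obstacle.} The hard part is the bookkeeping in Step 3: ensuring that no other term cancels $C_\alpha$, and that remainders are uniformly subleading at the derivative level, in particular $\mathcal S_\nu r_\alpha$ with only $o$-control. We would first try to prove a single lemma stating that $\mathcal S_\nu$ maps $o(\tau^b)$ with derivative $o(\tau^{b-1})$ to $\text{const}+o(\tau^{b+\xi})$ with derivative $o(\tau^{b+\xi-1})$. The proof would split the $\lambda$-integral at $\lambda=\tau$ and use dominated convergence.
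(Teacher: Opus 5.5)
Your proposal follows essentially the same route as the paper. The direct $K$-coefficient vanishes. The horizon-averaged terms are handled by the Gamma power asymptotics plus a little-$o$ smoothing lemma (the paper's \cref{prop:gamma_smoothing_lower_order}). The impact-convolution terms are $O(\tau^{a+1})$ with derivative $O(\tau^a)$. The leading coefficient is $(1-\alpha)\kappa_{a,\xi}$, and \cref{prop:volterra_holder} turns the kernel asymptotics into H\"older exponents. The paper only streamlines your bookkeeping: it writes $\phi_\alpha=-(1-\alpha)K+\rho_\alpha$ under cancellation and smooths the whole $o(\tau^a)$ remainder $\rho_\alpha$ (iterates and $r_\alpha$ together) in one step.

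One harmless slip: the Beta integral gives $c_{b,\xi}=\beta^\xi\Gamma(-b-\xi)/\Gamma(-b)$, without your extra factor $\Gamma(b+1)/\Gamma(\xi)$, and only the constant's nonvanishing is used.
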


The relevant threshold is $\xi=\frac12-H$. If
$0<\xi<\frac12-H$, cancellation raises the local regularity of both
$X^\alpha$ and $S^\alpha$ from $H$ to $H+\xi$, but the observed price remains
rougher than the martingale component. Above the threshold, the critical
H\"older exponent of $X^\alpha$ is $1/2$ when $\ell_\alpha\neq0$. When
$\ell_\alpha=0$ and $0<\xi<1$, it is $\min(1,H+\xi)$, whereas for $\xi\geq1$,
$X^\alpha$ has locally $\rho$-H\"older paths almost surely for every $\rho<1$. Whenever $\xi > \frac{1}{2} - H$, the paths of the
full price are almost surely locally $\rho$-H\"older for every $\rho<1/2$.\\

\Cref{prop:v2_bottleneck} therefore provides a natural condition on the
distribution of forecast horizons for the observed price to be compatible with
an efficient market, in the local regularity sense considered here. Under
rough-signal cancellation, this occurs when $\xi>\frac12-H$. The horizon
distribution then assigns sufficiently little mass to arbitrarily short horizons
for horizon averaging to make the signal-driven component at least as regular
as the martingale component. This includes Gamma laws with
$\frac12-H<\xi<1$, whose density remains singular at zero, as well as the
exponential law $\xi=1$ and Gamma laws with $\xi>1$.
When $0<\xi<\frac12-H$, by contrast, the distribution is too concentrated near
zero. Agents then react predominantly to short-horizon forecasts that closely
track the rough predictable signal. Aggregating their positions consequently
provides insufficient smoothing, and part of the signal roughness survives in
prices. Cancellation still raises the local regularity from $H$ to $H+\xi$,
but the observed price remains rougher than its martingale component because
$H+\xi<1/2$. Finally, the rate $\beta$ changes the overall horizon scale
without changing the density's power at zero, and therefore does not affect
this local efficiency criterion.

\section*{Acknowledgments}

The authors gratefully acknowledge support from the ILB Chair
\emph{Artificial Intelligence and Quantitative Methods for Finance} at
University Paris Dauphine--PSL.

\bibliographystyle{plainnat}
\bibliography{references}

\appendix

\makeatletter
\renewcommand{\@seccntformat}[1]{%
  \ifstrequal{#1}{section}
    {Appendix~\csname the#1\endcsname.\quad}
    {\csname the#1\endcsname\quad}}
\makeatother

\section{Existence and uniqueness in \texorpdfstring{$\mathcal Y_K^0$}{Y K 0}}
\label{app:coboundary_existence}

\subsection{Resolvent on \texorpdfstring{$\mathcal B_0$}{B0}}
\label{subsec:app_resolvent_B0}

\begin{lemma}
\label{lem:app_shift_tail}
The operator $\mathcal S_\nu$ maps $\mathcal B_0$ continuously into itself,
with
\begin{equation*}
\norm{\mathcal S_\nu f}_{\mathcal B_0}
\le\norm{f}_{\mathcal B_0}.
\end{equation*}
Moreover,
\begin{equation*}
\norm{\mathcal S_\nu^n f}_{\mathcal B_0}\longrightarrow0,
\qquad n\to\infty,
\end{equation*}
for every $f\in\mathcal B_0$.
\end{lemma}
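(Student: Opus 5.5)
We first check that $\mathcal S_\nu$ maps $\mathcal B_0$ into itself and is a contraction. For $f\in\mathcal B_0$ and $\tau>0$, the integrand $\lambda\mapsto f(\tau+\lambda)$ is Borel and bounded by $\norm{f}_{\mathcal B_0}$. Since $\nu$ is a probability measure, $\mathcal S_\nu f(\tau)$ is well defined and satisfies $|\mathcal S_\nu f(\tau)|\le\norm{f}_{\mathcal B_0}$. Borel measurability of $\tau\mapsto\mathcal S_\nu f(\tau)$ follows from Tonelli--Fubini, because $(\tau,\lambda)\mapsto f(\tau+\lambda)$ is jointly Borel. Vanishing at infinity follows from dominated convergence: for each fixed $\lambda$, $f(\tau+\lambda)\to0$ as $\tau\to\infty$, and the constant $\norm{f}_{\mathcal B_0}$ is $\nu$-integrable. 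Together with linearity, this gives the contraction bound.

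For the decay of the iterates, the key observation is that $\mathcal S_\nu^n f(\tau)=\E[f(\tau+\Lambda_1+\dots+\Lambda_n)]$, where $\Lambda_i$ are i.i.d.\ with law $\nu$. Equivalently, $\mathcal S_\nu^n f(\tau)=\int f(\tau+s)\,\nu^{*n}(\diff s)$, which follows by induction using Fubini. Let $\varepsilon>0$. We choose $R$ such that $|f(t)|\le\varepsilon$ for $t\ge R$. Then, uniformly in $\tau>0$,
\begin{equation*}
|\mathcal S_\nu^n f(\tau)|\le \varepsilon+\norm{f}_{\mathcal B_0}\,\Prob(\tau+\Lambda_1+\dots+\Lambda_n<R)\le\varepsilon+\norm{f}_{\mathcal B_0}\,\Prob(\Lambda_1+\dots+\Lambda_n<R).
\end{equation*}
It therefore suffices to show that $\Prob(\Lambda_1+\dots+\Lambda_n<R)\to0$ for each fixed $R$.

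This last step is the only real point, and we expect it to be the main obstacle. We handle it with a Chernoff-type bound that uses $\nu((0,\infty))=1$ and holds even without a finite mean. For any $\theta>0$, Markov's inequality gives
\begin{equation*}
\Prob\Bigl(\sum_{i=1}^n\Lambda_i<R\Bigr)\le e^{\theta R}\,\bigl(\E[e^{-\theta\Lambda_1}]\bigr)^n.
\end{equation*}
Since $\Lambda_1>0$ almost surely, we have $m:=\E[e^{-\theta\Lambda_1}]<1$, so the right-hand side tends to $0$ as $n\to\infty$. An alternative route is to pick $\delta$ with $\nu((\delta,\infty))=p>0$ and bound the probability by the chance that fewer than $R/\delta$ of the $\Lambda_i$ exceed $\delta$, which is a binomial tail tending to zero. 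Either way, $\limsup_n\norm{\mathcal S_\nu^n f}_{\mathcal B_0}\le\varepsilon$, and since $\varepsilon$ is arbitrary the claim follows.
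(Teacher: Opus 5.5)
Your proof is correct and follows essentially the same route as the paper: both use the representation $\mathcal S_\nu^n f(\tau)=\E[f(\tau+\Lambda_1+\cdots+\Lambda_n)]$, split at a level $R$ where $|f|\le\varepsilon$, and reduce the claim to $\Prob(\Lambda_1+\cdots+\Lambda_n<R)\to0$. The only real difference is that you prove this last step explicitly with a Chernoff bound, which gives geometric decay in $n$ with no moment assumption on $\nu$, whereas the paper simply asserts that the partial sums diverge almost surely.
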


\iffalse
\begin{proof}
Boundedness and Borel measurability are preserved by integration against the
probability measure $\nu$, and
\begin{equation*}
|\mathcal S_\nu f(t)|
\le\int_{(0,\infty)}\norm{f}_{\mathcal B_0}\,\nu(d\lambda)
=\norm{f}_{\mathcal B_0}.
\end{equation*}
If $f(t)\to0$, then for every $\varepsilon>0$ there is $R$ such that
$|f(x)|\le\varepsilon$ for $x\ge R$. Hence
$|\mathcal S_\nu f(t)|\le\varepsilon$ for $t\ge R$, so
$\mathcal S_\nu f$ also vanishes at infinity.

Let $\Lambda_1,\Lambda_2,\ldots$ be independent with common law $\nu$. Then
\begin{equation*}
\mathcal S_\nu^n f(t)
=\E\bigl[f(t+\Lambda_1+\cdots+\Lambda_n)\bigr].
\end{equation*}
Because $(\Lambda_i)_{i\geq1}$ are i.i.d. and $\nu$ is supported on
$(0,\infty)$, the partial sums tend to infinity almost surely. For any $R>0$,
\begin{align*}
\sup_{t>0}
\E\bigl[|f(t+\Lambda_1+\cdots+\Lambda_n)|\bigr]
&\le
\sup_{x\ge R}|f(x)|
\\
&\quad+
\norm{f}_{\mathcal B_0}
\Prob(\Lambda_1+\cdots+\Lambda_n<R).
\end{align*}
Taking $n\to\infty$ and subsequently $R\to\infty$ proves
$\norm{\mathcal S_\nu^n f}_{\mathcal B_0}\to0$.
\end{proof}
\fi

\begin{proof}
Measurability is preserved under integration, while
$|\mathcal S_\nu f(t)|\leq\sup_{x\geq t}|f(x)|$. Thus $\mathcal S_\nu$ is a
contraction on $\mathcal B_0$ and preserves decay at infinity.
Let $S_n:=\Lambda_1+\cdots+\Lambda_n$ for i.i.d.\ random variables with law
$\nu$. Since $\nu$ is supported on $(0,\infty)$, $S_n\to\infty$ almost surely,
and, for every $R>0$,
\begin{equation*}
\norm{\mathcal S_\nu^n f}_{\mathcal B_0}
\leq
\sup_{x\geq R}|f(x)|
+\norm{f}_{\mathcal B_0}\Prob(S_n<R).
\end{equation*}
Letting first $n$ and then $R$ tend to infinity proves the claim.
\end{proof}

For $c\in[0,1]$, define formally
\begin{equation*}
Q_cf:=\sum_{n\ge0}c^n\mathcal S_\nu^n\mathcal D_\nu f,
\qquad
\mathcal D_\nu=\mathcal S_\nu-\mathrm{Id}.
\end{equation*}

\begin{lemma}
\label{lem:app_Qc_B0}
For every $f\in\mathcal B_0$, the series defining $Q_cf$ converges in
$\mathcal B_0$ and
\begin{equation*}
\norm{Q_cf}_{\mathcal B_0}\le2\norm{f}_{\mathcal B_0}.
\end{equation*}
It satisfies
\begin{equation}
\label{eq:app_Qc_identity_B0}
(\mathrm{Id}-c\mathcal S_\nu)Q_cf=\mathcal D_\nu f.
\end{equation}
At $c=1$, $Q_1=-\mathrm{Id}$, and for every fixed $f\in\mathcal B_0$,
\begin{equation*}
Q_cf\longrightarrow-f
\qquad\text{in }\mathcal B_0
\qquad\text{as }c\uparrow1.
\end{equation*}
\end{lemma}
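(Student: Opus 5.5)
The key observation is that the series telescopes. Since $\mathcal D_\nu=\mathcal S_\nu-\mathrm{Id}$, the partial sums are
\begin{equation*}
\sum_{n=0}^{N}c^n\mathcal S_\nu^n\mathcal D_\nu f
=\sum_{n=0}^{N}c^n\bigl(\mathcal S_\nu^{n+1}f-\mathcal S_\nu^nf\bigr)
=-f+(1-c)\sum_{n=1}^{N}c^{n-1}\mathcal S_\nu^nf+c^N\mathcal S_\nu^{N+1}f,
\end{equation*}
obtained by summation by parts. I will work with this rewritten form throughout, using \cref{lem:app_shift_tail} for the two facts that $\mathcal S_\nu$ is a contraction on $\mathcal B_0$ and that $\norm{\mathcal S_\nu^nf}_{\mathcal B_0}\to0$.

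\emph{Convergence and the norm bound.} For $c\in[0,1)$, the middle sum converges absolutely in $\mathcal B_0$, since its terms are bounded by $(1-c)c^{n-1}\norm{f}_{\mathcal B_0}$. The boundary term is bounded by $c^N\norm{f}_{\mathcal B_0}\to0$. For $c=1$, the middle sum vanishes, and the boundary term is $\mathcal S_\nu^{N+1}f$, which tends to $0$ by the second part of \cref{lem:app_shift_tail}. In every case the series converges, and
\begin{equation*}
Q_cf=-f+(1-c)\sum_{n\ge1}c^{n-1}\mathcal S_\nu^nf.
\end{equation*}
This gives $Q_1=-\mathrm{Id}$ and, since $(1-c)\sum_{n\ge1}c^{n-1}=1$,
\begin{equation*}
\norm{Q_cf}_{\mathcal B_0}\le\norm{f}_{\mathcal B_0}+\norm{f}_{\mathcal B_0}=2\norm{f}_{\mathcal B_0}.
\end{equation*}
Completeness of $\mathcal B_0$ under the sup norm is clear, because uniform limits of bounded Borel functions vanishing at infinity keep both properties.

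\emph{The identity.} Since $\mathcal S_\nu$ is bounded on $\mathcal B_0$, it commutes with the convergent series, and
\begin{equation*}
(\mathrm{Id}-c\mathcal S_\nu)Q_cf=\sum_{n\ge0}c^n\mathcal S_\nu^n\mathcal D_\nu f-\sum_{n\ge0}c^{n+1}\mathcal S_\nu^{n+1}\mathcal D_\nu f=\mathcal D_\nu f.
\end{equation*}
This telescoping relies only on convergence of the series in $\mathcal B_0$, which covers $c=1$ as well. At $c=1$ one can also check directly that $(\mathrm{Id}-\mathcal S_\nu)(-f)=\mathcal D_\nu f$.

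\emph{Limit as $c\uparrow1$.} From the rewritten form,
\begin{equation*}
\norm{Q_cf+f}_{\mathcal B_0}\le(1-c)\sum_{n\ge1}c^{n-1}\norm{\mathcal S_\nu^nf}_{\mathcal B_0}.
\end{equation*}
This is an Abel mean of the sequence $\norm{\mathcal S_\nu^nf}_{\mathcal B_0}$, which tends to $0$. Given $\varepsilon>0$, choose $N$ with $\norm{\mathcal S_\nu^nf}_{\mathcal B_0}\le\varepsilon$ for $n>N$. Split the sum there: the first $N$ terms contribute at most $(1-c)N\norm{f}_{\mathcal B_0}\to0$, and the tail contributes at most $\varepsilon$. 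Hence $Q_cf\to-f$ in $\mathcal B_0$. The only delicate step is the case $c=1$, together with this Abel limit, since the geometric factor no longer gives convergence on its own. Both are handled by the decay $\norm{\mathcal S_\nu^nf}_{\mathcal B_0}\to0$ from \cref{lem:app_shift_tail}.
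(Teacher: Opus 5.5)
Your proposal is correct and follows essentially the same route as the paper. Both use summation by parts to get $Q_cf=-f+(1-c)\sum_{m\ge1}c^{m-1}\mathcal S_\nu^mf$ for $c<1$, telescoping to $\mathcal S_\nu^{N+1}f-f$ at $c=1$ via \cref{lem:app_shift_tail}, applying the bounded operator $\mathrm{Id}-c\mathcal S_\nu$ to the partial sums, and the Abel-mean argument as $c\uparrow1$; the one slip is that $(1-c)\sum_{n\ge1}c^{n-1}=1$ holds only for $c<1$, but at $c=1$ the bound is immediate from $\norm{Q_1f}_{\mathcal B_0}=\norm{f}_{\mathcal B_0}$.
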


\iffalse
\begin{proof}
For $0\le c<1$, summation by parts gives
\begin{align*}
\sum_{n=0}^N c^n\mathcal S_\nu^n\mathcal D_\nu f
&=-f
+(1-c)\sum_{m=1}^N c^{m-1}\mathcal S_\nu^m f
+c^N\mathcal S_\nu^{N+1}f.
\end{align*}
The middle term converges absolutely in $\mathcal B_0$, and the last term
vanishes by \cref{lem:app_shift_tail}. Thus
\begin{equation*}
Q_cf
=-f+(1-c)\sum_{m=1}^\infty c^{m-1}\mathcal S_\nu^m f,
\end{equation*}
which gives the stated bound. At $c=1$, the partial sums telescope:
\begin{equation*}
\sum_{n=0}^N\mathcal S_\nu^n\mathcal D_\nu f
=\mathcal S_\nu^{N+1}f-f
\longrightarrow-f.
\end{equation*}
Equation~\eqref{eq:app_Qc_identity_B0} follows from the convergent series.
Finally, the second term in \eqref{eq:app_Qc_Abel_B0} tends to zero as
$c\uparrow1$ because it is the Abel mean of the norm-null sequence
$(\mathcal S_\nu^m f)_{m\ge1}$.
\end{proof}
\fi

\begin{proof}
For $c<1$, summation by parts and the contraction property of
$\mathcal S_\nu$ give
\begin{equation*}
Q_cf
=-f+(1-c)\sum_{m=1}^\infty c^{m-1}\mathcal S_\nu^m f,
\end{equation*}
which proves convergence and the stated bound. At $c=1$, the partial sums are
$\mathcal S_\nu^{N+1}f-f$ and converge to $-f$ by
\cref{lem:app_shift_tail}. Applying the bounded operator
$\mathrm{Id}-c\mathcal S_\nu$ to the convergent partial sums gives
\eqref{eq:app_Qc_identity_B0}. Finally,
$\mathcal S_\nu^m f\to0$ in $\mathcal B_0$, so its Abel means converge to
zero, proving that $Q_cf\to-f$ as $c\uparrow1$.
\end{proof}

\subsection{A finite extension of \texorpdfstring{$\mathcal B_0$}{B0}}
\label{subsec:app_existence_theorem}

Assume that $\mathcal S_\nu^nK\in\mathcal B_0$ for some $n\geq0$, and let
$n_0$ be the smallest such integer.

\begin{lemma}
\label{lem:app_YK_direct_sum}
The following finite extension is the direct sum
\begin{equation*}
\mathcal Y_K^0
=\mathcal B_0\oplus
\mathrm{span}\{K,\mathcal S_\nu K,\ldots,\mathcal S_\nu^{n_0-1}K\}.
\end{equation*}
Every $f\in\mathcal Y_K^0$ therefore has a unique representation
$f=b+\sum_{j=0}^{n_0-1}a_j\mathcal S_\nu^jK$, with
$b\in\mathcal B_0$, $a_j\in\R$ for $0\leq j<n_0$, and the norm
\begin{equation}
\label{eq:app_YK_direct_norm}
\norm{f}_{\mathcal Y_K^0}
:=\norm{b}_{\mathcal B_0}+\sum_{j=0}^{n_0-1}|a_j|
\end{equation}
makes $\mathcal Y_K^0$ a Banach space.
\end{lemma}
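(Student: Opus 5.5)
The proof splits into two parts: first that the sum is direct, then that the given norm is complete. Both rest on the minimality of $n_0$.

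\emph{Directness.} We must show that if $b\in\mathcal B_0$ and $b+\sum_{j=0}^{n_0-1}a_j\mathcal S_\nu^jK=0$, then all $a_j=0$ (and hence $b=0$). Suppose not, and let $k$ be the largest index with $a_k\neq0$. Apply $\mathcal S_\nu^{n_0-1-k}$ to the relation. This is legitimate because each $\mathcal S_\nu^jK$ is well defined by assumption, and $\mathcal S_\nu$ is linear and preserves $\mathcal B_0$ by \cref{lem:app_shift_tail}. We obtain
\begin{equation*}
a_k\mathcal S_\nu^{n_0-1}K
=-\mathcal S_\nu^{n_0-1-k}b-\sum_{j<k}a_j\mathcal S_\nu^{n_0-1-k+j}K .
\end{equation*}
Every term on the right lies in $\mathcal B_0$ except those involving $\mathcal S_\nu^{m}K$ with $m\le n_0-2$. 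The cleaner route is therefore an induction downward using $\mathcal S_\nu^{n_0}K\in\mathcal B_0$.

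Take instead the smallest index $i$ with $a_i\neq0$, and apply $\mathcal S_\nu^{n_0-1-i}$. Then every term $a_j\mathcal S_\nu^{n_0-1-i+j}K$ with $j>i$ has exponent at least $n_0$. Writing $\mathcal S_\nu^{n_0+r}K=\mathcal S_\nu^r(\mathcal S_\nu^{n_0}K)$, these terms lie in $\mathcal B_0$ by \cref{lem:app_shift_tail}, and so does $\mathcal S_\nu^{n_0-1-i}b$. Hence $a_i\mathcal S_\nu^{n_0-1}K\in\mathcal B_0$, so $\mathcal S_\nu^{n_0-1}K\in\mathcal B_0$, contradicting the minimality of $n_0$. (If $n_0=0$ there is nothing to prove.)

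The one technical point is justifying that $\mathcal S_\nu$ may be applied termwise to such combinations, i.e.\ that the relevant integrals converge absolutely. This holds because each piece is either bounded or one of the finitely many functions $\mathcal S_\nu^mK$, which are well defined by hypothesis. This is the only place where care is needed, and I expect it to be the main obstacle: one must read ``$\mathcal S_\nu^nK$ exists'' as absolute integrability at each step.

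\emph{Completeness.} Uniqueness of the representation makes the norm $\norm{b}_{\mathcal B_0}+\sum_j|a_j|$ well defined, and it is clearly a norm. The space is isometric to $\mathcal B_0\times\R^{n_0}$ with the $\ell^1$-sum norm. $\mathcal B_0$ is complete, being a closed subspace of the bounded Borel functions under the supremum norm: uniform limits preserve Borel measurability, boundedness and vanishing at infinity. A finite product of Banach spaces is Banach, so a Cauchy sequence $(f_k)$ has Cauchy components $b_k$ and $a_{j,k}$ converging to some $b$ and $a_j$. The function $f=b+\sum_j a_j\mathcal S_\nu^jK\in\mathcal Y_K^0$ is then the limit of $(f_k)$, which proves completeness.
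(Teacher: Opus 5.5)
Your proposal is correct and matches the paper's proof. You take the smallest index $i$ with $a_i\neq0$ and apply $\mathcal S_\nu^{n_0-1-i}$; this pushes every other term into $\mathcal B_0$ via $\mathcal S_\nu^{n_0+r}K=\mathcal S_\nu^r(\mathcal S_\nu^{n_0}K)$ and \cref{lem:app_shift_tail}, contradicting the minimality of $n_0$, and you get completeness from the isometry with $\mathcal B_0\times\R^{n_0}$ under the $\ell^1$-sum norm, which the paper only asserts. You can drop the abandoned largest-index attempt entirely.
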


\iffalse
\begin{proof}
The case $n_0=0$ is immediate. Suppose $n_0\geq1$,
$\sum_{j=0}^{n_0-1}a_jv_j\in\mathcal B_0$, and let $j_0$ be the
smallest index for which $a_{j_0}\neq0$. Applying
$\mathcal S_\nu^{n_0-1-j_0}$ shows that
\begin{equation*}
    a_{j_0}v_{n_0-1}\in\mathcal B_0,
\end{equation*}
because every term with larger index has reached $\mathcal B_0$ and
$\mathcal S_\nu$ preserves $\mathcal B_0$. This contradicts the minimality of
$n_0$and the sum is therefore direct. 
\end{proof}
\fi

\begin{proof}
The case $n_0=0$ is immediate. Otherwise, suppose that
\begin{equation*}
\sum_{j=0}^{n_0-1}a_j\mathcal S_\nu^jK\in\mathcal B_0,
\end{equation*}
and let $j_0$ be the first index with $a_{j_0}\neq0$. After applying
$\mathcal S_\nu^{n_0-1-j_0}$, every term except
$a_{j_0}\mathcal S_\nu^{n_0-1}K$ belongs to $\mathcal B_0$. Hence
$\mathcal S_\nu^{n_0-1}K\in\mathcal B_0$, contradicting the minimality of
$n_0$. The sum is therefore direct and the norm
\eqref{eq:app_YK_direct_norm} makes it complete.
\end{proof}

\begin{lemma}[Abel resolvent extension]
\label{lem:app_Qc_extension}
For every $c\in[0,1]$, the series
\begin{equation*}
Q_cf:=\sum_{n\ge0}c^n\mathcal S_\nu^n\mathcal D_\nu f
\end{equation*}
converges for every $f\in\mathcal Y_K^0$ and thereby extends the operator
$Q_c$ constructed on $\mathcal B_0$ in \cref{lem:app_Qc_B0} to a bounded
operator on $\mathcal Y_K^0$. It satisfies
\begin{equation}
\label{eq:app_Qc_extension_identity}
(\mathrm{Id}-c\mathcal S_\nu)Q_c=\mathcal D_\nu
\qquad\text{on }\mathcal Y_K^0.
\end{equation}
Moreover, $Q_1=-\mathrm{Id}$ and
\begin{equation*}
Q_cf\longrightarrow-f
\qquad\text{in }\mathcal Y_K^0
\qquad\text{as }c\uparrow1
\end{equation*}
for every fixed $f\in\mathcal Y_K^0$.
\end{lemma}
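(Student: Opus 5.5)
Write $v_j:=\mathcal S_\nu^jK$ for $0\le j\le n_0$, so that $v_{n_0}\in\mathcal B_0$. By \cref{lem:app_YK_direct_sum}, every $f\in\mathcal Y_K^0$ decomposes uniquely as $f=b+\sum_{j<n_0}a_jv_j$. Since $Q_c$ is linear and already handled on $\mathcal B_0$ by \cref{lem:app_Qc_B0}, it suffices to treat each $v_j$ separately. The plan is to compute the partial sums of $Q_cv_j$ explicitly by the same summation by parts as in \cref{lem:app_Qc_B0}:
\begin{equation*}
\sum_{n=0}^N c^n\mathcal S_\nu^n\mathcal D_\nu v_j
=-v_j+(1-c)\sum_{m=1}^{N}c^{m-1}v_{j+m}+c^N v_{j+N+1},
\end{equation*}
where $v_{j+m}=\mathcal S_\nu^{m}v_j$.

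\textbf{Step 1: decompose the partial sums.} For $m\ge n_0-j$, the terms $v_{j+m}$ lie in $\mathcal B_0$, with $\norm{v_{j+m}}_{\mathcal B_0}\le\norm{v_{n_0}}_{\mathcal B_0}$ by \cref{lem:app_shift_tail}. For $m<n_0-j$ they are basis vectors of the finite part. For $c<1$ the tail series in $\mathcal B_0$ converges absolutely. The boundary term $c^Nv_{j+N+1}$ lies in $\mathcal B_0$ once $N\ge n_0-j$, and it tends to $0$ there: trivially if $c<1$, and by \cref{lem:app_shift_tail} if $c=1$. This yields, for $c<1$,
\begin{equation*}
Q_cv_j=-v_j+(1-c)\sum_{m=1}^{n_0-j-1}c^{m-1}v_{j+m}
+(1-c)\sum_{m\ge n_0-j}c^{m-1}v_{j+m}.
\end{equation*}
The last sum is a $\mathcal B_0$ element with norm at most $\norm{v_{n_0}}_{\mathcal B_0}$. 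At $c=1$ the series telescopes directly to $-v_j$. Convergence holds in the norm \eqref{eq:app_YK_direct_norm}, because the finite-coordinate part stabilizes and the $\mathcal B_0$ part converges.

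\textbf{Step 2: boundedness.} On the finite part, $Q_c$ is a linear map from a finite-dimensional space into $\mathcal Y_K^0$, so it is bounded. The bound obtained above is in fact uniform in $c\in[0,1]$. Combined with $\norm{Q_cb}\le2\norm{b}$ from \cref{lem:app_Qc_B0}, this gives boundedness of $Q_c$ on all of $\mathcal Y_K^0$.

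\textbf{Step 3: the identity \eqref{eq:app_Qc_extension_identity}.} First I will check that $\mathcal S_\nu$ is bounded on $\mathcal Y_K^0$. It maps $v_j\mapsto v_{j+1}$, which is either a basis vector or, for $j=n_0-1$, the element $v_{n_0}\in\mathcal B_0$; on $\mathcal B_0$ it is a contraction. Then applying $\mathrm{Id}-c\mathcal S_\nu$ to the partial sums telescopes to $\mathcal D_\nu f-c^{N+1}\mathcal S_\nu^{N+1}\mathcal D_\nu f$. The remainder tends to zero by the same argument as in Step 1, and continuity of $\mathrm{Id}-c\mathcal S_\nu$ lets us pass to the limit, giving the identity.

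\textbf{Step 4: the limit $c\uparrow1$, the main point to check.} The finite-coordinate coefficients $(1-c)c^{m-1}$ tend to $0$. The $\mathcal B_0$ tail is $(1-c)\sum_{m\ge n_0-j}c^{m-1}v_{j+m}$, an Abel mean of the sequence $(v_{j+m})$, which tends to $0$ in $\mathcal B_0$ by \cref{lem:app_shift_tail}. Abel means of a null sequence vanish: split the sum at a large index $M$, bound the tail by $\sup_{m\ge M}\norm{v_{j+m}}$, and note that the finite head carries a factor $1-c$. Hence $Q_cv_j\to-v_j$ in $\mathcal Y_K^0$. Together with \cref{lem:app_Qc_B0} on $\mathcal B_0$, linearity gives $Q_cf\to-f$ for every fixed $f\in\mathcal Y_K^0$.
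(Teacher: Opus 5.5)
Your proof is correct. The summation-by-parts formula, the stabilization of the finite coordinates once $N\ge n_0-j-1$, the bound $(1-c)\sum_{m\ge n_0-j}c^{m-1}\le 1$ on the $\mathcal B_0$ tail, and the split-at-$M$ Abel-mean argument all hold.

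The paper relies on the same underlying fact (after $n_0$ shifts everything lies in $\mathcal B_0$) but uses a different decomposition. It does not expand $f$ along the basis $v_j$ and redo summation by parts for each one. Instead it splits the series at the index $n_0$,
\begin{equation*}
Q_cf=\sum_{n=0}^{n_0-1}c^n\mathcal S_\nu^n\mathcal D_\nu f+c^{n_0}Q_c^{(0)}\mathcal S_\nu^{n_0}f,
\end{equation*}
and uses only two facts: $\mathcal S_\nu$ is bounded on $\mathcal Y_K^0$, and $\mathcal S_\nu^{n_0}$ maps $\mathcal Y_K^0$ boundedly into $\mathcal B_0$. Each conclusion then follows from \cref{lem:app_Qc_B0} used as a black box:
\begin{itemize}
\item Convergence and boundedness come from the boundedness of $Q_c^{(0)}$ on $\mathcal B_0$.
\item The identity comes from \eqref{eq:app_Qc_identity_B0} and the commutation $\mathcal D_\nu\mathcal S_\nu^{n_0}=\mathcal S_\nu^{n_0}\mathcal D_\nu$.
\item $Q_1=-\mathrm{Id}$ holds because the finite head telescopes to $\mathcal S_\nu^{n_0}f-f$.
\item The limit $c\uparrow1$ follows from $Q_c^{(0)}g\to-g$.
\end{itemize}
So the Abel-mean argument never has to be repeated, and the proof is a few lines.

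Your coordinate computation is longer but more explicit. For $c<1$ it shows that $Q_cf=-f+(1-c)\sum_{m\ge1}c^{m-1}\mathcal S_\nu^mf$ on all of $\mathcal Y_K^0$, with the series converging in the direct-sum norm. This extends to $\mathcal Y_K^0$ the Abel representation $Q_\alpha h=-h+\mathcal A_\alpha h$ that the paper later uses only for $h\in\mathcal B_0$.
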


\iffalse
\begin{proof}
Let $Q_c^{(0)}$ denote the operator on $\mathcal B_0$ constructed in
\cref{lem:app_Qc_B0}. Since $\mathcal S_\nu$ preserves $\mathcal B_0$ and
$\mathcal S_\nu^{n_0+j}K
=\mathcal S_\nu^j(\mathcal S_\nu^{n_0}K)\in\mathcal B_0$, one has
\begin{equation*}
\mathcal S_\nu^{n_0}\mathcal Y_K^0\subset\mathcal B_0.
\end{equation*}
Moreover, the direct-sum norm shows that
$\mathcal S_\nu:\mathcal Y_K^0\to\mathcal Y_K^0$ and
$\mathcal S_\nu^{n_0}:\mathcal Y_K^0\to\mathcal B_0$ are bounded. Hence the
formal tail decomposition
\begin{equation*}
Q_c
:=
\sum_{n=0}^{n_0-1}c^n\mathcal S_\nu^n\mathcal D_\nu
+c^{n_0}Q_c^{(0)}\mathcal S_\nu^{n_0}
\end{equation*}
extends $Q_c^{(0)}$ to a bounded operator on $\mathcal Y_K^0$.

Applying $\mathrm{Id}-c\mathcal S_\nu$ to the finite-prefix identity yields
\begin{align*}
(\mathrm{Id}-c\mathcal S_\nu)Q_c
&=\mathcal D_\nu
-c^{n_0}\mathcal S_\nu^{n_0}\mathcal D_\nu
+c^{n_0}\mathcal D_\nu\mathcal S_\nu^{n_0}
=\mathcal D_\nu,
\end{align*}
which proves \eqref{eq:app_Qc_extension_identity}. At $c=1$,
$Q_1^{(0)}=-\mathrm{Id}$ and, for every $f\in\mathcal Y_K^0$, telescoping
gives
\begin{equation*}
Q_1f
=\bigl(\mathcal S_\nu^{n_0}f-f\bigr)
-\mathcal S_\nu^{n_0}f
=-f.
\end{equation*}
As $c\uparrow1$, the finite prefix converges to
$\mathcal S_\nu^{n_0}f-f$, while \cref{lem:app_Qc_B0}, applied to
$\mathcal S_\nu^{n_0}f\in\mathcal B_0$, gives
$c^{n_0}Q_c^{(0)}\mathcal S_\nu^{n_0}f\to
-\mathcal S_\nu^{n_0}f$. Hence $Q_cf\to-f$ in $\mathcal Y_K^0$.
\end{proof}
\fi

\begin{proof}
By \cref{lem:app_YK_direct_sum} and the definition of $n_0$,
$\mathcal S_\nu$ is bounded on
$\mathcal Y_K^0$, while $\mathcal S_\nu^{n_0}$ maps $\mathcal Y_K^0$
boundedly into $\mathcal B_0$. Writing $Q_c^{(0)}$ for the resolvent on
$\mathcal B_0$, reindexing the tail from $n_0$ gives, by definition,
\begin{equation*}
Q_cf
=
\sum_{n=0}^{n_0-1}c^n\mathcal S_\nu^n\mathcal D_\nu f
+c^{n_0}Q_c^{(0)}\mathcal S_\nu^{n_0}f.
\end{equation*}
Since $Q_c^{(0)}$ is bounded on $\mathcal B_0$, this proves convergence of
the series in $\mathcal Y_K^0$ and boundedness of the extension.

Applying $\mathrm{Id}-c\mathcal S_\nu$ and using
\eqref{eq:app_Qc_identity_B0} yields
\begin{equation*}
(\mathrm{Id}-c\mathcal S_\nu)Q_cf
=\mathcal D_\nu f
+c^{n_0}
\bigl(
\mathcal D_\nu\mathcal S_\nu^{n_0}
-\mathcal S_\nu^{n_0}\mathcal D_\nu
\bigr)f
=\mathcal D_\nu f.
\end{equation*}
At $c=1$, the finite sum equals $\mathcal S_\nu^{n_0}f-f$ by telescoping,
while the tail equals $-\mathcal S_\nu^{n_0}f$, so $Q_1=-\mathrm{Id}$. The same decomposition,
together with $Q_c^{(0)}g\to-g$ in $\mathcal B_0$, gives
$Q_cf\to-f$ in $\mathcal Y_K^0$ as $c\uparrow1$.
\end{proof}

\subsection{Transient-impact perturbation}
\label{subsec:app_transient_impact_perturbation}

Since $G'\in L^1(\mathbb R_+)$, the Volterra operator $\mathcal H$ maps
$\mathcal B_0$ continuously into itself and
\begin{equation}
\label{eq:app_H_B0_bound}
\norm{\mathcal Hf}_{\mathcal B_0}
\le\norm{G'}_1\norm{f}_{\mathcal B_0}.
\end{equation}
Together with \cref{lem:app_Qc_B0}, this also shows that
$Q_{c_\alpha}\mathcal H:\mathcal B_0\to\mathcal B_0$ is bounded.
We first seek a solution of \eqref{eq:coboundary_fixed_point} in the
affine class
\begin{equation*}
\psi_\alpha=Q_{c_\alpha}K+\eta_\alpha,
\qquad \eta_\alpha\in\mathcal B_0.
\end{equation*}
Using \eqref{eq:app_Qc_extension_identity}, the coboundary equation holds if
\begin{equation}
\label{eq:app_correction_preimage}
(\mathrm{Id}-c_\alpha\mathcal S_\nu)\eta_\alpha
=d_\alpha\mathcal D_\nu\mathcal H
\bigl(Q_{c_\alpha}K+\eta_\alpha\bigr).
\end{equation}
Injectivity of $\mathrm{Id}-c_\alpha\mathcal S_\nu$ on $\mathcal B_0$ makes
\eqref{eq:app_correction_preimage} equivalent to
\begin{equation}
\label{eq:app_correction_fixed_point}
\eta_\alpha
=d_\alpha Q_{c_\alpha}\mathcal H
\bigl(Q_{c_\alpha}K+\eta_\alpha\bigr).
\end{equation}

The fixed-point argument used in \cref{thm:app_finite_extension_existence}
below requires the right-hand side of
\eqref{eq:app_correction_fixed_point} to lie in $\mathcal B_0$, which follows
from the next lemma.

\begin{lemma}[Volterra mapping on $\mathcal Y_K^0$]
\label{lem:app_H_extension}
The operator $\mathcal H$ maps $\mathcal Y_K^0$ boundedly into
$\mathcal B_0$.
\end{lemma}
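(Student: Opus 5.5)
We bound $\mathcal H$ separately on the two summands of the direct-sum decomposition of \cref{lem:app_YK_direct_sum}. For $f=b+\sum_{j=0}^{n_0-1}a_j v_j$ with $v_j:=\mathcal S_\nu^jK$, linearity gives
\begin{equation*}
\norm{\mathcal Hf}_{\mathcal B_0}
\le\norm{G'}_1\norm{b}_{\mathcal B_0}
+\Bigl(\max_{0\le j<n_0}\norm{\mathcal Hv_j}_{\mathcal B_0}\Bigr)\sum_{j=0}^{n_0-1}|a_j|.
\end{equation*}
The first term is controlled by \eqref{eq:app_H_B0_bound}. Since there are only finitely many $v_j$, boundedness of $\mathcal H$ on $\mathcal Y_K^0$ reduces to showing $\mathcal Hv_j\in\mathcal B_0$ for each fixed $0\le j<n_0$. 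Once that holds, the displayed inequality is bounded by a constant times the norm \eqref{eq:app_YK_direct_norm}.

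\emph{Step 1: structure of $v_j$.} Write $v_j(t)=\E[K(t+S_j)]$, where $S_j=\Lambda_1+\cdots+\Lambda_j$ is a sum of i.i.d.\ variables with law $\nu$ (with $S_0:=0$). Fix $R_0$ such that $|K(x)|\le1$ for $x\ge R_0$; this is possible because $K\to0$ at infinity. For any $R>0$ and any shift $s\ge0$,
\begin{equation*}
\int_0^R|K(t+s)|\,\diff t
\le\int_0^{R_0}|K|+R\sup_{x\ge R_0}|K(x)|,
\end{equation*}
which is uniform in $s$. By Tonelli, $v_j$ is therefore well defined almost everywhere and is locally integrable; it is Borel as an integral of a jointly measurable function. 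Moreover, $|v_j(t)|\le\sup_{x\ge t}|K(x)|$ for $t\ge R_0$. Hence $v_j$ is bounded on $[R_0,\infty)$ and tends to zero at infinity.

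\emph{Step 2: split and convolve.} Decompose $v_j=g_j+h_j$ with $g_j:=v_j\mathbbm 1_{(0,R_0]}\in L^1$ and $h_j:=v_j\mathbbm 1_{(R_0,\infty)}$. By Step 1, $h_j\in\mathcal B_0$, so \eqref{eq:app_H_B0_bound} gives $\mathcal Hh_j\in\mathcal B_0$. For the local part, note that $G'$ is continuous and tends to zero at infinity, hence is bounded. Therefore
\begin{equation*}
|\mathcal Hg_j(t)|=\Bigl|\int_0^{\min(t,R_0)}G'(t-s)g_j(s)\,\diff s\Bigr|
\le\norm{G'}_\infty\norm{g_j}_{L^1},
\end{equation*}
so $\mathcal Hg_j$ is bounded. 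Its decay follows by dominated convergence: for fixed $s\le R_0$ we have $G'(t-s)\to0$ as $t\to\infty$, with integrable dominating function $\norm{G'}_\infty|g_j|$. Measurability of $\mathcal Hg_j$ follows from Fubini. Hence $\mathcal Hv_j=\mathcal Hg_j+\mathcal Hh_j\in\mathcal B_0$.

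The main obstacle is Step 1. We only know $K\in L^2_{\mathrm{loc}}$ with decay at infinity, so $v_j$ could be unbounded near the origin, and we must ensure that the horizon averages $\mathcal S_\nu^jK$ are finite and locally integrable without assuming any moments of $\nu$. The shift-uniform local $L^1$ bound above is exactly what handles this: it controls $\int_0^R|K(t+s)|\,\diff t$ uniformly over every horizon shift $s$. Everything after that is routine convolution estimates.
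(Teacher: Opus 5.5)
Your proof is correct and follows essentially the same route as the paper. Both arguments use the direct-sum decomposition to reduce to showing $\mathcal H\mathcal S_\nu^jK\in\mathcal B_0$ for the finitely many $j<n_0$. Both then use the shift-uniform local $L^1$ bound on $K$ and its decay at infinity to split $\mathcal S_\nu^jK$ into a compactly supported $L^1$ piece, convolved with the bounded, vanishing $G'$, and a $\mathcal B_0$ tail handled by \eqref{eq:app_H_B0_bound}. You additionally write out the explicit operator bound in the direct-sum norm, which the paper leaves implicit.
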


\begin{proof}
By \cref{lem:app_YK_direct_sum}, it suffices to show
that $\mathcal H\mathcal S_\nu^jK\in\mathcal B_0$ for $0\leq j<n_0$.
For every $T>0$,
\begin{align*}
\int_0^T|\mathcal S_\nu^jK(t)|\,\diff t
&\leq\sup_{r\geq0}\int_r^{r+T}|K(u)|\,\diff u<\infty,
\\
|\mathcal S_\nu^jK(t)|
&\leq\sup_{u\geq t}|K(u)|\longrightarrow0.
\end{align*}
Let $R>0$ be such that
$(\mathcal S_\nu^jK)\mathbf 1_{[R,\infty)}\in\mathcal B_0$. Its image under
$\mathcal H$ belongs to $\mathcal B_0$ by \eqref{eq:app_H_B0_bound}, while
the image of
$(\mathcal S_\nu^jK)\mathbf 1_{(0,R)}$ also belongs to $\mathcal B_0$ because
it is the convolution of a compactly supported $L^1$-function with $G'$, a
bounded function vanishing at infinity.
\end{proof}

We can now apply the contraction principle to
\eqref{eq:app_correction_fixed_point} and prove uniqueness in
$\mathcal Y_K^0$.

\begin{theorem}[Kernel construction and uniqueness in $\mathcal Y_K^0$]
\label{thm:app_finite_extension_existence}
Suppose that
\begin{equation}
\label{eq:app_smallness}
|d_\alpha|\,
\norm{Q_{c_\alpha}\mathcal H}_{\mathcal B_0\to\mathcal B_0}<1.
\end{equation}
Then \eqref{eq:app_correction_fixed_point} admits a solution
$\eta_\alpha\in\mathcal B_0$, and
\begin{equation*}
\phi_\alpha
:=\frac{Q_{c_\alpha}K+\eta_\alpha}{1+\alpha G(0)}
\in\mathcal Y_K^0
\end{equation*}
is the unique solution in $\mathcal Y_K^0$ of the kernel equation
\eqref{eq:meanfield_alpha}.
\end{theorem}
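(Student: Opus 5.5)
Existence comes from the contraction principle and uniqueness from a reduction to $\mathcal B_0$. The affine map
\begin{equation*}
\mathcal T\eta:=d_\alpha Q_{c_\alpha}\mathcal H\bigl(Q_{c_\alpha}K+\eta\bigr)
\end{equation*}
sends $\mathcal B_0$ into itself. Indeed, $Q_{c_\alpha}K\in\mathcal Y_K^0$ by \cref{lem:app_Qc_extension}. Then \cref{lem:app_H_extension} gives $\mathcal H Q_{c_\alpha}K\in\mathcal B_0$, and \cref{lem:app_Qc_B0} keeps $Q_{c_\alpha}$ inside $\mathcal B_0$. The linear part $d_\alpha Q_{c_\alpha}\mathcal H$ has norm strictly below one by \eqref{eq:app_smallness}, so Banach's fixed-point theorem yields a unique $\eta_\alpha\in\mathcal B_0$ solving \eqref{eq:app_correction_fixed_point}. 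The Neumann series for this fixed point reproduces \eqref{eq:selected_equilibrium_expansion}.

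Next I would check that $\phi_\alpha$ solves the kernel equation. Apply $\mathrm{Id}-c_\alpha\mathcal S_\nu$ to \eqref{eq:app_correction_fixed_point} and use \eqref{eq:app_Qc_identity_B0}; this recovers \eqref{eq:app_correction_preimage}. Adding $(\mathrm{Id}-c_\alpha\mathcal S_\nu)Q_{c_\alpha}K=\mathcal D_\nu K$ from \eqref{eq:app_Qc_extension_identity} shows that $\psi_\alpha=Q_{c_\alpha}K+\eta_\alpha$ solves \eqref{eq:coboundary_fixed_point}. Dividing by $1+\alpha G(0)$ undoes the rescaling and gives \eqref{eq:meanfield_alpha}. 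Membership $\phi_\alpha\in\mathcal Y_K^0$ is immediate.

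For uniqueness, let $\phi\in\mathcal Y_K^0$ be a second solution and set $u:=(1+\alpha G(0))(\phi-\phi_\alpha)\in\mathcal Y_K^0$. It satisfies the homogeneous equation $(\mathrm{Id}-c_\alpha\mathcal S_\nu)u=d_\alpha\mathcal D_\nu\mathcal H u$. By \cref{lem:app_H_extension}, the right-hand side equals $(\mathrm{Id}-c_\alpha\mathcal S_\nu)\,d_\alpha Q_{c_\alpha}\mathcal H u$, using \eqref{eq:app_Qc_identity_B0} on $\mathcal B_0$. Hence $w:=u-d_\alpha Q_{c_\alpha}\mathcal Hu$ lies in $\ker(\mathrm{Id}-c_\alpha\mathcal S_\nu)$ on $\mathcal Y_K^0$.

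The main obstacle is showing that this kernel is trivial on $\mathcal Y_K^0$, not only on $\mathcal B_0$.
\begin{itemize}
\item If $w=c_\alpha\mathcal S_\nu w$, iterating gives $w=c_\alpha^{n}\mathcal S_\nu^{n}w$ for all $n$.
\item For $n\ge n_0$ we have $\mathcal S_\nu^n w\in\mathcal B_0$, so the direct-sum decomposition of \cref{lem:app_YK_direct_sum} forces $w\in\mathcal B_0$.
\item The contraction bound in \cref{lem:app_shift_tail} then gives $\|w\|\le c_\alpha^n\|\mathcal S_\nu^n w\|_{\mathcal B_0}\to0$, since $\|\mathcal S_\nu^nw\|_{\mathcal B_0}\to0$. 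This covers $c_\alpha\le1$, including the borderline case, so $w=0$.
\end{itemize}

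Therefore $u=d_\alpha Q_{c_\alpha}\mathcal Hu$, whose right-hand side lies in $\mathcal B_0$, so $u\in\mathcal B_0$. Then $\|u\|_{\mathcal B_0}\le|d_\alpha|\,\|Q_{c_\alpha}\mathcal H\|\,\|u\|_{\mathcal B_0}$, and \eqref{eq:app_smallness} forces $u=0$. If $\alpha=0$, then $c_\alpha=d_\alpha=0$ and both steps are trivial.
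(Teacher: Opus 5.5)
Your proposal is correct and follows essentially the same route as the paper: a Banach fixed point for the affine map on $\mathcal B_0$, verification of the coboundary equation via the Abel identity, and uniqueness by showing that $\mathrm{Id}-c_\alpha\mathcal S_\nu$ is injective on $\mathcal Y_K^0$ (since $w=c_\alpha^{n_0}\mathcal S_\nu^{n_0}w\in\mathcal B_0$), then applying the smallness condition to $u=d_\alpha Q_{c_\alpha}\mathcal H u\in\mathcal B_0$. The differences are cosmetic. You kill the kernel using $\norm{\mathcal S_\nu^n w}_{\mathcal B_0}\to0$, whereas the paper uses the strict bound $c_\alpha<1$, which always holds here because $c_\alpha=\alpha G(0)/(1+\alpha G(0))$. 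Your appeal to the direct-sum decomposition is also unnecessary, since $w$ is already a scalar multiple of an element of $\mathcal B_0$.
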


\iffalse
\begin{proof}
The assumption $\mathcal H Q_{c_\alpha}K\in\mathcal B_0$ ensures that
$d_\alpha Q_{c_\alpha}\mathcal H Q_{c_\alpha}K\in\mathcal B_0$, while
\eqref{eq:app_smallness} makes the right-hand side of
\eqref{eq:app_correction_fixed_point} an affine contraction on
$\mathcal B_0$. Hence it has the unique fixed point
\begin{equation*}
\eta_\alpha
=\sum_{m\ge0}d_\alpha^{m+1}
(Q_{c_\alpha}\mathcal H)^m
Q_{c_\alpha}\mathcal H Q_{c_\alpha}K,
\end{equation*}
with convergence in $\mathcal B_0$. $\psi_\alpha$ then solves the
coboundary equation, and the rescaling defining $\phi_\alpha$ gives
\eqref{eq:meanfield_alpha}.
\end{proof}
\fi

\begin{proof}
By \cref{lem:app_H_extension},
$\mathcal H:\mathcal Y_K^0\to\mathcal B_0$ is bounded. In particular,
$\mathcal H Q_{c_\alpha}K\in\mathcal B_0$. On $\mathcal B_0$, set
\begin{equation*}
T_\alpha\eta
:=d_\alpha Q_{c_\alpha}\mathcal H(Q_{c_\alpha}K+\eta).
\end{equation*}
The preceding mapping property and \eqref{eq:app_smallness} show that
$T_\alpha$ defines an affine map on
$\mathcal B_0$ whose linear part is a strict contraction. Its unique fixed
point has the Neumann expansion
\begin{equation*}
\eta_\alpha
=\sum_{m\geq0}d_\alpha^{m+1}
(Q_{c_\alpha}\mathcal H)^m
Q_{c_\alpha}\mathcal H Q_{c_\alpha}K,
\end{equation*}
with convergence in $\mathcal B_0$. Set
$\psi_\alpha:=Q_{c_\alpha}K+\eta_\alpha$. The preceding equivalence gives its
coboundary equation.\\

It remains to prove uniqueness in $\mathcal Y_K^0$. First,
$\mathrm{Id}-c_\alpha\mathcal S_\nu$ is injective on $\mathcal Y_K^0$.
Indeed, if $y=c_\alpha\mathcal S_\nu y$, then
$y=c_\alpha^{n_0}\mathcal S_\nu^{n_0}y\in\mathcal B_0$, since
$\mathcal S_\nu^{n_0}K\in\mathcal B_0$. This implies $y=0$, as
$\norm{c_\alpha\mathcal S_\nu}_{\mathcal B_0\to\mathcal B_0}<1$.\\

Let $\widehat\psi\in\mathcal Y_K^0$ be another solution and put
$u:=\widehat\psi-\psi_\alpha$. Then
\begin{equation*}
(\mathrm{Id}-c_\alpha\mathcal S_\nu)u
=d_\alpha\mathcal D_\nu\mathcal Hu.
\end{equation*}
Since $\mathcal Hu\in\mathcal B_0$, the Abel identity gives
\begin{equation*}
(\mathrm{Id}-c_\alpha\mathcal S_\nu)
\bigl(u-d_\alpha Q_{c_\alpha}\mathcal Hu\bigr)=0.
\end{equation*}
Injectivity of $\mathrm{Id}-c_\alpha\mathcal S_\nu$ therefore yields
\begin{equation*}
u=d_\alpha Q_{c_\alpha}\mathcal Hu\in\mathcal B_0.
\end{equation*}
Consequently,
\begin{equation*}
\norm{u}_{\mathcal B_0}
\leq
|d_\alpha|\,
\norm{Q_{c_\alpha}\mathcal H}_{\mathcal B_0\to\mathcal B_0}
\norm{u}_{\mathcal B_0},
\end{equation*}
and \eqref{eq:app_smallness} gives $u=0$.
\end{proof}

To turn the unique kernel in $\mathcal Y_K^0$ constructed in
\cref{thm:app_finite_extension_existence} into a Gaussian--Volterra
equilibrium through \cref{prop:meanfield_eq}, we establish the following
$\nu$-admissibility result for elements of $\mathcal Y_K^0$.

\begin{lemma}[Uniform local integrability]
\label{lem:app_uloc_admissibility}
For a measurable function $f$, set
\begin{equation*}
    N_T(f):=\sup_{r\geq0}\norm{f}_{L^2(r,r+T)}.
\end{equation*}
If $N_T(f)<\infty$ for every $T>0$, then $f$ is $\nu$-admissible. Moreover,
$N_T(f)<\infty$ for $f=K$, for every $f\in\mathcal Y_K^0$, and for
$\mathcal Hf$ whenever $f\in\mathcal Y_K^0$.
\end{lemma}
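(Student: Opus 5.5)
The proof splits into the admissibility claim, which is a direct estimate, and the three finiteness claims, which reduce to the structure of $\mathcal Y_K^0$ from \cref{lem:app_YK_direct_sum}.

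\emph{Admissibility.} Fix $T>0$ and $\lambda>0$. By translation,
\begin{equation*}
\norm{\Delta_\lambda f}_{L^2(0,T)}
\le\norm{f(\cdot+\lambda)}_{L^2(0,T)}+\norm{f}_{L^2(0,T)}
=\norm{f}_{L^2(\lambda,\lambda+T)}+\norm{f}_{L^2(0,T)}
\le 2N_T(f).
\end{equation*}
This bound is uniform in $\lambda$. Since $\nu$ is a probability measure, integrating it gives $\int\norm{\Delta_\lambda f}_{L^2(0,T)}\,\nu(\diff\lambda)\le 2N_T(f)<\infty$. Measurability in $\lambda$ is standard, because $\lambda\mapsto f(\cdot+\lambda)$ is measurable into $L^2(0,T)$.

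\emph{The kernel $K$.} Because $K(t)\to0$, there is $R$ with $|K|\le1$ on $[R,\infty)$. For $r\ge0$ we bound
\begin{equation*}
\norm{K}_{L^2(r,r+T)}^2\le\norm{K}_{L^2(0,R)}^2+T.
\end{equation*}
This uses $K\in L^2_{\mathrm{loc}}$, and the right-hand side does not depend on $r$.

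\emph{Elements of $\mathcal Y_K^0$.} Write $f=b+\sum_j a_j\mathcal S_\nu^jK$. Since $N_T$ is a seminorm, it suffices to treat each summand. For $b\in\mathcal B_0$ we have $N_T(b)\le\sqrt T\norm{b}_{\mathcal B_0}$. For $\mathcal S_\nu^jK$, represent it as $\E[K(\cdot+S_j)]$ with $S_j$ a sum of i.i.d.\ $\nu$-variables. Minkowski's integral inequality then gives
\begin{equation*}
\norm{\mathcal S_\nu^jK}_{L^2(r,r+T)}\le\E\norm{K}_{L^2(r+S_j,r+S_j+T)}\le N_T(K).
\end{equation*}
This is the step needing care: one must check that $\mathcal S_\nu^jK$ is well defined pointwise. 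I would handle it via Tonelli, since $N_T(K)<\infty$ gives local integrability of $|K|$ uniformly in translation, so the expectation is finite almost everywhere.

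\emph{The image $\mathcal Hf$.} For $f\in\mathcal Y_K^0$, \cref{lem:app_H_extension} gives $\mathcal Hf\in\mathcal B_0$. Applying the $\mathcal B_0$ case then yields $N_T(\mathcal Hf)\le\sqrt T\norm{\mathcal Hf}_{\mathcal B_0}<\infty$. I expect no real obstacle here; the only subtle point is the Minkowski/Tonelli justification in the previous step.
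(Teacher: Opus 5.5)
Your proof is correct and follows the paper's argument step for step. You use the uniform bound $\norm{\Delta_\lambda f}_{L^2(0,T)}\le 2N_T(f)$ for admissibility, local square integrability plus decay for $K$, and $N_T(b)\le\sqrt T\norm{b}_{\mathcal B_0}$ with Minkowski's inequality for the summands $\mathcal S_\nu^jK$ (the paper writes this as $N_T(\mathcal S_\nu f)\le N_T(f)$).

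The one deviation is for $\mathcal Hf$. There you invoke \cref{lem:app_H_extension} and the $\mathcal B_0$ bound, whereas the paper applies Minkowski directly to get $N_T(\mathcal Hf)\le\norm{G'}_1N_T(f)$. Both are valid, and yours is not circular since \cref{lem:app_H_extension} is proved independently. The paper's version is self-contained and holds for any $f$ with $N_T(f)<\infty$; yours gives the stronger fact that $\mathcal Hf$ is bounded.
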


\iffalse
\begin{proof}
The assumptions $K\in L^2_{\mathrm{loc}}$ and $K(t)\to0$ imply
$N_T(K)<\infty$, and
$N_T(b)\leq\sqrt{T}\norm{b}_{\mathcal B_0}$ for every
$b\in\mathcal B_0$. For every $f\in\mathcal Y_K^0$, one has
$N_T(f)<\infty$, as Minkowski's inequality gives
\begin{equation*}
    N_T(\mathcal S_\nu f)\leq N_T(f).
\end{equation*}
According to Minkowski's inequality, one moreover has
\begin{equation*}
    N_T(\mathcal Hf)\leq\norm{G'}_1N_T(f).
\end{equation*}
Finally,
\begin{equation*}
\int_{(0,\infty)}
\norm{\Delta_\lambda f}_{L^2(0,T)}\,\nu(\diff\lambda)
\leq 2N_T(f),
\end{equation*}
which proves $\nu$-admissibility.
\end{proof}
\fi

\begin{proof}
For fixed $T$, local square integrability and decay at infinity give
$N_T(K)<\infty$.
Moreover,
$N_T(b)\leq\sqrt{T}\norm{b}_{\mathcal B_0}$ for $b\in\mathcal B_0$.
For every $f\in\mathcal Y_K^0$, Minkowski's inequality gives
\begin{equation*}
N_T(\mathcal S_\nu f)\leq N_T(f).
\end{equation*}
The preceding statements then yield $N_T(f)<\infty$ for every
$f\in\mathcal Y_K^0$.\\

For $f\in\mathcal Y_K^0$, Minkowski's inequality also gives
\begin{equation*}
N_T(\mathcal Hf)\leq\norm{G'}_1N_T(f).
\end{equation*}
Finally,
\begin{equation*}
\int_{(0,\infty)}
\norm{\Delta_\lambda f}_{L^2(0,T)}\,\nu(\diff\lambda)
\leq2N_T(f),
\end{equation*}
which is precisely the required $\nu$-admissibility bound.
\end{proof}

\iffalse
\begin{remark}[Proof of \cref{prop:selected_coboundary_equilibrium}]
The hypotheses of \cref{prop:selected_coboundary_equilibrium} imply those of
\cref{thm:app_finite_extension_existence}. Substituting the series for
$\eta_\alpha$ from its proof into
$\phi_\alpha=(Q_{c_\alpha}K+\eta_\alpha)/(1+\alpha G(0))$ gives
\eqref{eq:selected_equilibrium_expansion}. Since
$\phi_\alpha\in\mathcal Y_K^0$, \cref{lem:app_uloc_admissibility} shows that
$\phi_\alpha\in L^2_{\mathrm{loc}}$ and that $K$ and
$\mathcal I\phi_\alpha$ are $\nu$-admissible. The equilibrium conclusion
follows from \cref{prop:meanfield_eq}.
\end{remark}
\fi

\begin{proof}[Proof of \Cref{prop:selected_coboundary_equilibrium}]
\Cref{thm:app_finite_extension_existence} and its Neumann expansion give
\eqref{eq:selected_equilibrium_expansion} and uniqueness in $\mathcal Y_K^0$. Since
$\phi_\alpha\in\mathcal Y_K^0$, \cref{lem:app_uloc_admissibility} gives finite
$N_T$ for $\phi_\alpha$ and $\mathcal H\phi_\alpha$, hence also for
$\mathcal I\phi_\alpha$. Thus $\phi_\alpha\in L^2_{\mathrm{loc}}$, while
$K$, $\phi_\alpha$, and $\mathcal I\phi_\alpha$ are $\nu$-admissible. The conclusion follows
from \cref{prop:meanfield_eq}.
\end{proof}

\section{Gamma smoothing and Gaussian--Volterra regularity}
\label{app:auxiliary_regularity}

Throughout this appendix, $\nu$ denotes the Gamma distribution with shape
$\xi>0$ and rate $\beta>0$:
\begin{equation*}
    \nu(\diff\lambda)
    =
    \frac{\beta^\xi}{\Gamma(\xi)}
    \lambda^{\xi-1}e^{-\beta\lambda}\,\diff\lambda.
\end{equation*}
Since the sum of $m$ independent Gamma$(\xi,\beta)$ variables is
Gamma$(m\xi,\beta)$, for every $m\geq1$,
\begin{equation*}
    \mathcal S_\nu^m f(\tau)
    =
    \frac{\beta^{m\xi}}{\Gamma(m\xi)}
    \int_0^\infty f(\tau+\lambda)
    \lambda^{m\xi-1}e^{-\beta\lambda}\,\diff\lambda.
\end{equation*}

\subsection{Gamma smoothing of powers}

For $a<0$, write
\begin{equation*}
    p_a(\tau):=\tau^a,
    \qquad
    \Theta_a:=\mathcal S_\nu p_a.
\end{equation*}

\begin{proposition}[Gamma asymptotics]
\label{prop:gamma_asymp}
Let $a<0$ and set $b:=a+\xi$. To avoid a separate treatment of the logarithmic
terms that may arise when $b\in\mathbb Z$, we assume $b\notin\mathbb Z$. Then
\begin{equation*}
    \Theta_a(\tau)
    =
    \beta^\xi\tau^b U(\xi,b+1,\beta\tau),
\end{equation*}
where $U$ is the confluent hypergeometric function. Set
\begin{equation*}
    C_{a,\xi}
    :=
    \beta^\xi\frac{\Gamma(-b)}{\Gamma(-a)}
    \neq0.
\end{equation*}
When $b>0$,
\begin{equation*}
    \Theta_a(0)=\beta^{-a}\frac{\Gamma(b)}{\Gamma(\xi)}.
\end{equation*}
As $\tau\downarrow0$, the following Taylor expansions hold.
\begin{itemize}
    \item If $b<0$,
    \begin{align*}
        \Theta_a(\tau)
        &=C_{a,\xi}\tau^b+o(\tau^b),
        \\
        \Theta_a'(\tau)
        &=bC_{a,\xi}\tau^{b-1}+o(\tau^{b-1}).
    \end{align*}

    \item If $0<b<1$,
    \begin{align*}
        \Theta_a(\tau)
        &=\Theta_a(0)+C_{a,\xi}\tau^b+o(\tau^b),
        \\
        \Theta_a'(\tau)
        &=bC_{a,\xi}\tau^{b-1}+o(\tau^{b-1}).
    \end{align*}

    \item If $b>1$,
    \begin{equation*}
        \Theta_a(\tau)=\Theta_a(0)+O(\tau),
        \qquad
        \Theta_a'(\tau)=O(1).
    \end{equation*}
\end{itemize}
For every $m\geq1$ such that $a+m\xi\notin\mathbb Z$, the same conclusions
hold for $\mathcal S_\nu^m p_a$ after replacing $\xi$ by $m\xi$.
\end{proposition}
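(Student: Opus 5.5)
We start from the explicit Gamma integral
\begin{equation*}
    \Theta_a(\tau)=\frac{\beta^\xi}{\Gamma(\xi)}\int_0^\infty(\tau+\lambda)^a\lambda^{\xi-1}e^{-\beta\lambda}\,\diff\lambda .
\end{equation*}
Substituting $\lambda=\tau s$ gives
\begin{equation*}
    \Theta_a(\tau)=\frac{\beta^\xi}{\Gamma(\xi)}\,\tau^{a+\xi}\int_0^\infty s^{\xi-1}(1+s)^a e^{-\beta\tau s}\,\diff s .
\end{equation*}
The remaining integral is $\Gamma(\xi)\,U(\xi,\xi+a+1,\beta\tau)$ by the standard integral representation of Kummer's $U$, valid since $\xi>0$. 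This yields $\Theta_a(\tau)=\beta^\xi\tau^bU(\xi,b+1,\beta\tau)$.

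For the small-$\tau$ behaviour we use the connection formula for $U$ when $b+1\notin\mathbb Z$:
\begin{equation*}
    U(\xi,b+1,z)=\frac{\Gamma(-b)}{\Gamma(\xi-b)}M(\xi,b+1,z)+\frac{\Gamma(b)}{\Gamma(\xi)}z^{-b}M(\xi-b,1-b,z),
\end{equation*}
where $M$ is Kummer's entire function with $M(\cdot,\cdot,0)=1$. Note that $\xi-b=-a$. Multiplying by $\beta^\xi\tau^b$ gives
\begin{equation*}
    \Theta_a(\tau)=C_{a,\xi}\,\tau^b M(\xi,b+1,\beta\tau)+\beta^{-a}\frac{\Gamma(b)}{\Gamma(\xi)}M(-a,1-b,\beta\tau).
\end{equation*}
This is an exact decomposition into $\tau^b$ times an analytic function plus an analytic function. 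Both series converge, and $\Gamma(-b)$ and $\Gamma(b)$ are finite precisely because $b\notin\mathbb Z$. Since $-a>0$ and $-b\notin\mathbb Z_{\le0}$, we have $C_{a,\xi}\neq0$.

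The three cases then follow by reading off leading terms.
\begin{itemize}
    \item If $b<0$, the $\tau^b$ term dominates the bounded analytic part. Termwise differentiation of the convergent series gives the derivative expansion $bC_{a,\xi}\tau^{b-1}(1+O(\tau))+O(1)$, which is $bC_{a,\xi}\tau^{b-1}+o(\tau^{b-1})$.
    \item If $0<b<1$, the analytic part contributes the constant $\Theta_a(0)=\beta^{-a}\Gamma(b)/\Gamma(\xi)$ plus $O(\tau)=o(\tau^b)$. The singular part gives $C_{a,\xi}\tau^b$, and after differentiation it dominates because $b-1<0$.
    \item If $b>1$, every term other than the constant is $O(\tau)$, and the derivative is $O(1)$.
\end{itemize}
The value $\Theta_a(0)$ for $b>0$ can also be checked directly by dominated convergence: $\int_0^\infty\lambda^{b-1}e^{-\beta\lambda}\,\diff\lambda=\Gamma(b)\beta^{-b}$.

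For $\mathcal S_\nu^mp_a$, the Gamma convolution formula recalled above shows that $\mathcal S_\nu^m$ is exactly $\mathcal S_{\nu'}$ with $\nu'=\mathrm{Gamma}(m\xi,\beta)$. The whole argument therefore applies verbatim with $\xi$ replaced by $m\xi$, under the hypothesis $a+m\xi\notin\mathbb Z$.

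The main obstacle is justifying the connection formula and the differentiated expansions carefully, including the convergence of the integral representation at $s\to\infty$, which is guaranteed by $e^{-\beta\tau s}$ for $\tau>0$. If one prefers a self-contained route, an alternative is to split $e^{-\beta\lambda}=1+(e^{-\beta\lambda}-1)$ repeatedly. The pure-power pieces give exact Beta integrals $\tau^{b+k}\Gamma(\xi+k)\Gamma(-b-k)/\Gamma(-a)$, and Taylor remainder terms bound the rest. For the derivative, one differentiates under the integral as $\Theta_a'=a\,\Theta_{a-1}$-type expressions, i.e. $\mathcal S_\nu p_{a-1}$ times $a$, and applies the same expansion with $a-1$.
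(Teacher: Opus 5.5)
Your argument is correct, but it reaches the expansions by a different route from the paper's. After the same change of variables and identification with $U(\xi,b+1,\beta\tau)$, the paper uses only the leading near-zero equivalents of $U$. These give $\Theta_a(0)$ for $b>0$ and $\Theta_a(\tau)\sim C_{a,\xi}\tau^b$ for $b<0$. The paper then obtains every derivative statement by differentiating under the integral, $\Theta_a'=a\Theta_{a-1}$, and applying the $b<0$ equivalent to $\Theta_{a-1}$, whose index $b-1$ is negative as soon as $b<1$; this uses the identity $aC_{a-1,\xi}=bC_{a,\xi}$. The function expansion for $0<b<1$ is recovered by integrating the derivative from zero, and the case $b>1$ uses the finiteness of $\Theta_{a-1}(0)$.

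You instead use the connection formula $U(\xi,b+1,z)=\frac{\Gamma(-b)}{\Gamma(-a)}M(\xi,b+1,z)+\frac{\Gamma(b)}{\Gamma(\xi)}z^{-b}M(-a,1-b,z)$, valid since $b\notin\mathbb Z$. It gives the exact splitting $\Theta_a(\tau)=C_{a,\xi}\tau^bF_1(\tau)+F_2(\tau)$ with $F_1,F_2$ entire, $F_1(0)=1$ and $F_2(0)=\beta^{-a}\Gamma(b)/\Gamma(\xi)$. All three cases and their differentiated versions are then read off at once.

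Your route needs a somewhat heavier special-function identity. In return it yields a convergent expansion in the powers $\tau^{b+k}$ and $\tau^k$, hence explicit remainders (for instance $\Theta_a(\tau)=\Theta_a(0)+C_{a,\xi}\tau^b+O(\tau)$ when $0<b<1$), and it shows exactly where nonresonance enters. The paper's route needs only leading-order asymptotics. Its device $\Theta_a'=a\Theta_{a-1}$ is also the one reused in \cref{prop:gamma_smoothing_lower_order} to control $\mathcal S_\nu f'$.

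One caveat on your optional fallback. The Beta integrals $\int_0^\infty(\tau+\lambda)^a\lambda^{\xi+k-1}\,\diff\lambda$ converge only when $b+k<0$, so that route does not work as stated for $b>0$ without further splitting. Your main argument does not depend on it.
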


\begin{proof}
Changing variables $\lambda=\tau u$ in the defining integral gives
\begin{align*}
    \Theta_a(\tau)
    &=
    \frac{\beta^\xi}{\Gamma(\xi)}
    \int_0^\infty
        (\tau+\lambda)^a\lambda^{\xi-1}e^{-\beta\lambda}
    \,\diff\lambda
    \\
    &=
    \frac{\beta^\xi\tau^{a+\xi}}{\Gamma(\xi)}
    \int_0^\infty
        (1+u)^a u^{\xi-1}e^{-\beta\tau u}
    \,\diff u.
\end{align*}
The integral representation
\begin{equation*}
    U(c,d,z)
    =\frac{1}{\Gamma(c)}
    \int_0^\infty e^{-zu}u^{c-1}(1+u)^{d-c-1}\,\diff u
\end{equation*}
for $\operatorname{Re}(c)>0$ and $\operatorname{Re}(z)>0$
\cite[13.2.5]{HMF}, applied with $(c,d,z)=(\xi,b+1,\beta\tau)$, gives
\begin{equation*}
    \Theta_a(\tau)=\beta^\xi\tau^bU(\xi,b+1,\beta\tau).
\end{equation*}
The near-zero equivalents \cite[13.5.6 and 13.5.8]{HMF} for $b>0$ and
\cite[13.5.10 and 13.5.12]{HMF} for $b<0$ read
\begin{align*}
    U(\xi,b+1,z)
    &\sim\frac{\Gamma(b)}{\Gamma(\xi)}z^{-b},
    && b>0,
    \\
    U(\xi,b+1,z)
    &\longrightarrow\frac{\Gamma(-b)}{\Gamma(-a)},
    && b<0.
\end{align*}
Thus, when $b>0$,
\begin{equation*}
    \Theta_a(0)=\beta^{-a}\frac{\Gamma(b)}{\Gamma(\xi)},
\end{equation*}
whereas for $b<0$,
\begin{equation*}
    \Theta_a(\tau)
    =C_{a,\xi}\tau^b+o(\tau^b).
\end{equation*}
The coefficient $C_{a,\xi}$ is nonzero because the Gamma function has no
zeros. Differentiation under the integral gives
\begin{equation*}
    \Theta_a'=a\Theta_{a-1}.
\end{equation*}
If $b<1$, applying the second equivalent above with $a$ replaced by $a-1$
and using $aC_{a-1,\xi}=bC_{a,\xi}$ yields
\begin{equation*}
    \Theta_a'(\tau)
    =bC_{a,\xi}\tau^{b-1}+o(\tau^{b-1}).
\end{equation*}
For $0<b<1$, integration from zero gives the corresponding expansion of
$\Theta_a$. If $b>1$, then $\Theta_{a-1}(0)$ is finite, so
$\Theta_a'(\tau)=O(1)$ and $\Theta_a(\tau)=\Theta_a(0)+O(\tau)$. The integral
representation of $\mathcal S_\nu^m p_a$ is obtained by replacing $\xi$ with
$m\xi$, and the iterated statement is then a consequence of the previous
analysis.
\end{proof}

\subsection{Gamma smoothing of perturbations}

\begin{proposition}[Stability under Gamma smoothing]
\label{prop:gamma_smoothing_lower_order}
Let $a\in(-1/2,0)$ with $a+\xi\notin\mathbb Z$, and let
$f\in\mathcal C^1(\R_+^*)$ satisfy
\begin{equation*}
    f(\tau)=o(\tau^a),
    \qquad
    f'(\tau)=o(\tau^{a-1}),
    \qquad \tau\downarrow0.
\end{equation*}
Assume also that $f$ and $f'$ are bounded on $[r,\infty)$ for every $r>0$.
Then
\begin{equation*}
    \mathcal S_\nu f(\tau)=o(\tau^a),
    \qquad
    (\mathcal S_\nu f)'(\tau)
    =\mathcal S_\nu f'(\tau)
    =o(\tau^{a-1}).
\end{equation*}
Moreover,
\begin{itemize}
    \item Suppose $0<\xi<1$ and set $b:=a+\xi$. If $b<0$, then
    \begin{equation*}
        \mathcal S_\nu f(\tau)=o(\tau^b),
        \qquad
        (\mathcal S_\nu f)'(\tau)=o(\tau^{b-1}).
    \end{equation*}
    If $0<b<1$, then $\mathcal S_\nu f(0)$ is finite and
    \begin{equation*}
        \mathcal S_\nu f(\tau)
        -\mathcal S_\nu f(0)
        =o(\tau^b),
        \qquad
        (\mathcal S_\nu f)'(\tau)=o(\tau^{b-1}).
    \end{equation*}

    \item If $\xi\geq1$, then $\mathcal S_\nu f(0)$ is finite and
    \begin{equation*}
        \mathcal S_\nu f(\tau)
        -\mathcal S_\nu f(0)
        =o(\tau^{a+1}),
        \qquad
        (\mathcal S_\nu f)'(\tau)=o(\tau^a).
    \end{equation*}
\end{itemize}

For every $m\geq1$ such that $a+m\xi\notin\mathbb Z$, the same conclusions
hold for $\mathcal S_\nu^m f$ after replacing $\xi$ by $m\xi$.
\end{proposition}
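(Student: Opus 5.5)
The plan is to reduce everything to comparison with pure powers. Fix $\varepsilon>0$ and choose $\delta\in(0,1)$ such that $|f(\tau)|\le\varepsilon\tau^a$ and $|f'(\tau)|\le\varepsilon|a|\,\tau^{a-1}$ for $0<\tau\le\delta$. We then split $f=f\mathbf 1_{(0,\delta]}+f\mathbf 1_{(\delta,\infty)}$, and similarly for $f'$.

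\textbf{Far part.} On $(\delta,\infty)$ both $f$ and $f'$ are bounded by hypothesis. Hence $\mathcal S_\nu(f\mathbf 1_{(\delta,\infty)})$ and $\mathcal S_\nu(f'\mathbf 1_{(\delta,\infty)})$ are bounded functions of $\tau$, with bounds $\sup_{[\delta,\infty)}|f|$ and $\sup_{[\delta,\infty)}|f'|$. For the continuity at $0$ that we need when $\mathcal S_\nu f(0)$ is finite, we use dominated convergence in the integral against $\nu$.

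\textbf{Near part.} Positivity of $\nu$ gives
\begin{equation*}
\bigl|\mathcal S_\nu(f\mathbf 1_{(0,\delta]})(\tau)\bigr|\le\varepsilon\,\Theta_a(\tau),
\qquad
\bigl|\mathcal S_\nu(f'\mathbf 1_{(0,\delta]})(\tau)\bigr|\le\varepsilon|a|\,\Theta_{a-1}(\tau).
\end{equation*}

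\textbf{Interchange of derivative and average.} First we justify $(\mathcal S_\nu f)'=\mathcal S_\nu f'$ for $\tau>0$. On $[\tau_0,\infty)$ with $\tau_0>0$, $f'$ is bounded, so we may differentiate under the integral sign, or equivalently write $f(\tau+\lambda)-f(\tau_0+\lambda)=\int_{\tau_0}^{\tau}f'(s+\lambda)\,\diff s$ and apply Fubini.

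\textbf{Crude bounds.} Since $\Theta_a(\tau)\le\tau^a$ trivially, the near part is at most $\varepsilon\tau^a$. The far part is bounded, hence $o(\tau^a)$ because $a<0$. This gives $\mathcal S_\nu f=o(\tau^a)$. The same argument with $\Theta_{a-1}\le\tau^{a-1}$ gives $(\mathcal S_\nu f)'=o(\tau^{a-1})$.

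\textbf{Refined bounds.} Here \cref{prop:gamma_asymp} supplies the sharper growth of $\Theta_a$ and $\Theta_{a-1}=\Theta_a'/a$.
\begin{itemize}
\item If $0<\xi<1$ and $b<0$: we have $\Theta_a=O(\tau^b)$ and $\Theta_{a-1}=O(\tau^{b-1})$. The near part is then $\le C\varepsilon\tau^b$ and $\le C\varepsilon\tau^{b-1}$, while the far part is bounded, so both estimates follow.
\item If $0<b<1$: we still have $\Theta_{a-1}=O(\tau^{b-1})$, since the shifted exponent $b-1$ is negative and non-integer. This gives $(\mathcal S_\nu f)'=o(\tau^{b-1})$. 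This derivative is integrable at $0$, so $\mathcal S_\nu f(0):=\lim_{\tau\downarrow0}\mathcal S_\nu f(\tau)$ exists, and it coincides with $\int f\,\diff\nu$ by dominated convergence, using $|f(\tau+\lambda)|\lesssim\lambda^a$ near $0$ and $a+\xi>0$. Integrating the derivative bound from $0$ gives $\mathcal S_\nu f(\tau)-\mathcal S_\nu f(0)=o(\tau^b)$.
\item If $\xi\ge1$: we only need $(\mathcal S_\nu f)'=o(\tau^a)$. The near part is bounded by $\varepsilon|a|\,\Theta_{a-1}(\tau)$, and \cref{prop:gamma_asymp} with exponent $a-1$ and $b'=a-1+\xi$ gives the needed bound. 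If $a-1+\xi<0$, which is possible only when $\xi<1-a<3/2$, then $\Theta_{a-1}=O(\tau^{a-1+\xi})=O(\tau^{a})$ since $\xi\ge1$. If $a-1+\xi>0$, then $\Theta_{a-1}$ is bounded, hence $O(\tau^a)$. The resonant case $a-1+\xi\in\mathbb Z$ reduces to $a+\xi\in\mathbb Z$, which is excluded. Integrating yields the value expansion.
\end{itemize}

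\textbf{Iterates.} The statement for $\mathcal S_\nu^m$ follows verbatim. We use the explicit Gamma$(m\xi,\beta)$ representation of $\mathcal S_\nu^m$ and the iterated part of \cref{prop:gamma_asymp}, replacing $\xi$ by $m\xi$.

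\textbf{Expected obstacle.} The main difficulty is the case bookkeeping. We must make sure that each comparison power $\Theta_{a-1}$ falls into a nonresonant regime of \cref{prop:gamma_asymp}. We must also correctly identify $\mathcal S_\nu f(0)$ as the limit, with uniform-in-$\varepsilon$ constants coming from the asymptotics of $\Theta$.
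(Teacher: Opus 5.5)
Your proposal is correct and follows essentially the paper's argument. You split the $\nu$-average into a near part dominated by $\varepsilon\,\Theta_\rho(\tau)$ (with $\rho=a$ for $f$ and $\rho=a-1$ for $f'$) and a bounded far part, read off the rates from \cref{prop:gamma_asymp}, and integrate the derivative bound from $0$ to get the value expansions; the only cosmetic differences are that you cut at $\tau+\lambda\le\delta$ rather than at $\lambda\le\delta$, and that you obtain the first assertion from the trivial bound $\Theta_a(\tau)\le\tau^a$ without invoking the Gamma asymptotics.
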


\begin{proof}
Since $f'$ is bounded on every $[r,\infty)$, dominated convergence justifies
differentiation under the integral sign for every $\tau>0$. Hence,
\begin{equation*}
    (\mathcal S_\nu f)'=\mathcal S_\nu f'
    \qquad\text{on }\R_+^*.
\end{equation*}
It therefore remains to establish a common estimate for any
$h\in\mathcal C(\R_+^*)$, bounded on every $[r,\infty)$, such that
\begin{equation*}
    h(\tau)=o(\tau^\rho),
    \qquad
    \rho<0,
    \qquad
    \rho+\xi\notin\mathbb Z.
\end{equation*}
Given $\varepsilon>0$, choose $\delta>0$ such that
$|h(x)|\leq\varepsilon x^\rho$ for $0<x<2\delta$, and, for
$0<\tau<\delta$, split at $\lambda=\delta$:
\begin{equation*}
    \mathcal S_\nu h(\tau)
    =
    \underbrace{\int_0^\delta h(\tau+\lambda)\,\nu(\diff\lambda)}_{J_0(\tau)}
    +
    \underbrace{\int_\delta^\infty h(\tau+\lambda)\,\nu(\diff\lambda)}
        _{J_\infty(\tau)}.
\end{equation*}
Then
\begin{equation*}
    |J_0(\tau)|\leq\varepsilon\Theta_\rho(\tau),
\end{equation*}
whereas $J_\infty(\tau)=O(1)$ by the boundedness of $h$ on
$[\delta,\infty)$. Since $\rho+\xi\notin\mathbb Z$,
\cref{prop:gamma_asymp} and then $\varepsilon\downarrow0$ yield
\begin{equation*}
    \mathcal S_\nu h(\tau)
    =
    \begin{cases}
        o(\tau^{\rho+\xi}), & \rho+\xi<0,\\
        O(1), & \rho+\xi>0.
    \end{cases}
\end{equation*}
Since $\rho<0$, both scenarios imply
$\mathcal S_\nu h(\tau)=o(\tau^\rho)$. If $\rho+\xi>0$, then
$(\tau+\lambda)^\rho\leq\lambda^\rho$ and
$\lambda^{\rho+\xi-1}$ is integrable at zero. Dominated convergence
therefore gives the finite endpoint value and
$\mathcal S_\nu h(\tau)\to\mathcal S_\nu h(0)$.\\

We now apply the common estimate to $h=f$ with $\rho=a$ and to $h=f'$ with
$\rho=a-1$.
\begin{itemize}
    \item In every case, it gives
    $\mathcal S_\nu f(\tau)=o(\tau^a)$ and
    $\mathcal S_\nu f'(\tau)=o(\tau^{a-1})$.

    \item If $0<\xi<1$ and $b:=a+\xi$, then
    $(a-1)+\xi=b-1<0$ yields
    $(\mathcal S_\nu f)'(\tau)=o(\tau^{b-1})$. If $b<0$, then
    $\mathcal S_\nu f(\tau)=o(\tau^b)$, while if $b>0$,
    $\mathcal S_\nu f(0)<\infty$ and
    \begin{equation*}
        \mathcal S_\nu f(\tau)-\mathcal S_\nu f(0)
        =\int_0^\tau(\mathcal S_\nu f)'(u)\,\diff u
        =o(\tau^b).
    \end{equation*}

    \item If $\xi\geq1$, \cref{prop:gamma_asymp} gives
    $\Theta_{a-1}(\tau)=O(\tau^a)$, so that one has
    $(\mathcal S_\nu f)'(\tau)=o(\tau^a)$. Since $a+\xi>0$,
    $\mathcal S_\nu f(0)<\infty$, and, since $a>-1$,
    \begin{equation*}
        \mathcal S_\nu f(\tau)-\mathcal S_\nu f(0)
        =\int_0^\tau(\mathcal S_\nu f)'(u)\,\diff u
        =o(\tau^{a+1}).
    \end{equation*}
\end{itemize}
\end{proof}

\subsection{Criterion for Gaussian--Volterra regularity}

\iffalse
\begin{proposition}[Regularity of Gaussian--Volterra processes]
\label{prop:volterra_holder}
Let $b\in(-1/2,\infty)\setminus\{0\}$ and let
$g\in L^2_{\mathrm{loc}}(\R_+)\cap\mathcal C^1(\R_+^*)$. Assume that $g'$
is bounded on every compact subset of $\R_+^*$ and that, for some
$\delta>0$ and $C>0$, one of the following bound forms holds on
$(0,\delta]$:
\begin{itemize}
    \item if $b<0$,
    \begin{equation*}
        |g(\tau)|\leq C\tau^b,
        \qquad
        |g'(\tau)|\leq C\tau^{b-1};
    \end{equation*}
    \item if $b>0$, there is a constant $c_0\in\R$ such that
    \begin{equation*}
        |g(\tau)-c_0|\leq C\tau^b,
        \qquad
        |g'(\tau)|\leq C\tau^{b-1}.
    \end{equation*}
\end{itemize}
Set
\begin{equation*}
    \rho^*
    :=
    \begin{cases}
        b+\dfrac12, & b<0,\\[4pt]
        \dfrac12, & b>0\text{ and }c_0\neq0,\\[4pt]
        \min\!\left(1,b+\dfrac12\right),
            & b>0\text{ and }c_0=0.
    \end{cases}
\end{equation*}
Then the Volterra process
\begin{equation*}
    V_t:=\int_0^t g(t-s)\,\diff B_s
\end{equation*}
has a modification whose paths are locally $\rho$-H\"older for every
$\rho<\rho^*$.
\end{proposition}
\fi

\begin{proposition}[Regularity of Gaussian--Volterra processes]
\label{prop:volterra_holder}
Let $b\in(-1/2,\infty)\setminus\{0\}$ and let
$g\in L^2_{\mathrm{loc}}(\R_+)\cap\mathcal C^1(\R_+^*)$. Assume that $g'$
is bounded on every compact subset of $\R_+^*$ and that, for some
$\delta>0$ and $C>0$,
\begin{equation*}
    |g'(\tau)|\leq C\tau^{b-1},
    \qquad 0<\tau\leq\delta.
\end{equation*}
When $b>0$, this bound ensures that
$\lim_{\tau\downarrow0}g(\tau)$ exists.
Set
\begin{equation*}
    \rho^*
    :=
    \begin{cases}
        b+\dfrac12, & b<0,\\[4pt]
        \dfrac12,
            & b>0\text{ and }\lim_{\tau\downarrow0}g(\tau)\neq0,
            \\[4pt]
        \min\!\left(1,b+\dfrac12\right),
            & b>0\text{ and }\lim_{\tau\downarrow0}g(\tau)=0.
    \end{cases}
\end{equation*}
Then the Volterra process
\begin{equation*}
    V_t:=\int_0^t g(t-s)\,\diff B_s
\end{equation*}
has a modification whose paths are locally $\rho$-H\"older for every
$\rho<\rho^*$. Moreover, $V$ is almost surely not locally $\rho$-H\"older
for any $\rho>\rho^*$ provided one of the
following conditions holds:
\begin{enumerate}[label=(\roman*)]
    \item $b>0$ and $\lim_{\tau\downarrow0}g(\tau)\neq0$;
    \item there is a constant $C'>0$ such that
    \begin{equation}
    \label{eq:volterra_kernel_lower_bound}
        |g(\tau)|\geq C'\tau^b,
        \qquad 0<\tau\leq\delta.
\end{equation}
\end{enumerate}
\end{proposition}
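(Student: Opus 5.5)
The plan is to pass through the canonical variogram $\E[|V_{t+h}-V_t|^2]$ and then invoke Kolmogorov--Chentsov for the upper bound. For the lower bound I will use a Gaussian law-of-the-iterated-logarithm type argument based on a lower variance bound for increments.

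\textbf{Increment decomposition.} For $0\le t\le T$ and $0<h\le1$, Itô isometry gives
\begin{equation*}
\E[|V_{t+h}-V_t|^2]
=\int_0^h g(u)^2\,\diff u
+\int_0^t\bigl(g(u+h)-g(u)\bigr)^2\,\diff u
=:A(h)+B_t(h).
\end{equation*}
The first term is the ``fresh noise'' part and the second is the ``memory'' part.

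\textbf{The fresh-noise term $A(h)$.} Integrating the bound $|g'|\le C\tau^{b-1}$ controls $g$ on $(0,\delta]$.
\begin{itemize}
\item If $b<0$, then $|g(\tau)|\le C''\tau^b$, since $g(\delta)$ is finite.
\item If $b>0$, then $g(\tau)=\ell+O(\tau^b)$ with $\ell:=\lim_{\tau\downarrow0}g(\tau)$.
\end{itemize}
Consequently $A(h)\lesssim h^{2b+1}$ when $b<0$ or when $\ell=0$, and $A(h)\lesssim h$ when $\ell\neq0$.

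\textbf{The memory term $B_t(h)$.} I split the integral at $u=h$ and at $u=\delta$.
\begin{itemize}
\item On $(0,h)$, bound $(g(u+h)-g(u))^2\le2g(u+h)^2+2g(u)^2$. This contributes the same order as $A(2h)$.
\item On $(h,\delta)$, use the mean value theorem: $|g(u+h)-g(u)|\le Ch\,u^{b-1}$. This gives $h^2\int_h^\delta u^{2b-2}\,\diff u$, which is $\lesssim h^{2b+1}$ when $b<1/2$ and $\lesssim h^2|\log h|$ or $h^2$ when $b\ge1/2$.
\item On $[\delta,T]$, $g'$ is bounded on compacts, so the contribution is $O(h^2)$.
\end{itemize}

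\textbf{Upper bound.} Collecting the estimates, $\E[|V_{t+h}-V_t|^2]\lesssim h^{2\rho^*}|\log h|$ in every case; the cap at $1$ comes from the $h^2$ terms. Since $V$ is Gaussian, every moment $\E|V_{t+h}-V_t|^{p}$ is bounded by $C_p(h^{2\rho^*}|\log h|)^{p/2}$. Kolmogorov--Chentsov with $p$ large then yields a modification that is locally $\rho$-Hölder for every $\rho<\rho^*$.

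\textbf{Sharpness.} Under (i) or (ii), $A(h)\ge c\,h^{2\rho^*}$ for small $h$. Under (ii) this uses $|g|\ge C'\tau^b$ on $(0,\delta]$, and it applies to the cases with $\rho^*=b+1/2$; when $b\ge1/2$ with $\ell=0$ the claim only concerns exponents above $1$, which is handled separately below. Since $B_t(h)\ge0$, this gives $\operatorname{Var}(V_{t+h}-V_t)\ge c\,h^{2\rho^*}$ uniformly in $t$.

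The main obstacle is turning this lower variance bound into an almost-sure statement. I would work on $[0,1]$ with dyadic points $t_k=k2^{-n}$. The increments share the fresh-noise parts $\int_{t_k}^{t_{k+1}}g(t_{k+1}-s)\,\diff B_s$, which are independent across $k$. Conditioning on $\F_{t_k}$, each increment is Gaussian with conditional variance $A(2^{-n})$, by the Volterra structure. Therefore the probability that all $2^n$ increments are smaller than $2^{-n\rho}$, for some $\rho>\rho^*$, is at most
\begin{equation*}
\bigl(\Prob(|N(0,1)|\le c\,2^{-n(\rho-\rho^*)})\bigr)^{2^n},
\end{equation*}
using the conditional Anderson inequality: a centred Gaussian shifted by a predictable mean is least likely to be small when the shift is zero. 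This probability is summable in $n$, so Borel--Cantelli shows that $V$ is not $\rho$-Hölder on any interval almost surely, after applying the same argument on each rational interval.

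For exponents $\rho>1$, non-Hölder regularity is automatic unless the path is constant, and the path is not constant since the variance is nonzero. This covers the cap at $1$.
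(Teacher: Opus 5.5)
Your upper bound follows the paper's argument: Itô isometry, splitting the memory term at $h$, near the origin and at $T$, the mean value theorem, then Gaussian moments and Kolmogorov. Bounding the piece on $(0,h)$ crudely by $A(2h)$, instead of subtracting the limit at the origin, is harmless, since $A$ already has the target order in every case. In the mean value step on $(h,\delta)$, note that $u+h$ can leave $(0,\delta]$. Either cut at $\delta/2$ as the paper does, or use that $g'$ is bounded near $\delta$.

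The genuine difference is the sharpness step. The paper uses only the unconditional variance at the origin, $\E[|V_h-V_0|^2]=\int_0^h g^2\ge c\,h^{2\rho^*}$. It then invokes the general necessary condition for Hölder continuity of Gaussian processes in \cite{AzmoodehSottinenViitasaariYazigi2014}, which is a moment bound on the Hölder constant. That route is short but relies on an external result.

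Your argument is self-contained and uses the adapted Volterra structure. Given $\F_{t_k}$, each dyadic increment is Gaussian with variance exactly $\int_0^{2^{-n}}g^2$. The bounded Gaussian density (Anderson's inequality is not even needed) and the tower property then give the small-ball bound $\bigl(c\,2^{-n(\rho-\rho^*)}\bigr)^{2^n}$. This yields a stronger conclusion: almost surely $V$ is $\rho$-Hölder on no interval at all, obtained directly rather than through moments of the Hölder constant.

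One detail needs fixing. ``For all large $n$ some dyadic increment exceeds $2^{-n\rho}$'' does not contradict $\rho$-Hölder continuity with a constant $L>1$. Run Borel--Cantelli at the threshold $2^{-n\rho'}$ for some $\rho'\in(\rho^*,\rho)$, or at $L\,2^{-n\rho}$ for each integer $L$; the same estimate covers either choice.
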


\begin{proof}
When $b>0$, the derivative bound implies that
$c_0:=\lim_{u\downarrow0}g(u)$ exists and that
$|g(u)-c_0|\leq(C/b)u^b$, whereas when $b<0$, integration from $u$ to
$\delta$ gives $|g(u)|\leq C_0u^b$ on $(0,\delta]$ for some $C_0>0$.
Setting $\widetilde g:=g-c_0$ when $b>0$ and $\widetilde g:=g$ when $b<0$,
one has, after enlarging $C$,
\begin{equation*}
    |\widetilde g(\tau)|\leq C\tau^b,
    \qquad
    |\widetilde g'(\tau)|\leq C\tau^{b-1},
    \qquad 0<\tau\leq\delta.
\end{equation*}

For fixed $T>0$, choose $\eta\in(0,\min(\delta/2,T))$ and let
$0<h<\min(1,\eta)$. It\^o's isometry gives
\begin{align*}
    \E\bigl[|V_{t+h}-V_t|^2\bigr]
    ={}&
    \underbrace{\int_0^h g(u)^2\,\diff u}_{I_1(h)}
    \\
    &+
    \underbrace{\int_0^t|g(u+h)-g(u)|^2\,\diff u}_{I_2(t,h)}.
\end{align*}
Since $t\leq T$ and the integrand is nonnegative, split
\begin{equation*}
    \begin{aligned}
        I_2(t,h)
        \leq{}&
        \underbrace{\int_0^h
            |\widetilde g(u+h)-\widetilde g(u)|^2\,\diff u}_{J_0(h)}
        +
        \underbrace{\int_h^\eta
            |\widetilde g(u+h)-\widetilde g(u)|^2\,\diff u}_{J_1(h)}
        \\
        &+
        \underbrace{\int_\eta^T
            |\widetilde g(u+h)-\widetilde g(u)|^2\,\diff u}_{J_\infty(h)}.
    \end{aligned}
\end{equation*}
The function bound gives
\begin{equation*}
    J_0(h)=O\!\left(h^{2b+1}\right).
\end{equation*}
Since $u+h\leq2u$ on $[h,\eta]$, the mean value theorem gives
\begin{equation*}
    J_1(h)
    =O\!\left(h^2\int_h^\eta u^{2b-2}\,\diff u\right)
    =
    \begin{cases}
        O\!\left(h^{2b+1}\right), & b<1/2,\\
        O\!\left(h^2\log(1/h)\right), & b=1/2,\\
        O\!\left(h^2\right), & b>1/2.
    \end{cases}
\end{equation*}
Finally, set $M_T:=\sup_{[\eta,T+1]}|g'|<\infty$. The mean value theorem gives
$J_\infty(h)\leq TM_T^2h^2$. There then exists a constant $C_T$ such that
the three bounds combine into
\begin{equation*}
    I_2(t,h)
    \leq
    \begin{cases}
        C_T h^{2b+1}, & b<1/2,\\
        C_T h^2\log(1/h), & b=1/2,\\
        C_T h^2, & b>1/2.
    \end{cases}
\end{equation*}

The first term satisfies
\begin{equation*}
    I_1(h)
    \leq
    \begin{cases}
        C h^{2b+1}, & b<0,\\
        C h, & b>0\text{ and }c_0\neq0,\\
        C h^{2b+1}, & b>0\text{ and }c_0=0.
    \end{cases}
\end{equation*}
Combining these estimates shows that, for every
$\bar\rho<\rho^*$, there is a constant $C_{T,\bar\rho}$ such that
\begin{equation*}
    \E\bigl[|V_{t+h}-V_t|^2\bigr]
    \leq C_{T,\bar\rho}h^{2\bar\rho},
    \qquad t\in[0,T].
\end{equation*}
At $b=1/2$, use
$h^2\log(1/h)=O(h^{2\bar\rho})$ for every $\bar\rho<1$.
Since $V$ is centred Gaussian, the preceding increment bound and
Kolmogorov's continuity criterion yield a modification whose paths are
locally $\rho$-H\"older for every $\rho<\rho^*$.\\

It remains to prove that $V$ is almost surely not locally $\rho$-H\"older for
any $\rho>\rho^*$ under (i) or (ii). If
$b>0$ and $c_0\neq0$, then
\begin{equation*}
    \E[|V_h|^2]
    =\int_0^h g(u)^2\,\diff u
    \sim c_0^2h.
\end{equation*}
Otherwise, if \eqref{eq:volterra_kernel_lower_bound} holds and $b<1/2$, then
\begin{equation*}
    \E[|V_h|^2]
    \geq \frac{(C')^2}{2b+1}h^{2b+1}.
\end{equation*}
In this case, $\rho^*=b+1/2$.
Thus, in either case, for all sufficiently small $h$,
\begin{equation}
\label{eq:volterra_variance_lower_sharpness}
    \E[|V_h-V_0|^2]\geq ch^{2\rho^*},
    \qquad \rho^*<1.
\end{equation}
Together with \eqref{eq:volterra_variance_lower_sharpness}, the necessary
condition for Gaussian H\"older continuity in
\cite[Theorem~1]{AzmoodehSottinenViitasaariYazigi2014} implies that, almost
surely, $V$ is not locally $\rho$-H\"older for any $\rho>\rho^*$.\\

Finally, suppose that \eqref{eq:volterra_kernel_lower_bound} holds, $c_0=0$,
and $b\geq1/2$. Then
$\rho^*=1$, and the preceding argument gives local $\rho$-H\"older continuity
for every $\rho<1$. Since a function that is H\"older of order greater than one
on an interval is constant, the almost-sure nonconstancy of $V$ rules out every
$\rho>1$.
\end{proof}

\section{Proofs of Section~\ref{sec:scaling_limit}}
\label{app:scaling_proofs}

\begin{lemma}[Decay of Abel averages]
\label{lem:scaling_abel_tail}
For $0\le\alpha<1$, define
\begin{equation*}
\mathcal A_\alpha h
:=
(1-\alpha)\sum_{n\ge1}\alpha^{n-1}\mathcal S_\nu^n h.
\end{equation*}
For every $p>0$, the operator
$\mathcal A_\alpha:\mathcal B_p\to\mathcal B_0$ satisfies
\begin{equation}
\label{eq:abel_tail_bound}
\norm{\mathcal A_\alpha}_{\mathcal B_p\to\mathcal B_0}
\le
\eta_p(\alpha),
\end{equation}
where
\begin{equation*}
\eta_p(\alpha)
:=
\E\!\left[
\left(1+\Lambda_1+\cdots+\Lambda_{N_\alpha}\right)^{-p}
\right],
\qquad
\Prob(N_\alpha=n)=(1-\alpha)\alpha^{n-1},\qquad n\ge1,
\end{equation*}
and the variables $\Lambda_i$ are independent with law $\nu$ and independent
of $N_\alpha$. Moreover,
$\eta_p(\alpha)\to0$ as $\alpha\uparrow1$.
\end{lemma}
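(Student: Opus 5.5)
The plan is to use the probabilistic representation of the iterates that already appeared in the proof of \cref{lem:app_shift_tail}. For i.i.d.\ $\Lambda_i\sim\nu$ and $S_n:=\Lambda_1+\cdots+\Lambda_n$, each $h\in\mathcal B_p$ is bounded, so Fubini gives
\begin{equation*}
\mathcal S_\nu^n h(t)=\E\bigl[h(t+S_n)\bigr],
\qquad t>0.
\end{equation*}
Summing against the geometric weights $(1-\alpha)\alpha^{n-1}$ then yields
\begin{equation*}
\mathcal A_\alpha h(t)=\E\bigl[h(t+S_{N_\alpha})\bigr],
\end{equation*}
where $N_\alpha$ is geometric and independent of the $\Lambda_i$. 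The series converges absolutely in supremum norm, since each term is bounded by $\norm{h}_{\mathcal B_0}\leq\norm{h}_p$ and the weights sum to one. The resulting function is Borel and bounded.

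For the norm bound, use $|h(x)|\le\norm{h}_p(1+x)^{-p}$ and monotonicity in $t$:
\begin{equation*}
|\mathcal A_\alpha h(t)|
\le\norm{h}_p\,\E\bigl[(1+t+S_{N_\alpha})^{-p}\bigr]
\le\norm{h}_p\,\eta_p(\alpha).
\end{equation*}
Taking the supremum over $t>0$ gives \eqref{eq:abel_tail_bound}. Membership in $\mathcal B_0$, namely $\mathcal A_\alpha h(t)\to0$ as $t\to\infty$, follows from the middle expression: the integrand is bounded by $1$ and tends to $0$ pointwise, so dominated convergence applies. Equivalently, one can observe that $\mathcal A_\alpha$ maps $\mathcal B_p\subset\mathcal B_0$ into $\mathcal B_0$ by \cref{lem:app_shift_tail}, since $\mathcal B_0$ is closed.

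The remaining point, and the only nontrivial one, is $\eta_p(\alpha)\to0$. Here the plan is to show $S_{N_\alpha}\to\infty$ in probability as $\alpha\uparrow1$. Fix $R>0$ and split on the event $\{S_{N_\alpha}<R\}$:
\begin{equation*}
\eta_p(\alpha)\le(1+R)^{-p}+\Prob(S_{N_\alpha}<R).
\end{equation*}
Conditioning on $N_\alpha$ gives
\begin{equation*}
\Prob(S_{N_\alpha}<R)=\sum_{n\ge1}(1-\alpha)\alpha^{n-1}\Prob(S_n<R).
\end{equation*}
This is the Abel mean of the sequence $q_n:=\Prob(S_n<R)$. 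Since $\nu$ is supported on $(0,\infty)$, $S_n\to\infty$ almost surely, so $q_n\to0$, and Abel means of a null sequence tend to zero. The last claim needs only a short argument: split at a fixed $n_1$ with $q_n<\varepsilon$ for $n>n_1$, and note that the first $n_1$ terms carry total weight at most $n_1(1-\alpha)\to0$. Letting first $\alpha\uparrow1$ and then $R\to\infty$ concludes.
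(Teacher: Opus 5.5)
Your proposal is correct and follows the paper's proof. It uses the same representation $\mathcal A_\alpha h(t)=\E[h(t+S_{N_\alpha})]$, the same weighted bound $|h(x)|\le\norm{h}_p(1+x)^{-p}$, and the same reduction of $\eta_p(\alpha)\to0$ to $S_{N_\alpha}\to\infty$ in probability. Your Abel-mean argument for $\Prob(S_{N_\alpha}<R)\to0$, and the dominated-convergence check of membership in $\mathcal B_0$, simply spell out steps the paper states briefly.
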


\begin{proof}
Fix $h\in\mathcal B_p$ and let
$S_n:=\Lambda_1+\cdots+\Lambda_n$. Since
\begin{equation*}
\mathcal S_\nu^n h(t)=\E[h(t+S_n)],
\end{equation*}
and $\mathcal S_\nu$ is a contraction on $\mathcal B_0$, the Abel series
converges in $\mathcal B_0$ for every fixed $\alpha<1$. Conditioning on
$N_\alpha$ gives
\begin{equation*}
\mathcal A_\alpha h(t)=\E[h(t+S_{N_\alpha})].
\end{equation*}
Consequently,
\begin{align*}
|\mathcal A_\alpha h(t)|
&\le
\norm{h}_p
\E\!\left[(1+t+S_{N_\alpha})^{-p}\right]
\\
&\le
\eta_p(\alpha)\norm{h}_p,
\end{align*}
which proves \eqref{eq:abel_tail_bound}. Also,
$N_\alpha\to\infty$ in probability as $\alpha\uparrow1$. Since
$\nu(\mathbb R_+^*)=1$, one has $S_n\to\infty$ almost surely, and hence
$S_{N_\alpha}\to\infty$ in probability. The bounded random variables
$(1+S_{N_\alpha})^{-p}$ therefore converge to zero in $L^1$, proving
$\eta_p(\alpha)\to0$.

\end{proof}

\begin{proposition}[Bounds for the impact operator]
\label{prop:scaling_impact_bounds}
Let $p>0$ and suppose
\begin{equation*}
    \mathcal S_\nu^{n_0}K\in\mathcal B_p,
    \qquad
    \int_0^\infty(1+t)^p|G'(t)|\,\diff t<\infty,
    \qquad
    \norm{G'}_{L^1(\mathbb R_+)}<1,
\end{equation*}
and
\begin{equation*}
    \mathcal H(\mathcal S_\nu^jK)\in\mathcal B_p,
    \qquad 0\le j<n_0.
\end{equation*}
Then $\mathcal H$ is bounded from $\mathcal B_p$ to
$\mathcal B_p$, from $\mathcal Y_K^p$ to $\mathcal B_p$, and from
$\mathcal Y_K^0$ to $\mathcal B_0$. Moreover,
$\mathcal H:\mathcal B_0\to\mathcal B_0$ is a strict
contraction, with
\begin{equation*}
    \norm{\mathcal H}_{\mathcal B_0\to\mathcal B_0}
    \leq\norm{G'}_{L^1(\mathbb R_+)}<1.
\end{equation*}
\end{proposition}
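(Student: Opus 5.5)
The proof consists of four norm estimates, taken in the order the statement lists them.

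\emph{$\mathcal H$ on $\mathcal B_p$.} For $f\in\mathcal B_p$ and $t>0$, I would use the elementary inequality $(1+t)^p\le(1+t-s)^p(1+s)^p$, valid for $0\le s\le t$. It gives
\begin{equation*}
(1+t)^p|\mathcal Hf(t)|
\le\int_0^t(1+t-s)^p|G'(t-s)|\,(1+s)^p|f(s)|\,\diff s
\le\Bigl(\int_0^\infty(1+u)^p|G'(u)|\,\diff u\Bigr)\norm{f}_p .
\end{equation*}
This is the weighted-convolution estimate already quoted in the tail-conditions remark, and it shows $\mathcal H:\mathcal B_p\to\mathcal B_p$ is bounded.

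\emph{$\mathcal H$ from $\mathcal Y_K^p$ to $\mathcal B_p$.} I would write $f\in\mathcal Y_K^p$ through its direct-sum decomposition $f=b+\sum_{j<n_0}a_j\mathcal S_\nu^jK$ with $b\in\mathcal B_p$. Linearity then gives
\begin{equation*}
\norm{\mathcal Hf}_p\le C_p\norm{b}_p+\max_{j<n_0}\norm{\mathcal H\mathcal S_\nu^jK}_p\sum_j|a_j|.
\end{equation*}
Here $C_p$ is the weighted $L^1$ norm of $G'$, and the maximum is finite by the hypothesis $\mathcal H(\mathcal S_\nu^jK)\in\mathcal B_p$. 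The right-hand side is bounded by a constant times the direct-sum norm of $f$.

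\emph{$\mathcal H$ from $\mathcal Y_K^0$ to $\mathcal B_0$.} This is exactly \cref{lem:app_H_extension}, so it can be cited. One point needs a check: that lemma uses the same $n_0$. This holds because $\mathcal S_\nu^{n_0}K\in\mathcal B_p\subset\mathcal B_0$ and $n_0$ is defined as the minimal index with $\mathcal S_\nu^{n_0}K\in\mathcal B_0$. I would also note in passing that $\mathcal B_p\hookrightarrow\mathcal B_0$ continuously, since $(1+t)^{-p}\to0$.

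\emph{Strict contraction on $\mathcal B_0$.} This is the bound \eqref{eq:app_H_B0_bound}, namely $|\mathcal Hf(t)|\le\norm{G'}_1\norm{f}_{\mathcal B_0}$. Two further facts are needed. First, $\mathcal Hf$ is Borel, which follows from Fubini. Second, $\mathcal Hf$ vanishes at infinity. To see this, split the convolution at $s=t/2$. On the part with $s<t/2$, the contribution is bounded by $\norm{f}_\infty\int_{t/2}^\infty|G'|$. On the part with $s\ge t/2$, it is bounded by $\norm{G'}_1\sup_{s\ge t/2}|f(s)|$. Both bounds tend to zero as $t\to\infty$. Combined with the hypothesis $\norm{G'}_1<1$, this gives the strict contraction.

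None of the steps is a genuine obstacle. The only delicate point is to use the direct-sum structure, and the finite-dimensionality of the span, correctly when passing from $\mathcal B_p$ to $\mathcal Y_K^p$. Since the span is finite-dimensional, boundedness on each basis vector suffices.
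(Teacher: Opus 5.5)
Your proposal is correct and follows essentially the same route as the paper: the weighted-convolution estimate via $(1+t)\le(1+s)(1+t-s)$, the $L^1$ convolution bound on $\mathcal B_0$, and the finite-dimensional direct-sum argument on $\mathcal Y_K^p$. The differences are cosmetic. For $\mathcal Y_K^0\to\mathcal B_0$ the paper reuses the hypothesis $\mathcal H(\mathcal S_\nu^jK)\in\mathcal B_p\subset\mathcal B_0$, whereas you cite \cref{lem:app_H_extension}, which holds under the standing assumptions alone; you also supply the vanishing-at-infinity check that the paper only asserts.
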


\begin{proof}
For $b\in\mathcal B_p$, the inequality
$1+t\le(1+s)(1+t-s)$, valid for $0\le s\le t$, gives
\begin{equation*}
\norm{\mathcal Hb}_p
\le
\left(
\int_0^\infty(1+s)^p|G'(s)|\,\diff s
\right)\norm{b}_p.
\end{equation*}
On $\mathcal B_0$, convolution with the $L^1$-kernel $G'$
preserves vanishing at infinity and satisfies
\begin{equation*}
\norm{\mathcal Hb}_{\mathcal B_0}
\le
\norm{G'}_{L^1(\mathbb R_+)}
\norm{b}_{\mathcal B_0}.
\end{equation*}
This proves the assertions on $\mathcal B_p$ and $\mathcal B_0$. The
direct-sum decomposition and the assumed membership of
$\mathcal H(\mathcal S_\nu^jK)$ in $\mathcal B_p$ give
\begin{equation*}
\norm{\mathcal Hf}_p
\leq C\norm{f}_{\mathcal Y_K^p}.
\end{equation*}
Thus $\mathcal H:\mathcal Y_K^p\to\mathcal B_p$ is bounded.
The same finite-dimensional argument, using
$\mathcal B_p\subset\mathcal B_0$, proves boundedness from
$\mathcal Y_K^0$ to $\mathcal B_0$.
\end{proof}

\begin{proof}[Proof of \Cref{prop:scaling_leading}]
Set $q:=\norm{G'}_{L^1(\mathbb R_+)}<1$.
By \cref{prop:scaling_impact_bounds}, $\mathcal H$ maps
$\mathcal Y_K^0$ boundedly into $\mathcal B_0$ and has
$\mathcal B_0$-operator norm at most $q$. Since
$\mathcal B_0\hookrightarrow\mathcal Y_K^0$ is isometric, for every $m\geq1$,
\begin{equation*}
\norm{\mathcal H^{\,m}}_{\mathcal Y_K^0\to\mathcal Y_K^0}
\leq
\norm{\mathcal H}_{\mathcal Y_K^0\to\mathcal B_0}
q^{m-1}.
\end{equation*}
The Neumann series
\begin{equation*}
R_\alpha
:=
\mathrm{Id}+\sum_{m\ge1}(-\alpha)^m\mathcal H^{\,m}
\end{equation*}
therefore converges uniformly in operator norm for $\alpha\in[0,1]$ and
defines $R_\alpha=(\mathrm{Id}+\alpha\mathcal H)^{-1}$ on
$\mathcal Y_K^0$, with
\begin{equation*}
\sup_{\alpha\in[0,1]}
\norm{R_\alpha}_{\mathcal Y_K^0\to\mathcal Y_K^0}
\leq
1+
\frac{
\norm{\mathcal H}_{\mathcal Y_K^0\to\mathcal B_0}
}{1-q}.
\end{equation*}

Put $E_\alpha K:=Q_\alpha K+K$. Since
$\mathcal H\Phi_\alpha\in\mathcal B_p\subset\mathcal B_0$,
substituting the Abel identity
$Q_\alpha h=-h+\mathcal A_\alpha h$ into
\eqref{eq:v2_scaling_Q_resolvent} and applying $R_\alpha$ gives
\begin{equation*}
\Phi_\alpha
=
R_\alpha
\left[
-K+E_\alpha K
+\alpha\mathcal A_\alpha\mathcal H\Phi_\alpha
\right].
\end{equation*}
We now pass to the limit in this representation. The resolvent identity
\begin{equation*}
R_\alpha-R_1
=(1-\alpha)R_\alpha\mathcal HR_1
\end{equation*}
and the preceding uniform bound show that $R_\alpha\to R_1$ in operator norm.
Since $K$ belongs to $\mathcal Y_K^0$,
\cref{lem:app_Qc_extension} gives
\begin{equation*}
E_\alpha K\longrightarrow0
\qquad\text{in }\mathcal Y_K^0.
\end{equation*}
Moreover, \eqref{eq:v2_weighted_phi_bound} and
\cref{prop:scaling_impact_bounds} give
\begin{equation*}
\sup_{\alpha\in[\alpha_0,1)}
\norm{\mathcal H\Phi_\alpha}_p<\infty.
\end{equation*}
Hence \cref{lem:scaling_abel_tail} yields
\begin{equation*}
\mathcal A_\alpha\mathcal H\Phi_\alpha
\longrightarrow0
\qquad\text{in }\mathcal B_0.
\end{equation*}
The convergence also holds in $\mathcal Y_K^0$ by the isometric embedding
$\mathcal B_0\hookrightarrow\mathcal Y_K^0$. Therefore,
\begin{equation*}
\Phi_\alpha
\longrightarrow
-R_1K
=
-(\mathrm{Id}+\mathcal H)^{-1}K
\end{equation*}
in $\mathcal Y_K^0$. This proves \eqref{eq:Phi_alpha_limit} and
\eqref{eq:v2_leading_volterra}, and uniqueness follows from the
operator-norm Neumann construction of $R_1$.
Finally, continuity of $\mathrm{Id}+\mathcal H$ and
\eqref{eq:v2_observed_kernel_cancellation_curve} give
\begin{equation*}
\mathcal K_\alpha
=K+(\mathrm{Id}+\mathcal H)\Phi_\alpha
\longrightarrow
K+(\mathrm{Id}+\mathcal H)\Phi
=0
\qquad\text{in }\mathcal Y_K^0,
\end{equation*}
which proves \eqref{eq:K_alpha_unscaled_limit}.
\end{proof}

\begin{proof}[Proof of \Cref{prop:scaling_ou_uniform_tails}]
Choose $0<\eta<\min\{\gamma,\kappa(1-\Theta)\}$ and set
\begin{align*}
\mathcal E_\eta
&:=
\left\{
f\in\mathcal B_0:
\norm{f}_{\mathcal E_\eta}:=\sup_{t>0}e^{\eta t}|f(t)|<\infty
\right\},
\\
\mathcal L_\nu(x)
&:=\int_{(0,\infty)}e^{-x\lambda}\,\nu(\diff\lambda),
\qquad x>0.
\end{align*}
Since $\mathcal L_\nu(\eta)<1$ and
$\norm{\mathcal S_\nu f}_{\mathcal E_\eta}\leq
\mathcal L_\nu(\eta)\norm{f}_{\mathcal E_\eta}$, the definition of
$\mathcal A_\alpha$
gives
\begin{equation*}
\norm{\mathcal A_\alpha}_{\mathcal E_\eta\to\mathcal E_\eta}
\leq
(1-\alpha)\sum_{n\geq1}\alpha^{n-1}\mathcal L_\nu(\eta)^n
=\frac{(1-\alpha)\mathcal L_\nu(\eta)}
{1-\alpha\mathcal L_\nu(\eta)}
\longrightarrow0.
\end{equation*}
It follows that
\begin{equation*}
\norm{Q_\alpha}_{\mathcal E_\eta\to\mathcal E_\eta}
\leq
1+\frac{(1-\alpha)\mathcal L_\nu(\eta)}
{1-\alpha\mathcal L_\nu(\eta)},
\qquad
\norm{\mathcal H}_{\mathcal E_\eta\to\mathcal E_\eta}
\leq\frac{\Theta\kappa}{\kappa-\eta}<1.
\end{equation*}
Thus, we may choose $\alpha_0<1$ and $r<1$ such that
\begin{equation*}
\sup_{\alpha\in[\alpha_0,1)}
\norm{\alpha Q_\alpha\mathcal H}_{\mathcal E_\eta\to\mathcal E_\eta}
\leq r.
\end{equation*}

As $\mathcal S_\nu K=\mathcal L_\nu(\gamma)K$, one has
\begin{equation*}
Q_\alpha K
=-
\frac{1-\mathcal L_\nu(\gamma)}{1-\alpha\mathcal L_\nu(\gamma)}K,
\qquad
\norm{Q_\alpha K}_{\mathcal E_\eta}\leq1.
\end{equation*}
Hence \eqref{eq:v2_scaling_Q_resolvent} gives
\begin{align*}
\Phi_\alpha
&=\sum_{m\geq0}
(\alpha Q_\alpha\mathcal H)^mQ_\alpha K,
\\
\norm{\Phi_\alpha}_{\mathcal E_\eta}
&\leq\sum_{m\geq0}r^m
=\frac{1}{1-r},
\qquad \alpha\in[\alpha_0,1).
\end{align*}
By \cref{rem:contraction_permanent_impact}, the assumption
$\Theta\leq1/2$ identifies these solutions with the selected equilibrium
kernels. Finally, $n_0=0$ and
$\norm{f}_p\leq\sup_{t>0}(1+t)^pe^{-\eta t}
\norm{f}_{\mathcal E_\eta}$, so the exponential bound implies
\eqref{eq:v2_weighted_phi_bound} for every $p>0$.
\end{proof}

\section{Proofs of Section~\ref{sec:tail_regularity}}
\label{app:regularity_proofs}

\subsection{Kernel asymptotics}

\begin{proof}[Proof of \Cref{prop:v2_phi_asymp}]
Let $\phi_\alpha$ be Gamma-regular. For every
$1\leq m<n_0$,
\cref{prop:gamma_asymp,prop:gamma_smoothing_lower_order} give
\begin{equation*}
    \mathcal S_\nu^mK(\tau)=o(\tau^a),
    \qquad
    (\mathcal S_\nu^mK)'(\tau)=o(\tau^{a-1}).
\end{equation*}
Combining these estimates,
\eqref{eq:v2_rough_kernel_specification}, and the defining estimates on
$r_\alpha$ in \eqref{eq:v2_phi_expansion} gives
\begin{align*}
    \phi_\alpha(\tau)
    &=
    -\frac{\tau^a}{1+\alpha G(0)}+o(\tau^a),
    \\
    \phi_\alpha'(\tau)
    &=
    -\frac{a\tau^{a-1}}{1+\alpha G(0)}
    +o(\tau^{a-1}),
\end{align*}
which proves both kernel claims. The nonzero leading coefficient gives
$|\phi_\alpha(\tau)|\geq C\tau^a$ for all sufficiently small $\tau$.
Applying \cref{prop:volterra_holder} with $b=a<0$ gives the asserted local
regularity of $\Pi^\alpha$.
\end{proof}

The proof of \cref{prop:v2_phi_asymp} and the global assumptions imply that
$\phi_\alpha,\phi_\alpha'$ are bounded on $[r,\infty)$ for every $r>0$.
Near the origin,
\begin{align}
\label{eq:app_d_Hphi_expansion}
    \mathcal H\phi_\alpha(\tau)
    &=\gamma_\alpha\tau^{a+1}+o(\tau^{a+1}),
    \\
    (\mathcal H\phi_\alpha)'(\tau)
    &=(a+1)\gamma_\alpha\tau^a+o(\tau^a),
\end{align}
where
\begin{equation*}
    \gamma_\alpha
    :=-\frac{G'(0)}{(1+\alpha G(0))(a+1)}.
\end{equation*}
These relations follow from the expansion of $\phi_\alpha$, the identity
$\mathcal H\phi_\alpha=G'\ast\phi_\alpha$, and
\begin{equation*}
    (\mathcal H\phi_\alpha)'(\tau)
    =G'(0)\phi_\alpha(\tau)
     +(G''\ast\phi_\alpha)(\tau).
\end{equation*}
In addition, for every $r>0$, the local integrability of $\phi_\alpha$, its
boundedness on $[r,\infty)$, and the standing assumptions imply that
$\mathcal H\phi_\alpha$ and $(\mathcal H\phi_\alpha)'$ are bounded on
$[r,\infty)$, as required in
\cref{prop:gamma_smoothing_lower_order}.

\iffalse
\begin{proof}
The Gamma-regular representation and the standing assumptions imply that
$\phi_\alpha$ and $\phi_\alpha'$ are bounded on $[r,\infty)$ for every
$r>0$. Since $\phi_\alpha\in L^1_{\mathrm{loc}}$,
\begin{equation*}
    (\mathcal H\phi_\alpha)'(\tau)
    =G'(0)\phi_\alpha(\tau)
     +(G''\ast\phi_\alpha)(\tau).
\end{equation*}
By \cref{prop:v2_phi_asymp} and the boundedness of $G''$,
$(G''\ast\phi_\alpha)(\tau)=O(\tau^{a+1})=o(\tau^a)$. This gives the
derivative expansion, and integration from zero gives the function
expansion. The same identity, the local integrability of $\phi_\alpha$, and
the global assumptions on $G'$ and $G''$ give the asserted bounds on
$[r,\infty)$.

The derivative is therefore bounded by $C\tau^a$ near zero and is bounded on
every $[r,\infty)$. Differentiation under the Gamma integral and
\cref{prop:gamma_asymp} give
\begin{equation*}
    \bigl|(\mathcal S_\nu\mathcal H\phi_\alpha)'(\tau)\bigr|
    \leq C\Theta_a(\tau)+O(1)
    =o(\tau^a).
\end{equation*}
The local expansion and global boundedness of
$\mathcal H\phi_\alpha$ make $c_H$ finite. Integrating the derivative
estimate gives the stated expansion.
\end{proof}
\fi

\begin{proof}[Proof of \Cref{prop:v2_regularity_alpha}]
Let
\begin{equation*}
    A_\alpha
    :=\frac{1-(1-\alpha)G(0)}{1+\alpha G(0)}\neq0.
\end{equation*}
The identity \eqref{eq:amplitude_identity} reads
\begin{equation*}
    \mathcal K_\alpha=A_\alpha K+\mathcal R_\alpha.
\end{equation*}
We first show that
\begin{equation*}
    \mathcal R_\alpha(\tau)=o(\tau^a),
    \qquad
    \mathcal R_\alpha'(\tau)=o(\tau^{a-1}).
\end{equation*}
Set
\begin{equation*}
    q_\alpha
    :=\phi_\alpha+\frac{\tau^a}{1+\alpha G(0)}.
\end{equation*}
By \cref{prop:v2_phi_asymp} and the Gamma-regular bounds,
$q_\alpha,q_\alpha'$ are bounded on every $[r,\infty)$ and
\begin{equation*}
    q_\alpha=o(\tau^a),
    \qquad
    q_\alpha'=o(\tau^{a-1}).
\end{equation*}
Applying \cref{prop:gamma_asymp,prop:gamma_smoothing_lower_order} to
$K=\tau^a+K_s$ and
$\phi_\alpha=-\tau^a/(1+\alpha G(0))+q_\alpha$ gives
\begin{align*}
    \mathcal S_\nu K&=o(\tau^a),
    &
    (\mathcal S_\nu K)'&=o(\tau^{a-1}),
    \\
    \mathcal S_\nu\phi_\alpha&=o(\tau^a),
    &
    (\mathcal S_\nu\phi_\alpha)'&=o(\tau^{a-1}).
\end{align*}
Equation \eqref{eq:app_d_Hphi_expansion} gives the sharper bounds
$\mathcal H\phi_\alpha(\tau)=O(\tau^{a+1})$ and
$(\mathcal H\phi_\alpha)'(\tau)=O(\tau^a)$. Moreover,
$\mathcal D_\nu\mathcal H\phi_\alpha
 =\mathcal S_\nu\mathcal H\phi_\alpha-\mathcal H\phi_\alpha$ is bounded near
the origin, while its derivative is $o(\tau^{a-1})$ by
\cref{prop:gamma_asymp,prop:gamma_smoothing_lower_order} applied to
$(\mathcal H\phi_\alpha)'=O(\tau^a)$.
Substitution into \eqref{eq:amplitude_remainder} proves the required lower
order of $\mathcal R_\alpha$.
Using \eqref{eq:v2_rough_kernel_specification}, we obtain
\begin{align*}
    \mathcal K_\alpha(\tau)
    &=A_\alpha\tau^a+o(\tau^a),
    \\
    \mathcal K_\alpha'(\tau)
    &=A_\alpha a\tau^{a-1}+o(\tau^{a-1}).
\end{align*}
Since $A_\alpha\neq0$, the first equivalent gives
$|\mathcal K_\alpha(\tau)|\geq C\tau^a$ for all sufficiently small $\tau$.
Applying \cref{prop:volterra_holder} shows that $X^\alpha$ has local H\"older regularity
$a+1/2=H$.
\end{proof}

\begin{proof}[Proof of \Cref{prop:v2_bottleneck}]
Under the cancellation condition, write
\begin{equation*}
    \phi_\alpha
    =-\frac{K}{1+\alpha G(0)}+\rho_\alpha.
\end{equation*}
The Gamma-regular expansion gives
\begin{equation*}
    \rho_\alpha=o(\tau^a),
    \qquad
    \rho_\alpha'=o(\tau^{a-1}),
\end{equation*}
and $\rho_\alpha,\rho_\alpha'$ are bounded on every $[r,\infty)$, so that
\cref{prop:gamma_smoothing_lower_order} applies.
Substitution into \eqref{eq:amplitude_remainder} gives
\begin{equation*}
    \mathcal K_\alpha
    =
    (1-\alpha)\mathcal S_\nu K
    +\alpha G(0)\mathcal S_\nu\rho_\alpha
    +V_\alpha,
\end{equation*}
where
\begin{equation*}
    V_\alpha
    :=(1-\alpha)\mathcal H\phi_\alpha
      +\alpha\mathcal S_\nu\mathcal H\phi_\alpha.
\end{equation*}
The expansion \eqref{eq:app_d_Hphi_expansion} and the global bounds show that
$c_H:=(\mathcal S_\nu\mathcal H\phi_\alpha)(0)$ is finite. Applying
\cref{prop:gamma_asymp,prop:gamma_smoothing_lower_order} to the derivative
expansion in \eqref{eq:app_d_Hphi_expansion} and integrating gives
\begin{equation*}
    \mathcal S_\nu\mathcal H\phi_\alpha(\tau)
    =c_H+o(\tau^{a+1}),
    \qquad
    (\mathcal S_\nu\mathcal H\phi_\alpha)'(\tau)=o(\tau^a).
\end{equation*}
Consequently,
\begin{align*}
    V_\alpha(\tau)
    &=\alpha c_H+(1-\alpha)\gamma_\alpha\tau^{a+1}
      +o(\tau^{a+1}),
    \\
    V_\alpha'(\tau)
    &=(1-\alpha)(a+1)\gamma_\alpha\tau^a+o(\tau^a).
\end{align*}

We distinguish two horizon regimes.

\begin{itemize}
\item \emph{Case $0<\xi<1$.}
Set
\begin{equation*}
    b:=a+\xi\neq0,
    \qquad
    \kappa_{a,\xi}
    :=\beta^\xi\frac{\Gamma(-a-\xi)}{\Gamma(-a)}
    \neq0.
\end{equation*}
By \cref{prop:gamma_asymp,prop:gamma_smoothing_lower_order},
\begin{equation*}
    \mathcal S_\nu K(\tau)
    =
    \begin{cases}
        \kappa_{a,\xi}\tau^b+o(\tau^b), & b<0,\\
        c_K+\kappa_{a,\xi}\tau^b+o(\tau^b), & b>0,
    \end{cases}
\end{equation*}
and
\begin{equation*}
    (\mathcal S_\nu K)'(\tau)
    =b\kappa_{a,\xi}\tau^{b-1}+o(\tau^{b-1}).
\end{equation*}
\Cref{prop:gamma_smoothing_lower_order} also gives
\begin{equation*}
    \mathcal S_\nu\rho_\alpha
    =
    \begin{cases}
        o(\tau^b), & b<0,\\
        c_\rho+o(\tau^b), & b>0,
    \end{cases}
\end{equation*}
with
\begin{equation*}
    (\mathcal S_\nu\rho_\alpha)'(\tau)=o(\tau^{b-1}).
\end{equation*}
Since $b<a+1$ and $b-1<a$, the nonconstant part of $V_\alpha$ and its
derivative are lower order. When $b<0$, its constant part is also
$o(\tau^b)$. Hence
\begin{equation*}
    \mathcal K_\alpha(\tau)
    =
    \begin{cases}
        C_\alpha\tau^b+o(\tau^b), & b<0,\\
        \ell_\alpha+C_\alpha\tau^b+o(\tau^b), & b>0,
    \end{cases}
\end{equation*}
with
\begin{equation*}
    C_\alpha=(1-\alpha)\kappa_{a,\xi}\neq0,
\end{equation*}
and the differentiated estimates give
\begin{equation*}
    \mathcal K_\alpha'(\tau)
    =bC_\alpha\tau^{b-1}+o(\tau^{b-1}).
\end{equation*}
When $b<0$, or when $b>0$ and $\ell_\alpha=0$, the nonzero coefficient gives
$|\mathcal K_\alpha(\tau)|\geq C\tau^b$ for all sufficiently small $\tau$.
When $b>0$ and $\ell_\alpha\neq0$, the kernel has a nonzero limit at the origin.
The conclusion follows from \cref{prop:volterra_holder}.

\item \emph{Case $\xi\geq1$.}
\Cref{prop:gamma_asymp,prop:gamma_smoothing_lower_order} give
\begin{align*}
    \mathcal S_\nu K(\tau)
    &=c_K+O(\tau^{a+1}),
    &
    (\mathcal S_\nu K)'(\tau)
    &=O(\tau^a),
    \\
    \mathcal S_\nu\rho_\alpha(\tau)
    &=c_\rho+o(\tau^{a+1}),
    &
    (\mathcal S_\nu\rho_\alpha)'(\tau)
    &=o(\tau^a).
\end{align*}
Together with the estimates for $V_\alpha$, this gives
\begin{equation*}
    \mathcal K_\alpha(\tau)=\ell_\alpha+O(\tau^{a+1}),
    \qquad
    \mathcal K_\alpha'(\tau)=O(\tau^a).
\end{equation*}
\Cref{prop:volterra_holder}, applied with $b=a+1$, gives local
H\"older regularity $1/2$ when $\ell_\alpha\neq0$. When
$\ell_\alpha=0$, it yields local $\rho$-H\"older continuity for every
$\rho<1$.
\end{itemize}
\end{proof}

\end{document}